\newcommand{\be}{\begin{equation} \begin{aligned}}
\newcommand{\ee}{\end{aligned} \end{equation}}
\newcommand{\bq}{\begin{quote}}
\newcommand{\eq}{\end{quote}}
\newcommand{\bp}{\begin{parts}}
\newcommand{\ep}{\end{parts}}
\newcommand{\bqp}{\begin{quote}\begin{parts}}
\newcommand{\epq}{\end{parts}\end{quote}}
\newcommand{\real}{\mathbb{R}}
\newcommand{\rnn}{\real^{n\times n}}
\DeclareMathOperator*{\argmin}{argmin}
\newcommand\ma{\mathbf{A}}
\newcommand\mb{\mathbf{B}}
\newcommand\mc{\mathbf{C}}
\newcommand\md{\mathbf{D}}
\newcommand\me{\mathbf{E}}
\newcommand\mh{\mathbf{H}}
\newcommand\mi{\mathbf{I}}
\newcommand\mr{\mathbf{R}}
\newcommand\ms{\mathbf{S}}
\newcommand\mv{\mathbf{V}}
\newcommand\mx{\mathbf{X}}
\newcommand\mw{\mathbf{W}}
\newcommand\mt{\mathbf{T}}
\newcommand\mpp{\mathbf{P}}
\newcommand\mo{\mathbf{O}}
\newcommand\muu{\mathbf{U}}
\newtheorem{theorem}{Theorem}
\newtheorem{proposition}{Proposition}
\theoremstyle{definition}
\newtheorem{definition}{Definition}
\begin{document}

\bibliographystyle{chicago}

\def\spacingset#1{\renewcommand{\baselinestretch}%
{#1}\small\normalsize} \spacingset{1}



  \title{\bf Vertex nomination between graphs via spectral embedding and quadratic programming}
  \author{Runbing Zheng \\
    Department of Statistics, North Carolina State University\\
    Vince Lyzinski\thanks{VL research is sponsored by the Air Force Research Laboratory and DARPA under agreement number FA8750-20-2-1001. The U.S. Government is authorized to reproduce and distribute reprints for Governmental purposes notwithstanding any copyright notation thereon. The views and conclusions contained herein are those of the authors and should not be interpreted as necessarily representing the official policies or endorsements, either expressed or implied, of the Air Force Research Laboratory and DARPA or the U.S. Government.}
 \\
    Department of Mathematics, University of Maryland \\
    Carey E. Priebe \\
    Department of Applied Mathematics and Statistics, Johns Hopkins University \\
    Minh Tang \\
    Department of Statistics, North Carolina State University}
  \maketitle


\begin{abstract} Given a network and a subset of interesting vertices
whose identities are {\em only partially} known, the vertex nomination
problem seeks to rank the remaining vertices in such a way that the
interesting vertices are ranked at the top of the list.  An important
variant of this problem is vertex nomination in the multiple graphs
setting. Given two graphs $G_1, G_2$ with common vertices and a vertex
of interest $x \in G_1$, we wish to rank the vertices of $G_2$ such
that the vertices most similar to $x$ are ranked at the top of the
list.  The current paper addresses this problem and proposes a method
that first applies adjacency spectral graph embedding to embed the
graphs into a common Euclidean space, and then solves a penalized
linear assignment problem to obtain the nomination lists. Since the
spectral embedding of the graphs are only unique up to orthogonal
transformations, we present two approaches to eliminate this potential
non-identifiability. One approach is based on orthogonal Procrustes
and is applicable when there are enough vertices with known
correspondence between the two graphs. Another approach uses adaptive
point set registration and is applicable when there are few or no
vertices with known correspondence. We show that our nomination scheme
leads to accurate nomination under a generative model for pairs of
random graphs that are approximately low-rank and possibly with
pairwise edge correlations. We illustrate our algorithm's performance
through simulation studies on synthetic data as well as analysis of a
high-school friendship network and analysis of transition rates
between web pages on the Bing search engine.
\end{abstract}

\noindent%
{\it Keywords:} vertex nomination, correlated graphs, generalized random dot product graphs, point set registration

\spacingset{1.5} 




\section{Introduction}
\label{sec:introduction}
Graphs are widely used to model data in various
fields wherein vertices represent entities or objects of interest and
the edges represent pairwise relationships between the vertices. For
example, in social network graphs the vertices represent individuals
with the edges showing the communication between these individuals. Another example is
citation network where the vertices represent articles and the (directed) edges
represent citations between the articles. Finally, in many
neuroscience applications, the vertices represent brain regions of
interest and the edges summarize the inter-connectivity between these
regions.  Due to the prevalence of network data, there has been a
great deal of research done recently in statistical inference on
graphs, including, but not limited to, estimation of graph parameters,
\citep{xu2017rates,lloyd2012random}, one-sample and multi-sample
hypothesis testing \citep{moreno2013network,tang2014semiparametric},
graph clustering and classification
\citep{kudo2005application,zhang2018end,schaeffer2007graph,yin2017local},
and vertex nomination \citep{fishkind2015vertex,yoder2020vertex}.

Vertex nomination is the graph analog of recommender systems for
general tabular data. The simplest and most widely studied variant of
this problem is in the single graph setting wherein, given a network
and a subset of interesting vertices whose identities are partially
known, the task is to identify, using the known interesting vertices,
the remaining vertices of interest. The number of interesting vertices
is, in general, much smaller than the total number of vertices in the
graphs, and vertex nomination algorithms usually seek to output a list
of candidate vertices (that are deemed interesting) with the aim that
the remaining true but unknown vertices of interest are concentrated
near the top of the list.


The vertex nomination problem in the single graph setting appears, at
first blush, to be similar to the more widely studied community
detection problem \citep{duch2005community,newman2006finding,fortunato2010community};
however, there are important conceptual and practical differences
between the two. More specifically, community detection is concerned
with clustering or partitioning {\em all} the vertices of a network
into communities or clusters; a cluster is, roughly speaking, a group of
vertices exhibiting a different connectivity pattern within the
cluster as compared to the connectivity between clusters. 
In contrast,
as we alluded to earlier, vertex nomination is only concerned with
identifying a small subset of vertices of interest, and furthermore,
these vertices of interest might not form a cluster in the usual
sense, e.g., their intraconnectivity does not need to be
qualitatively different from their connectivity to other
"non-interesting" vertices. 
Due to this reason,
vertex nomination is also not the same as local graph clustering
\citep{spielman2013local,yin2017local} as local graph clustering is
concerned with clustering the vertices around the neighbourhood of a
given seed vertex; the unknown vertices of interest might or might not
be included in that neighbourhood.

Before continuing with our exposition we emphasize that, 
similar to recommender systems for tabular data or community detection
in networks, vertex nomination is intrinsically an
unsupervised learning problem. It is universally accepted that these type
of problems generally do not have clearly defined solutions. For example,
given a collection of data points in $\mathbb{R}^{d}$, there can be numerous different
ways to group these data points into clusters and furthermore all of these
clusterings can be valid, and the preference for one clustering over
another clustering depends on the setting and/or objective of the
data analysis at hand. Analogously, given a graph $G$, what
characterizes a vertex $v$ or a collection of vertices $S$ in $G$ as
being interesting is in general not well-defined and could vary
between applications and/or between users. 
Therefore, to present a relevant notion of ``interestingness'', it is usually
assumed that there exists a generative process underlying the observed
graph(s); the process itself could be latent or partially
observed. This is similar to the use of stochastic block models or its
variants in the context of community detection, the use of the
Bradley-Terry model in ranking problems, and the assumption of
low-rank factor models in recommender
systems. Furthermore, even when we assume that the latent generative
model belongs to a collection of models denoted by $\mathcal{M}$, if
$\mathcal{M}$ is not sufficiently restricted then there does not
exists a vertex nomination algorithm that is well-behaved for all
models in $\mathcal{M}$, i.e., for any algorithm $\mathcal{A}$ there
exists a model $M_0 \in \mathcal{M}$ such that the nomination accuracy
of $\mathcal{A}$ when given data generated from $M_0$ is no better
than random guessing; see \citet{lyzinski2019consistent} for a more precise
formulation and statement of this result. 
In light of the above discussion,  in this paper we shall present our
methodology in the context of a generative model for which there are
latent but unobserved features associated with the vertices and, from
these latent features, we can define an appropriate notion of
``interestingness''. Similarly, our real data analysis
examples are motivated by datasets where there are one-to-one
correspondence between (a subset of) the vertices.

Typical applications of vertex nomination include predicting group
membership in social networks \citep{coppersmith2012vertex}, searching
and indexing in databases \citep{levin2017graph}, and identifying
specific type of neurons (e.g., motor neurons) in neuroscience
\citep{fishkind2015vertex}. A sizable number of techniques have been
developed for vertex nomination in the single graph setting, including
methods based on likelihood maximization, Bayesian MCMC, and spectral
decomposition of the adjacency matrices, see
\citet{fishkind2015vertex,yoder2020vertex,lyzinski2016consistency,lee2012bayesian,coppersmith2012vertex,
sun2012comparison} and the references therein. Among these diverse
techniques, the spectral decomposition approach is one of the most
practical as it is computationally efficient and can be scaled to
handle reasonably large and sparse networks.


The current paper focuses on another important, albeit much less
studied, variant of the vertex nomination problem, namely vertex
nomination across graphs. More specifically, given two networks $G_1$
and $G_2$ and a vertex of interest $x$ in the network $G_1$, our task
is to find the corresponding vertex of $x$ in $G_2$ (if it exists). We
emphasize that $G_1$ and $G_2$ need not have the same number of
vertices and that the correspondence between the vertices of $G_1$ and
$G_2$ is largely unknown. 
The following is a concrete practical example of the above problem. 
Consider the pair of high school friendship networks
\citep{patsolic2017vertex}. The first social network $G_1$, having
$156$ vertices, represents a Facebook network, in which two vertices
are adjacent if the pair of individuals are friends on Facebook. The
second social network $G_2$ with $134$ vertices, is created based on
the result of survey, and two vertices are adjacent if the students
report they are friends. There are $82$ students appearing in both
social network $G1$ and $G_2$. Given a student of interest in $G_1$,
our goal is to identify the corresponding student in $G_2$. We may
know the corresponding relationship of some other shared students, and
the information can help us to nominate the student of interest in
$G_2$. This is a typical example, and similar examples could be
constructed including (1) identifying which user in Instagram
corresponding to some specific user in Facebook, 
(2) identifying topics of interest across
graphical knowledge bases \citep{sun2013efficiency} and (3) identifying
structural signal across connectomes \citep{sussman2019matched}.

Two algorithms were recently proposed for this multi-graph vertex
nomination problem. The authors of \cite{agterberg2020vertex} proposed
an algorithm based on spectral graph embedding wherein (1) the graphs
are spectrally embedded into Euclidean space, (2) the embedded points
are aligned via orthogonal Procrustes transformation, (3) the embedded
points are simultaneously clustered via Gaussian mixture modeling
\citep{fraley1998mclust}, and (4) output the candidate vertices using
the resulting clustering. In contrast,
the authors of \cite{patsolic2017vertex} proposed an algorithm based
on seeded graph matching \citep{lyzinski2014seeded}
wherein they graph match induced subgraphs generated around
neighbourhoods of the vertices with known correspondence in each
network.


The proposed algorithm in this paper also uses spectral graph
embedding. It is noted, empirically, that spectral graph embedding
approaches are much faster and more scalable, computationally, as
compared to graph matching approaches. Our approach is similar to
\cite{agterberg2020vertex} in that we also spectrally embed the graphs
into a common Euclidean space. However, in contrast to the Gaussian
mixture modeling of \cite{agterberg2020vertex}, our nomination lists
are based on solving a penalized linear assignment problem, and are
thus not dependent on tuning parameters such as the number of clusters
and the shape/orientation of these clusters, both of which are hard to
tune and could have significant impacts on the ordering in the
nomination lists. Indeed, if the embedding dimension $d$ is
moderately large, a Gaussian mixture model with arbitrary covariance
matrices requires estimation of $O(Kd^2)$ parameters where $K$ is the
maximum number of Gaussian component.


We also prove consistency results about our scheme when the
number of vertices goes to infinity under mild assumptions for a wide
class of popular random graph models. Furthermore, for a class of
random graph model where the edges of the two graphs are pairwise
correlated, we analyze how the magnitude of this correlation 
influences the consistency.


\section{Methodology}
\label{sec:methodology}
We first introduce some notations. Let the two
graphs be denoted as $G_1 = (V_1, E_1)$ and $G_2 = (V_2, E_2)$ where
$V_1$ and $V_2$ are the vertices sets and $E_1$ and $E_2$ are the
edges sets. We shall assume that our graphs are {\em undirected}. We
also partition the vertices sets $V_1$ and $V_2$ as $V_1 = U_1 \cup
J_1$ and $V_2 = U_2 \cup J_2$ where $U_1$ and $U_2$ denote the sets of
{\em shared} vertices between $G_1$ and $G_2$. We emphasize that the
correspondence between the vertices in $U_1$ and $U_2$ are only
partially known, i.e, we further partition $U_1$ and $U_2$ as $U_1 =
\{x\} \cup S_1 \cup W_1$ and $U_2 = \{\sigma(x)\} \cup S_2 \cup W_2$
where $x$ denote the {\em known} vertex of interest in $G_1$;
$\sigma:U_1\to U_2$ is a bijection such that for any $v\in U_1$,
$\sigma(v)$ is its corresponding vertex in $U_2$; $S_1$ and $S_2$ are
the seed sets with $|S_1| = |S_2| = K$, where we already know the
corresponding relationship, i.e., we know the bijection
$\sigma|_{S_1}:S_1\to S_2$; and $W_1$ and $W_2$ are the remaining
vertices of interest. We shall assume that the mapping $\sigma$ from
$W_1$ to $W_2$ is unknown, and furthermore, while the recovery of this correspondence between $W_1$ and
$W_2$ is important, it is also potentially less pressing than the recovery of $\sigma(x)$.
In summary, we only know the correspondence between $S_1$ and
$S_2$, and given the known vertex of interest $x$,
we are interested in finding its {\em unknown} correspondence
$\sigma(x) \in V_2 \setminus S_2$. Our goal is thus to seek a nomination
list of the vertices in $V_2\setminus S_2$ , ranked according to our
confidence in how similar they are to $x$.

The following example can help to understand how the partitioning of the vertices in each graph arise.
We consider the high school friendship network dataset containing two observed graphs. The first graph $G_1$ is extracted from the Facebook social network and the second graph $G_2$ is created based on a survey of the students' friendship. 
These two networks share some students in common.
For the first social network $G_1$, we partition the  students set $V_1$ as the union of the set $U_1$ of shared students and the set $J_1$ of remaining students. Similarly, we also have the partition $V_2 = U_2 \cup J_2$ for $G_2$. 
There is a bijection between the shared students of the two social networks, but we only observe this bijection for a subset of the vertices; we denote these sets as $S_1\subset U_1$ and $S_2\subset U_2$.
We have interest in the unknown corresponding relationship.
For a specific student of interest $x\in U_1\setminus S_1$, our goal is to find which vertex $\sigma(x)$ in the second graph corresponds to the same student.
We then use $W_1 = U_1 \setminus \{x\} \setminus S_1$ and $W_2 = U_2 \setminus \{\sigma(x)\} \setminus S_2$ to denote the remaining students.
More details for the high school friendship network data can be found in Section~\ref{sec:High School}.

We now describe our algorithm for finding $\sigma(x)$. Our algorithm
proceeds in three main steps. In the first step we spectrally embed
each graph into some $d$-dimensional Euclidean space. We next aligned
these embeddings either via solving an orthogonal Procrustes problem
in the case when $K$, the number of seeds vertices, is at least as
large as the embedding dimension $d$, or via solving a point set
registration problem. Finally we solve a quadratic program, using the
pairwise distances between the embedded points, to map each vertex $v
\in V_1 \setminus S_1$ to some {\em ordered subset} $\ell(v) \subset
V_2 \setminus S_2$; $\ell(v)$ serves as the {\em nomination list} of
the vertices in $V_2 \setminus S_2$ most similar to $v$.  We now
describe these steps in detail.

\subsection{Adjacency spectral embedding}

We spectrally embed the graphs by truncating their eigenvalue
decomposition. More specifically, given an $n \times n$ adjacency
matrix $\ma$ of a graph and a positive integer $d$ for the embedding
dimension, we compute
$$\ma = \sum_{i=1}^{n} \lambda_i u_i u_i^{\top},$$
where $|\lambda_1| \geqslant |\lambda_2| \geqslant \dots$ are the
eigenvalues and $u_1, u_2, \dots, u_n$ are the corresponding
eigenvectors. The adjacency spectral embedding of $\ma$ (into
$\mathbb{R}^{d})$ is then the $n \times d$ matrix
$$\hat\mx = [|\lambda_1|^{1/2} u_1, |\lambda_2|^{1/2} u_2, \dots, |\lambda_{d}|^{1/2} u_d].$$
The rows of $\hat\mx$ represent the (low-dimensional) embedding of the
vertices of $\ma$ into $\real^d$.

In practice, we can choose $d$ by looking at the eigenvalues of the adjacency matrix.
 A ubiquitous and principled method is to examine the so-called scree plot and look for ``elbow'' or ``knees'' defining the cut-off between the top (signal) $d$ dimensions and the noise dimensions. 
\cite{zhu2006automatic} provides an automatic dimensionality selection procedure to look for the ``elbow'' by maximizing a profile likelihood function.
 \cite{han2019universal} suggests another universal approach to rank inference via residual subsampling for estimating rank $d$.
 We can also determine $d$ by eigenvalue ratio test \citep{ahn2013eigenvalue} or by empirical distribution of eigenvalues \citep{onatski2010determining}. 

\subsection{Orthogonal Procrustes and point set registration}
\label{sec:alignment} 
We applied adjacency spectral embedding to the graphs $G_1$ and $G_2$,
thereby obtaining the $n \times d$ matrices $\hat\mx_1$ and
$\hat\mx_2$, respectively. As the rows of $\hat\mx_1$ and $\hat\mx_2$
represent the low-dimensional embeddings of the vertices in $G_1$ and
$G_2$, we should expect that, for similar vertices, these rows are
close in $\ell_2$ distance. This is, however, not necessarily the case
as the embeddings $\hat\mx_1$ and $\hat\mx_2$ are not unique, i.e.,
$\hat\mx_1$ and $\hat\mx_2$ are only defined up to some orthogonal
transformations as the eigendecomposition of $\ma_1$ is not, in general, unique.
We thus need to align $\hat\mx_1,\hat\mx_2$ by an orthogonal
transformation $\mathbf{W}$ to eliminate this potential
non-identifiability. We describe two methods for finding
$\hat\mw$. The first method is applicable when $K$, the number of seed
vertices, is larger than or equal to $d$, the embedding dimension; the
second method, which is more general but possibly less accurate, is
applicable for $K < d$, including the important case of $K = 0$. 

\subsubsection{Orthogonal Procrustes (when $K \geqslant d$)}

When the number of seed vertices is greater than or equal to the
embedding dimension, we find the orthogonal transformation $\hat{\mw}$
by solving the orthogonal Procrustes problem
\citep{schonemann1966generalized} to align the seeded vertices across
graphs, i.e.,
$$\hat\mw=\mathop{\arg\min}\limits_{\mw\in\mathbb{O}_d}\|(\hat\mx_1)_{S_1}\mw-(\hat\mx_2)_{S_2}\|_F,$$
where $\mathbb{O}_d$ is the set of all $d \times d$ orthogonal
matrices and $(\hat\mx_1)_{S_1}$ is a $|S_1|\times d$ matrix whose
rows are the rows of $\hat\mx_1$ indexed by the seed set $S_1$. With
$S_2=\{\sigma(v):v\in S_1\}$, $(\hat\mx_2)_{S_2}$ is defined
similarly, so that the $i$th row of $(\hat\mx_1)_{S_1}$ corresponds to
the $i$th row of $(\hat\mx_2)_{S_2}$. The minimizer $\hat{\mathbf{W}}$
has an explicit solution as $\hat{\mathbf{W}} = \mathbf{U}
\mathbf{V}^{\top}$ where $\mathbf{U} \mathbf{D} \mathbf{V}^{\top}$ is
the singular value decomposition of the $d \times d$ matrix
$(\hat\mx_2)_{S_2}^{\top} (\hat\mx_1)_{S_1}$.  After finding $\hat\mw$ we set
$\tilde\mx_1=\hat\mx_1\hat\mw$.

\subsubsection{Adaptive rigid point set registration (when $K < d$)}

Even when we do not have enough seed vertices or even no seed set ($K
= 0$), as long as we have a reasonable number of vertices that are
shared between the two graphs, we can apply the coherent point drift algorithm in \cite{myronenko2010point} to align
$\hat\mx_1$ and $\hat\mx_2$.
The algorithm in \cite{myronenko2010point} finds an affine
transformation to best align the centroids of the clusters of
$\hat\mx_1$ to the centroids of the clusters of $\hat\mx_2$. More
specifically, given a $n \times d$ matrix $\hat\mx_1$ and $m \times d$
matrix $\hat\mx_2$, we find $s \in \mathbb{R}$, $t \in \mathbb{R}^{d}$
and $\mathbf{W} \in \mathbb{O}_d$ that minimize the following
objective function
\begin{equation*}
\begin{aligned} &Q\left(\mathbf{W}, \mathbf{t}, s,
\sigma^{2}\right)=\frac{1}{2 \sigma^{2}} \sum_{i=1}^n\sum_{j=1}^{m}
P\left(i\, \big|\, (\hat\mx_2)_{j}\right)\|(\hat\mx_2)_{j}-s
\mathbf{W}^\top(\hat\mx_1)_{i}-\mathbf{t}\|^{2} +\frac{N_{\mathbf{P}}
d}{2} \log \sigma^{2},
\end{aligned}
\end{equation*} where $(\hat\mx_1)_i$ represents the $i$th row of
$\hat\mx_1$, i.e., the $i$th vertex's embedding of $G_1$;
$(\hat\mx_2)_j$ is defined similarly. 
Here
$N_{\mathbf{P}}=\sum_{i=1}^{n} \sum_{j=1}^{m} P\left(i \, \big|\,
(\hat\mx_2)_{j}\right)$ is a normalizing constant, with $P\left(i \, \big|\,
(\hat\mx_2)_{j}\right)$ the correspondence probability between two
vertices' embeddings $(\hat\mx_1)_i$ and $(\hat\mx_2)_j$, defined as
the posterior probability of the centroid given the vertex's embedding
$(\hat\mx_2)_j$, i.e.,

$$P\left(i \, \big|\, (\hat\mx_2)_{j}\right)=\frac{\exp\Bigl(-\tfrac{1}{2}\|((\hat\mx_2)_{j}-s \mathbf{W}^\top(\hat\mx_1)_{i}-\mathbf{t})/\sigma\|^{2}\Bigr)}{c + \sum_{k=1}^{n} \exp\Bigl(-\tfrac{1}{2} \|((\hat\mx_2)_{j}-s \mathbf{W}^\top(\hat\mx_1)_{k}-\mathbf{t})/\sigma\|^2\Bigr)},$$
for some constant $c$ (where $c$ is a function of the model parameters defined above; see \cite{myronenko2010point}). The minimization of $Q$ is done via an
EM-algorithm. For more details, please refer to
\cite{myronenko2010point}.

The resulting minimizer $(\hat{s}, \hat{\bm{t}}, \hat{\mathbf{W}})$
yield an affine transformation $\mathcal{T}$ of $\hat\mx_1$ via
$\mathcal{T}((\hat\mx_1)_{i}) = s \hat\mw^\top (\hat\mx_1)_i + \bm{t}$.  We
note, however, that in the context of our current work, the alignment
of $\hat\mx_1$ and $\hat\mx_2$ does not require the scaling $s$ and
the translation $t$. We thus make a few minor adjustments to the EM
algorithm in \cite{myronenko2010point}. In particular, 1) we always
set $s=1$ and $\mathbf{t} = \bm{0}$; 2) we iteratively update
$\mathbf{W}$ via $\mathbf{W}^\top=\mathbf{U V}^{T}$ instead of
$\mathbf{W}^\top=\mathbf{U C V}^{T}$; 3) we initialize $\mathbf{W}$ as a
diagonal matrix with $1$ or $-1$ diagonal elements so that the initial
error in the EM approach $\sigma^{2}=\frac{1}{d n m} \sum_{i=1}^{n}
\sum_{j=1}^{n}\left\|(\hat\mx_1 \mw)_i-(\hat\mx_2)_j\right\|^{2}$
is as small as possible. Once we get the final orthogonal matrix
$\hat\mw\in\real^{d\times d}$, we set $\tilde\mx_1=\hat\mx_1
\hat\mw$.


\subsection{Quadratic program}
\label{sec:quadratic}
We now formulate a quadratic program to
find, for each vertex $v \in V_1$, a collection of vertices $\ell(v)
\subset V_2\setminus S_2$ that are ``most similar'' to $v$. Here
similarity between $v \in V_1$ and $u \in V_2$ is measured in terms of
the Euclidean distances between their embeddings. In other words,
given a {\em query} vertex $v$ in the first graph, 
our proposed algorithm outputs a nomination list $\ell(v)$ of vertices
in the second graph that are most similar to $v$; these vertices are deemed
``interesting'' in the context of the query vertex $v$. 

Our quadratic program is described as follows.
Given the aligned embeddings $\tilde\mx_1$ and $\hat\mx_2$, we find
$\hat{\mathbf{D}}$ to minimize the following objective function
\begin{equation}
  \label{eq:objective}
  \hat\md=\mathop{\arg\min}\limits_{\md\in\real^{n\times m}}
  \sum_{i=1}^n\sum_{j=1}^m\|(\tilde\mx_1)_i-(\hat\mx_2)_j\|_2\cdot\md_{i,j}+\lambda\|\md\|_F^2,
\end{equation}
subject to the constraints that
\begin{equation*}
\begin{aligned}
	(\mathrm{i}) \quad &\,\sum_{j=1}^m\md_{i,j}=m,\text{ for all }1\leqslant i\leqslant n,\\
	(\mathrm{ii}) \quad &\,\sum_{i=1}^n\md_{i,j}=n,\text{ for all }1\leqslant j\leqslant m,\\
	(\mathrm{iii}) \quad & \,\md_{i,j}\geqslant0,\text{ for all }1\leqslant i\leqslant n,1\leqslant j\leqslant m,\\
	(\mathrm{iv}) \quad &\,\md_{(S_1)_k,(S_2)_k}=\min\{n,m\},\text{ for all }1\leqslant k\leqslant K.
\end{aligned}
\end{equation*}
Here $\|(\hat\mx_1)_i-(\hat\mx_2)_j\|$ is the Euclidean distance
between the $i$th vertex's embedding of $G_1$ and the $j$th vertex's
embedding of $G_2$, $\lambda>0$ is a penalty parameter, and
$(S_1)_k,(S_2)_k$ represents the index of the $k$th seed vertex in
$G_1$ and $G_2$, respectively.

The motivation behind solving the above optimization problem is as
follows. The constraints on $\mathbf{D}$ state that (1) each vertex $i
\in V_1$ is mapped to some collection of vertices in $j \in V_2$,
namely those for which $\hat\md_{ij} > 0$ (constraint iii.); (2) since for any $i \in
V_1$, $\sum_{j} \hat\md_{ij} = m$ where $m = |V_2|$, larger values of
$\hat\md_{ij}$ indicates more "similarity" between $i \in V_1$ and $j
\in V_2$ (constraints i. and ii.); and (3) a seed vertex $s \in S_1$ will get mapped to its
unique correspondence $\sigma(s) \in S_2$ (constraint iv.). 

We now consider the
objective function. 
The first part of the objective function indicates
that the similarity between the $i$th vertex in $V_1$ and the $j$th
vertex in $V_2$ is based on the Euclidean distance
$\|(\tilde{\mathbf{X}}_1)_i - (\hat\mx_2)_j\|$, i.e., larger distance
should lead to smaller $\hat\md_{ij}$. We can then consider, for each
$i \in V_1$, the nomination list for $i$ as being the vertices in
$V_2\setminus S_2$ arranged according to decreasing values of
$\{\hat\md_{ij}\}_{j \in V_2\setminus S_2}$. The second part of the
objective function, i.e., the penalty term $\lambda
\|\mathbf{D}\|_{F}^2$, is to discourage sparsity 
of $\hat\md$.
More specifically, removal of the penalty term $\lambda
\|\mathbf{D}\|_{F}^2$ leads to a linear programming problem for which
the minimizer $\hat\md$ may lie on the boundary of the feasibility
region, i.e., the elements of $\hat\md$ take values only in
$\{0,m,n\}$. If $\hat\md_{i,j} = m$, then the $i$th vertex in $V_1$ is
mapped to the $j$th vertex in $V_2$ and if $\hat\md_{i,j} = n$ then
the $j$th vertex in $V_2$ is mapped to the $i$th vertex in $V_1$. This
gives a nomination list with a single candidate. This type of nomination list,
when accurate, can significantly reduces the burden of post-processing and checking/verifying
multiple candidates. However, it is also likely to be non-robust. 
By adding the penalty term, we encourage the elements in each row of
$\hat\md$ to be more uniform since, for any vector $x \in
\mathbb{R}^{m}$, $\|x\|_{\ell_2} \geqslant m^{-1/2}\|x\|_{\ell_1}$ with
equality if and only if all the elements of $x$ are the same. We
note that the optimization problem in Eq.~\eqref{eq:objective} is
analogous to the {\em quadratically regularized} optimal transport problem between the 
point masses induced by $\tilde{\mathbf{X}}_1$ and
$\hat\mx_2$. %

The resulting optimization problem is a quadratic program with linear
constraints, and the coefficient matrix of the quadratic term is positive
definite. 
Thus, for a fixed $\lambda > 0$, the
optimization function is strongly convex and hence there exists a {\em
  unique} global minimizer $\hat{\mathbf{D}}$ for any given
$\tilde{\mathbf{X}}_1$ and $\hat{\mathbf{X}}_2$. In particular, for a fixed $\lambda
> 0$, the solution $\hat{\mathbf{D}}_{\lambda}$ of Eq.~\eqref{eq:objective} is of the
form \citep{blondel}
$$\hat{\mathbf{D}}_{\lambda}(i,j) = \frac{1}{\lambda} \Bigl[\|(\tilde{\mathbf{X}}_1)_{i} -
(\hat{\mathbf{X}}_2)_j\| - \hat{\alpha}_i - \hat{\beta}_j\Bigr]_{+},$$
where $[z]_+ = \max\{z, 0\}$, and $\hat{\bm{\alpha}} = (\hat{\alpha}_1, \dots, \hat{\alpha}_n) $
and $\hat{\bm{\beta}} = (\hat{\beta}_1, \dots, \hat{\beta}_m)$ solve the
unconstrained dual problem
$$\max_{\substack{\bm{\alpha} \in \mathbb{R}^{n}}, \bm{\beta} \in \mathbb{R}^{m}} n \bm{\alpha}^{\top} \bm{1} + m
\bm{\beta}^{\top} \bm{1} - \frac{1}{2 \lambda} \sum_{i=1}^{n}
\sum_{j=1}^{m} \Bigl[\|(\tilde{\mathbf{X}}_1)_{i} -
(\hat{\mathbf{X}}_2)_j\| - \alpha_i - \beta_j\Bigr]_{+}.$$
The optimal solution can be found, theoretically, in
polynomial time using the ellipsoid algorithm of \cite{kozlov1980polynomial}. 
In practice we use the Gurobi solver
\citep{gurobi2018gurobi}, which is based on an interior-point
algorithm. 

There is no simple and universal approach for choosing $\lambda$ in practice. If there are seed vertices, then we can do cross-validation by leaving out a subset (or all) of the seed vertices and choose $\lambda$ for which the nomination on the seed vertices has smallest MRR or MNR. In general, as vertex nomination is an unsupervised learning problem, the issue of choosing tuning parameter is, in a sense, unsolved (at least without additional information such as having seed vertices or a known collection of pair of vertices that should not be matched together).
Nevertheless, we note that, empirically, the nomination list found by our
algorithm is {\em not} overly sensitive to the choice of the
$\lambda$, as $\lambda$ mainly influences the
magnitudes of the elements in $\hat\md$ but does not changes their relative ordering too much.

The quadratic program considered in Eq.~\eqref{eq:objective} is motivated by the two-dimensional linear assignment problem. The most obvious formulation of the current vertex nomination problem to three or more graphs will lead to an optimization problem that is similar to the multi-dimensional assignment problem \citep{pierskalla1968multidimensional}, which is NP-hard. We thus leave the detailed study of multi-sample vertex nomination for future work.

\subsection{Computational complexity}
We now describe the computational complexity for our proposed
algorithm. Firstly, the embedding step is
roughly $O((n^2+m^2)d)$ where $d$ is the embedding dimension
\citep{van1990partial} and $n$ and $m$ are the number of vertices in
the first and second graph, respectively. The orthogonal Procrustes is
roughly $O(\min\{m,n\} d^2)$. The adaptive point set registration can be done
iteratively with each iteration having computational
complexity of $O(2^d nmd)$, and generally speaking $\ell \leq 50$
iterations suffice from empirical observations. The solution of the
quadratic programming problem can also be solved iteratively with each
iteration having a complexity of $O(nm)$ operations; the number of iterations is (empirically) generally bounded and is also independent of the data
dimension. As such, the expected empirical complexity of the quadratic
programming step is also $O(nm)$ \citep{dessein2018regularized}.

Empirically, we record the running time of the simulation experiments in Section~\ref{sec:simulation} on $n=300$ vertices and $n=1000$ vertices, respectively (here we set $m=n$).
For our algorithm with
orthogonal Procrustes, the {\bf total} running time for the
experiments on $n = 300$ vertices for various choices of $\rho$ (Figure~\ref{rho-RDPG} or Figure~\ref{rho-SBM}) is
about $2$ hours while the total running time for $n = 1000$ vertices (Figure~\ref{rho-RDPG_n=1000} or Figure~\ref{rho-SBM_n=1000} in Appendix~\ref{sec:add res})
is about $23$ hours. The running time of our algorithm with the adaptive
point set registration procedure also has the similar
ratio. In summary, we see that the running time of our algorithm does
scale (approximately) quadratically with $n$, the number of vertices, which is consistent with the above analysis.

\section{Theoretical Results}
\label{sec:theorey}
We now investigate the theoretical properties of our proposed
algorithm. For simplicity we will only consider the case where $G_1$
and $G_2$ have the same number of vertices; the analysis presented
here will also extend to the case when $G_1$ and $G_2$ have different
number of vertices provided that the number of common vertices is sufficiently large.
We first formulate a generative model for generating pairs of random graphs $(G_1, G_2)$ with underlying
latent correspondence $\sigma$ between the vertices $V_1$ of $G_1$ and
$V_2$ of $G_2$. 
We then show consistency results about our scheme under the generative model and
analyze the accuracy of the nomination list obtained by our algorithm.
We next investigate the impact of having seed vertices. In
particular, we propose a procedure for re-ranking the nomination lists
in the presence of seed vertices. The empirical
results in Section~\ref{sec:experiment} indicates that 
if the number of seed vertices is not too small, then this optional
re-ranking step outputs an improved nomination list $\ell(x)$ compared
to the nomination list obtained directly from the quadratic program solution.

Our generative model for pairs of random
graphs depends on the following notion of the generalized random dot product graphs \citep{rubin2017statistical,young2007random}.
\begin{definition}[Generalized random dot product graphs]
  \label{def:grdpg}
  Let $d \geq 1$ be given and let $\mathcal{X}$ be  a  subset  of $\mathbb{R}^{d}$ such that
  $x^{\top} \mathbf{I}_{p,q} y \in [0,1]$. Here $\mathbf{I}_{p,q}$ is
  a $d \times d$ diagonal matrix with diagonal entries containing $p$
  ``+1'' and $q$ ``-1'' for integers $p, q \geq 0$, $p + q = d$. For a given $n \geq 1$, let
  $\mathbf{X}$ be a $n \times d$ matrix with rows $X_i \in
  \mathcal{X}$ for $i =1,2,\dots, n$. A random graph $G$ is said to be 
  an instance of a generalized random dot product graph with latent positions
  $\mathbf{X}$ if the adjacency matrix $\mathbf{A}$ of $G$ is a
  symmetric matrix whose upper triangular entries
  $\{\mathbf{A}(i,j)\}_{i \leq j}$ are independent
  Bernoulli random variables with 
  $$\mathbf{A}(i,j) \sim \mathrm{Bernoulli}(X_i^{\top} \mathbf{I}_{p,q} X_j).$$
  We use GRDPG($\mpp$) to represent such graph, where $\mpp=\mx\mi_{p,q}\mx^\top$.
\end{definition}

Generalized random dot product graphs are a special case of latent
position graphs or graphons
\citep{Hoff2002,diaconis08:_graph_limit_exchan_random_graph,lovasz12:_large}.
In the general latent position graph model, each vertex $v_i$ is
associated with a latent or unobserved vector $X_i$ and, given the
collection of latent vectors $\{X_i\}$, the edges are conditionally
independent Bernoulli random variables with $\mathbb{P}[v_i \sim v_j]
= \kappa(X_i, X_j)$ for some {\em symmetric} link function $\kappa$.
Generalized random dot product graphs can be used to model any latent
position graphs where the link function $\kappa$ is
finite-dimensional, i.e., $\kappa$ is such that for any $n$ and for
any collection of latent vectors $\{X_i\}_{i=1}^{n}$, the $n \times n$
matrix $\mathbf{P}$ with $\mathbf{P}(i,j) = \kappa(X_i, X_j)$ has rank
at most $d$ for some {\em arbitrary} but fixed $d$ not depending on
$n$. Indeed, as $\mathbf{P}$ is a symmetric matrix with rank at most
$d$, $\mathbf{P}$ has an eigendecomposition as $\mathbf{P} =
\mathbf{U} \bm{\Lambda} \mathbf{U}^{\top}$. Hence, taking
$\mathbf{X} = \mathbf{U} |\bm{\Lambda}|^{1/2}$ and letting $p$ and $q$
be the number of positive and negative eigenvalues of $\mathbf{P}$, we
obtained a representation of $\mathbf{P}$ as a GRDPG.

Generalized random dot product graphs include, as
special cases, the popular class of stochastic block model graphs and their
degree-corrected and mixed-membership variants
\citep{holland,karrer2011stochastic,Airoldi2008}.
\begin{definition} (Stochastic block model random graphs).  We say a
random graph $G$ with adjacency matrix $\ma$ is distributed as a
stochastic block model random graph with parameters $L,b,\mb$ if
\begin{enumerate}
\item The vertex set $V$ of $G$ is partitioned into $L$ blocks, $V=V_{1} \cup V_{2} \cup \cdots \cup V_{L}$.
\item The function $b$ is a mapping from $V$ to $\{1,\cdots,L\}$ with
  $b(i)$ denoting the block label of vertex
  $i \in V$.
\item The matrix $\mb\in [0,1]^{L\times L}$ is a symmetric
   matrix of block probabilities. More specifically, given $b$, the entries $\ma(i,j)$ for $i \leq j$ are {\em
    conditionally independent} Bernoulli random variables with $\ma(i,j) \sim \mathrm{Bernoulli}\left(\mb_{b(i),b(j)}\right)$.
\end{enumerate}
We denote a stochastic block model graph as $G \sim \mathrm{SBM}(L, b, \mathbf{B})$.
\end{definition} 
A $L$-blocks stochastic block model graph \citep{holland} corresponds to a GRDPG where $\mathcal{X}$ is a mixture
of $L$ point masses. Similarly, a $L$-blocks degree-corrected
stochastic block model and a $L$-blocks mixed-membership stochastic block model
correspond to a GRDPG where $\mathcal{X}$ is supported on a mixture of
$L$ rays and on a convex hull of $L$ points, respectively.  See
\cite{rubin2017statistical} for a more detailed description of these relationships. The perspective of representing a mixed membership SBM as a GRDPG also gives us a general approach to constructing the domain $\mathcal{X}$ for a GRDPG in Definition~\ref{def:grdpg}. More specifically, we can construct $\mathcal{X}$ by first finding
 some collection of $K$ vectors
$\mathcal{S} = \{X_1, X_2, \dots \}$ for which $X_i^{\top} \mi_{p, q} X_j
      \in [0,1]$ for all $X_i, X_j \in \mathcal{S}$ and then construct
      $\mathcal{X}$ as the convex hull of the $\mathcal{S}$. 

Our generative model for pairs of random
graphs extends the $\mathrm{GRDPG}$ model for the single graph setting to
the setting of two graphs that share a common set of vertices with 
edges that are possibly correlated.
\begin{definition}[$\rho$-correlated GRDPG]
  \label{def:rho-correlated}
  Assume the notation in Defintion~\ref{def:grdpg}. Let $\rho \in
  [-1,1]$ be given. A {\em pair} of
  random graphs $(G_1,G_2)$ is said to be an instance of a
  $\rho$-correlated generalized random dot product graphs with latent
  positions $\mathbf{X}$ if the pair of adjacency matrices $\ma_1$ and
  $\ma_2$ satisfy the following conditions. 
\begin{enumerate}
\item Marginally $G_1 \sim \mathrm{GRDPG}(\mathbf{P})$  {\em and} $G_2
  \sim \mathrm{GRDPG}(\mathbf{P})$.
\item Given $\mathbf{P}$, the bivariate random variables $\{\ma_1(i,j),\ma_2(i,j)\}_{1\leqslant
    i<j\leqslant n}$ are collectively independent and
  $$\mathrm{corr}(\ma_1(i,j),\ma_2(i,j))=\rho,$$
  for any $1\leqslant i<j\leqslant n$.
\end{enumerate}
\end{definition} The correlation $\rho$ in
Definition~\ref{def:rho-correlated} induces a notion of correspondence
between the vertices in $G_1$ and $G_2$. More specifically, if $\rho =
0$ then for any two arbitrary pairs of vertices $(i, j)$ and $(k,
\ell)$, the edges $\mathbf{A}_1(i,j)$ and $\mathbf{A}_2(k,\ell)$ are
independent.  In contrast, if $\rho \not = 0$ then
$\mathbf{A}_{1}(i,j)$ and $\mathbf{A}_2(k,\ell)$ are independent if
and only if $\{i,j\} \not = \{k, \ell\},\{i,j\} \not = \{\ell,k\}$. Now suppose that $\rho \not
= 0$. Then given a vertex of interest $v \in G_1$, we can define the
true correspondence of $v$ in $G_2$ as the {\em unique} vertex
$\sigma(v) \in G_2$ such that the edges $\mathbf{A}_1(v,u)$ and
$\mathbf{A}_2(\sigma(v), \sigma(u))$ are correlated {\em for all} $u
\in G_1$. In summary, if $(G_1, G_2)$ is a pair of $\rho$-correlated
GRDPG graphs with $\rho \neq 0$ then there exists a {\em canonical}
correspondence between the vertices of $G_1$ and the vertices of
$G_2$. We use this correspondence to define our notion of
``interestingness'' when evaluating the proposed methodology on graphs
generated from the $\rho$-GRDPG model, i.e., given a query vertex
$v \in G_1$, we wish to find $\sigma(v) \in G_2$. See
\cite{lyzinski2014seeded,patsolic2017vertex,agterberg2020vertex} for
further discussion of the relationship between $\rho$ and its induced correspondence in graph matching and vertex nomination
problems.

We now state our first theoretical result. The following result
provides an error bound for the $2 \to \infty$ norm difference between
the adjacency spectral embeddings of $G_1$ and $G_2$;  
given a $n \times p$ matrix $\mathbf{M}$ with rows $M_i$, the $2 \to \infty$ norm of $\mathbf{M}$
is the maximum $\ell_2$ norm of the rows $M_i$, i.e.,
$$\|\mathbf{M}\|_{2 \to \infty} = \max_{\|\bm{x}\| = 1}
\|\mathbf{M} \bm{x}\|_{\infty} = \max_{i=1,\dots,n} \|M_i\|.$$
The main feature of the following $2 \to \infty$ bound is that it is monotone decreasing
in both $\rho>0$ and $n$, i.e., larger correlation and/or larger number
of vertices in each graph lead to smaller error bound that holds {\em
  uniformly} for all vertices of $G_1$ and $G_2$. We emphasize that
previous bounds for
$\min_{\mathbf{W}\in\mathbb{O}_{d}} \|\hat{\mathbf{X}}_1 \mathbf{W}  -
\mathbf{X}\|_{2 \to \infty}$ and $\min_{\mathbf{W}\in\mathbb{O}_{d}} \|\hat{\mathbf{X}}_2
\mathbf{W} - \mathbf{X}\|_{2 \to \infty}$ (see e.g., \citet{rubin2017statistical})  do not depend on the
correlation $\rho$, and thus will lead to a potentially sub-optimal bound of the form
$$\min_{\mathbf{W}\in\mathbb{O}_{d}} \|\hat\mx_1 \mathbf{W} - \hat\mx_2\|_{2 \to \infty}
= O_{p}(n^{-1/2}).$$
Below, we will characterise the error between asymptotic latent position estimations under the assumption that $\mpp=\gamma\cdot\mx\mi_{p,q}\mx^\top$ where $\gamma$ is a sparsity factor.
\begin{theorem}
  \label{thm:main}
  Let $(G_1,G_2)\sim\rho$-$GRDPG(\mpp)$, where
$\mpp=\gamma\cdot\mx\mi_{p,q}\mx^\top\in[0,1]^{n\times n}$ is symmetric with $\mathrm{rank}(\mathbf{P})=p+q
= d$. Suppose that 1) $\max_{i} \sum_{j} \mpp_{i,j}(1-\mpp_{i,j}) \geqslant
C \log^{4}n$ for some universal constant $C$; 2) the latent positions
satisfy $\frac{1}{n} \sum_{i=1}^{n} X_{i} X_{i}^{\top}\mi_{p,q}
 \stackrel{\mathrm{a.s.}}{\longrightarrow}  \Gamma$ as $n \rightarrow \infty$. Here $\Gamma$ is  a
{\em fixed} $ d \times d$ matrix not depending on $n$. Denote by $\ma_1,\ma_2\in\rnn$ the adjacency matrices for 
$G_1$ and $G_2$, respectively. Let $\hat\mx_1$ and $\hat\mx_2$ be the
adjacency spectral embedding of $\ma_1$ and $\ma_2$ into
$\mathbb{R}^{d}$, respectively. 
Then there exists a constant $c>0$ such that
$$\min_{\mw\in\mathbb{O}_{d}}\left\|\hat\mx_1\mw-\hat\mx_2\right\|_{2\to\infty}=(1-\rho)^{1/2}\cdot O_p\left(n^{-1/2}\right)+O_p\left((\log n)^{2c}n^{-1}\gamma^{-1/2}\right).$$
\end{theorem}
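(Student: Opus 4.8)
\medskip\noindent\textbf{Sketch of the proof.}
The plan is to replace each adjacency spectral embedding by its first-order perturbation expansion about the common population matrix $\mpp$, exploit the fact that the population term cancels when the two embeddings are subtracted, and then separately bound (i) the leading linear term, whose entrywise variance carries the factor $1-\rho$, and (ii) the higher-order residuals, which are controlled by the existing $2\to\infty$ perturbation theory for generalized random dot product graphs.

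Write $\mpp=\mU_P\bm{\Lambda}_P\mU_P^\top$ for the eigendecomposition of $\mpp$, with $\mU_P\in\real^{n\times d}$ having orthonormal columns and $\bm{\Lambda}_P$ the $d\times d$ diagonal matrix of nonzero eigenvalues, and set $\mx_P:=\mU_P|\bm{\Lambda}_P|^{1/2}$, a valid latent-position matrix for $\mpp$ (it differs from $\sqrt{\gamma}\,\mx$ only by a transformation in $\mathbb{O}(p,q)$). Because $\mpp=\gamma\cdot\mx\mi_{p,q}\mx^\top$ with the rows of $\mx$ bounded and $\tfrac1n\mx^\top\mx\,\mi_{p,q}\to\Gamma$, every nonzero eigenvalue of $\mpp$ has magnitude of order $\gamma n$, and Assumption~1 forces $\gamma n\gtrsim\log^{4}n$; these are precisely the regularity conditions required by the standard $2\to\infty$ analysis of spectral embeddings. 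Since, marginally, both $G_1$ and $G_2$ are $\mathrm{GRDPG}(\mpp)$, that theory (in the form of the expansions underlying the bounds of \citet{rubin2017statistical}) supplies, for each $i\in\{1,2\}$, an orthogonal $\mw_i\in\mathbb{O}_d$ with
$$\hat\mx_i\mw_i \;=\; \mx_P \;+\; (\ma_i-\mpp)\,\mh \;+\; \mr_i,\qquad \mh:=\mU_P|\bm{\Lambda}_P|^{-1/2},$$
where $\|\mr_i\|_{2\to\infty}=O_p\!\big((\log n)^{2c}\,n^{-1}\gamma^{-1/2}\big)$; the factor $n^{-1}\gamma^{-1/2}$ reflects that the smallest nonzero eigenvalue of $\mpp$ has order $\gamma n$. (The precise $\mh$ may carry extra signature factors $\mi_{p,q}$; this is immaterial below.)

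Since right multiplication by the orthogonal matrix $\mw_2^\top$ preserves all row-wise $\ell_2$ norms, taking $\mw=\mw_1\mw_2^\top$ in the minimization and cancelling the common $\mx_P$ yields
$$\min_{\mw\in\mathbb{O}_d}\|\hat\mx_1\mw-\hat\mx_2\|_{2\to\infty}\;\le\;\|\hat\mx_1\mw_1-\hat\mx_2\mw_2\|_{2\to\infty}\;=\;\|\me\,\mh+\mr_1-\mr_2\|_{2\to\infty}\;\le\;\|\me\,\mh\|_{2\to\infty}+\|\mr_1\|_{2\to\infty}+\|\mr_2\|_{2\to\infty},$$
where $\me:=\ma_1-\ma_2$. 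The last two summands are $O_p\!\big((\log n)^{2c}n^{-1}\gamma^{-1/2}\big)$ by the previous paragraph — this is the second term in the statement — so it remains to bound $\|\me\,\mh\|_{2\to\infty}$. The matrix $\me$ is symmetric with independent upper-triangular entries $\me(i,j)\in\{-1,0,1\}$ of mean zero, and since $\ma_1(i,j)$ and $\ma_2(i,j)$ are both $\mathrm{Bernoulli}(\mpp(i,j))$ with correlation $\rho$,
$$\mathrm{Var}(\me(i,j))\;=\;\mathrm{Var}(\ma_1(i,j))+\mathrm{Var}(\ma_2(i,j))-2\,\mathrm{Cov}(\ma_1(i,j),\ma_2(i,j))\;=\;2(1-\rho)\,\mpp(i,j)(1-\mpp(i,j)).$$
Hence the $k$th row of $\me\,\mh$ is a sum $\sum_{j}\me(k,j)\,(\mh)_j$ of independent mean-zero random vectors in $\real^d$, each bounded in norm by $\|\mh\|_{2\to\infty}$, with row-variance $v_k=2(1-\rho)\sum_{j}\mpp(k,j)(1-\mpp(k,j))\,\|(\mh)_j\|^2$. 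Using $|\bm{\Lambda}_P|\asymp\gamma n$, the delocalization $\|\mU_P\|_{2\to\infty}^2\asymp n^{-1}$ (which follows from Assumption~2, as $\mathrm{colspan}(\mU_P)=\mathrm{colspan}(\mx)$), and $\mpp(k,j)(1-\mpp(k,j))=O(\gamma)$, one finds $\|(\mh)_j\|^2\asymp(\gamma n^{2})^{-1}$ and therefore $v_k\asymp(1-\rho)/n$, the sparsity factor $\gamma$ cancelling. A matrix Bernstein bound applied row-by-row together with a union bound over the $n$ rows — the same steps used to bound $\|(\ma_i-\mpp)\,\mh\|_{2\to\infty}$ in the single-graph setting — then yields $\|\me\,\mh\|_{2\to\infty}=(1-\rho)^{1/2}\cdot O_p(n^{-1/2})$, which is the first term in the statement. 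Adding the three contributions proves the theorem.

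The crux, and the step I expect to be most delicate, is the expansion/cancellation: one must check that $\hat\mx_1$ and $\hat\mx_2$ can be linearized about the \emph{same} matrix $\mx_P$ using orthogonal $\mw_1,\mw_2$, with residuals $\mr_i$ of the stated order, so that subtracting them leaves exactly the clean linear term $\me\,\mh$. This means importing the GRDPG $2\to\infty$ perturbation machinery and carefully tracking the indefinite-orthogonal transformations intrinsic to GRDPG embeddings and how they interact across the two graphs. The remaining ingredients — the variance accounting for $\me$, which is where the gain of the factor $(1-\rho)^{1/2}$ over the $\rho$-free rate $O_p(n^{-1/2})$ originates, and the row-wise Bernstein plus union-bound argument — are comparatively routine.
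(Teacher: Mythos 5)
Your proposal is correct and follows essentially the same route as the paper: the paper also linearizes each embedding via the expansion $\muu_i|\ms_i|^{1/2}=\muu|\ms|^{1/2}\mw_i+(\ma_i-\mpp)\muu|\ms|^{-1/2}\mw_i\mi_{p,q}+\mr_i$ from \citet{rubin2017statistical}, cancels the population term (made rigorous by a six-term decomposition with block-diagonal alignment matrices $\mw_1,\mw_2$ so the signature factors commute through), and extracts the $(1-\rho)^{1/2}$ gain from $\mathrm{Var}(\me_{ij})=2(1-\rho)\mpp_{ij}(1-\mpp_{ij})$ via Bernstein plus a union bound. The step you flag as delicate — getting both linearizations around the same population quantity so that subtraction leaves exactly $\me\,\mh$ — is indeed where the paper spends most of its effort (Lemmas on $\mt_1+\mt_6$ and the construction of $\mw_1,\mw_2$), and your sketch identifies it correctly.
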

Condition (1) in the statement of Theorem~\ref{thm:main} is a
condition on the {\em sparsity} of the graph. In particular, Condition
(1) is satisfied provided that the {\em maximum degree} of
$\mathbf{P}$ is of order $\Omega(\log^{4}n)$. Condition (2) is a
condition on the homogeneity of the latent positions
$\mathbf{X}$, i.e., as $n
\rightarrow \infty$, the latent positions are sufficiently homogeneous
so that their sample second moment matrix $\tfrac{1}{n} \sum_{i=1}^n X_i
X_i^{\top} \mathbf{I}_{p,q}$ converges. Assuming the above conditions
are satisfied, Theorem~\ref{thm:main} then implies the existence of an orthogonal $\mw_*$
such that for $\rho$ sufficiently bounded away from $1$, $$\|\mw_*^\top(\hat\mx_1)_{i}-(\hat\mx_2)_{i}\| =(1
- \rho)^{1/2} \cdot O_{p}(n^{-1/2} ), \quad \text{for all $i = 1,2,\dots,n.$}$$
Suppose we are now given a collection of seed vertices $S$ with $|S|\geq d$, the
embedding dimension. Suppose furthermore that the $|S| \times d$ matrices $(\hat\mx_1)_{S}$ and
$(\hat\mx_2)_{S}$ are both of full-column rank, i.e., $(\hat\mx_1)_{S}$
and $(\hat\mx_2)_{S}$ each contains $d$ linearly independent
columns. Then by solving the orthogonal Procrustes problem
$\min_{\mathbf{W}\in\mathbb{O}_{d}} \|(\hat\mx_1)_{S}\mw - (\hat\mx_2)_{S}\|_{F}$, we
will obtain an estimate $\hat \mw$ of $\mathbf{W}$ that still
satisfies the claim in Theorem~\ref{thm:main}. Indeed, we have
$$\|\hat \mw^\top (\hat\mx_1)_{j} - (\hat\mx_2)_{j}\| \leqslant
\|\mw_*^\top(\hat\mx_1)_{j} - (\hat\mx_2)_{j}\| = (1 -
\rho)^{1/2} \cdot O_{p}(n^{-1/2}), \quad \text{for
  all $j \in S$}.$$
Now for each $i \not \in S$, $(\hat\mx_1)_i$ and $(\hat\mx_2)_i$ can be
written as a linear combination of $(\hat\mx_1)_{j}, j \in S$ and
$(\hat\mx_2)_{j}, j \in S$, respectively. Hence, by the
triangle inequality for vector norms, 
$$\|\hat\mx_1\hat\mw  - \hat\mx_2\|_{2 \to \infty}
\leq (1 - \rho)^{1/2} \cdot O_{p}(n^{-1/2}).$$
In summary, estimates of $\mw_*$ using either orthogonal
Procrustes or point set registration will yield a transformation
$\tilde\mx_1$ of $\hat\mx_1$ whose rows are {\em
  uniformly} close to the rows of $\hat\mx_2$. Thus, for the optimization
problem in Eq.~\eqref{eq:objective}, the costs
$\hat{c}_{ij} = \|(\tilde\mx_1)_i-(\hat\mx_2)_j\|$ will be
close to $c_{ij} = \|X_i - X_{j}\|$, i.e.,
\begin{equation}
  \label{eq:bound_main}
\max_{i \not = j} |\hat{c}_{ij} - c_{ij}| = O_{P}(n^{-1/2}), \quad
\max_{i} |\hat{c}_{ii} - c_{ii}| = (1 - \rho)^{1/2} O_{P}(n^{-1/2}).
\end{equation}

We now consider the implications of Theorem~\ref{thm:main} on the
solution of the 
optimization problem in Eq.~\eqref{eq:objective}. Suppose first that $\lambda =
0$ so that the optimization problem in Eq.\eqref{eq:objective} reduces
to a linear programming. Suppose also that $c_{ij} > 0$ whenever $i \not = j$ and $c_{ii} = 0$,
i.e., the latent positions $\{X_i\}$ are unique. Then as $n \rightarrow \infty$, by the above
bounds for $|\hat{c}_{ij} - c_{ij}|$, we have
$\hat\md_{i,i}=n$ for all $1\leqslant
i\leqslant n$ and hence, for any vertex of interest $x$ in $G_1$, our algorithm will give the
nomination list with the true $\sigma(x)$ at the top of the list. 



We next consider the case where $\lambda > 0$. Define $\mathbf{C}$ and $\hat{\mathbf{C}}$
as the $n \times n$ matrices whose elements are $c_{ij}$ and
$\hat{c}_{ij}$ respectively.  Let
$$\mathcal{D}=\{\md:\md\in\real^{n\times
  n}_+,\md1_n=(n,\cdots,n)^\top,\md^\top
1_n=(n,\cdots,n)^\top\}.$$ The optimization problem in
Eq.~\eqref{eq:objective} is then equivalent to
$$\argmin_{\mathbf{D} \in \mathcal{D}} \langle \hat{\mathbf{C}}, \mathbf{D}
\rangle + \lambda \|\mathbf{D}\|_{F}^2 = \argmin_{\mathbf{D} \in
  \mathcal{D}} \|\mathbf{D} + \frac{1}{2 \lambda}
\hat\mc\|_{F}^2, \quad \lambda > 0.$$
Let $\hat{\mathbf{D}}_{\lambda}$ be the {\em unique} solution of the above
problem. Then $\hat{\mathbf{D}}_{\lambda}$ is the {\em projection} of $-\tfrac{1}{2
  \lambda} \hat{\mathbf{C}}$ onto the convex set $\mathcal{D}$. Now consider
the solution of Eq.~\eqref{eq:objective} where we replaced
$\hat{\mathbf{C}}$ by $\mathbf{C}$ and denote that unique solution as $\mathbf{D}_{\lambda}$.
From Eq.~(\ref{eq:bound_main}) we have $\|\hat\mc-\mc\|_F^2=
\sum_{ij} (\hat{c}_{ij} - c_{ij})^{2} = O_p(n)$ and hence
$\|\hat\mc-\mc\|_F = O_{P}(n^{1/2})$. 
Next recall that the projection in Frobenius norm onto convex sets is
$1$-Lipschitz. We therefore have
$$\|\mathbf{D}_{\lambda} - \hat{\mathbf{D}}_{\lambda} \|_{F} \leq \frac{1}{2 \lambda}
\|\hat{\mathbf{C}} - \mathbf{C}\|_{F} = \frac{1}{2 \lambda}\cdot O_{P}(n^{1/2}).$$
Since $\|\mathbf{D}\|_{F} \geq n$ for all $\mathbf{D} \in
\mathcal{D}$, we see that $\|\hat{\mathbf{D}}_{\lambda} -
\mathbf{D}_{\lambda}\|_{F} = o(\|\mathbf{D}_{\lambda}\|)$ for all
$\lambda \gg n^{-1/2}.$ In other words, provided that $\lambda$  is
not too small, the solutions of
Eq.~\eqref{eq:objective} using the true cost matrix $\mathbf{C}$ and
using the estimated cost matrix $\hat{\mathbf{C}}$ are close, i.e.,
the relative error between $\mathbf{D}_{\lambda}$ and $\hat{\mathbf{D}}_{\lambda}$
could be made arbitrarily small for sufficiently large $n$. 



Finally, we consider how the solution $\mathbf{D}_{\lambda}$ using the
true cost matrix $\mathbf{C}$ will look like as $\lambda \rightarrow 0$. Let
$P_{0}$ be the linear programming problem 
$\min_{\mathbf{D} \in \mathcal{D}} \langle \mathbf{C}, \mathbf{D} \rangle$.
Suppose $n$ is now fixed. Then if $\lambda \rightarrow 0$,
$\mathbf{D}_{\lambda}$ will converge to the optimal solution with {\em minimum Frobenius norm} among the set of all
optimal solutions of $P_0$, i.e., letting $\xi_*$ be the minimum
objective value of $P_0$, we have
$$\mathbf{D}_{\lambda} \longrightarrow \argmin_{\mathbf{D} \in
  \mathcal{D}} \{ \|\mathbf{D}\|_{F} \colon \langle \mathbf{C}, \mathbf{D} \rangle
= \xi_*\}.$$
We note that there could be multiple solutions of $\{\mathbf{D} \in \mathcal{D} \colon \langle
\mathbf{C}, \mathbf{D} \rangle = \xi_*\}$. Nevertheless, in the event
that $P_0$ has a {\em unique} minimizer, then since $c_{ii} = 0$ for
all $i$, this unique minimizer will be given by $\mathbf{D}_* =
\mathrm{diag}(n,n,\dots,n)$. Therefore, by the continuity of the optimization problem, there
exists a $\lambda > 0$ such that $\mathbf{D}_{\lambda} =
\mathbf{D}_*$. The previous bound for $\|\hat{\mathbf{D}}_{\lambda} -
\mathbf{D}_{\lambda}\|_{F}$ thus suggests that
$\hat{\mathbf{D}}_{\lambda} \rightarrow \mathbf{D}_*$ as $\lambda
\rightarrow 0$. A precise statement of this result, however, requires
a more detailed analysis of the relationship between $\lambda$ and
$n$. Indeed, the relative error bound for
$\|\hat{\mathbf{D}}_{\lambda} - \mathbf{D}_{\lambda}\|_{F}$ currently requires
$n$ sufficiently large and $\lambda \gg n^{-1/2}$ while the convergence of $\mathbf{D}_{\lambda}$
to $\mathbf{D}_*$ currently requires $\lambda \rightarrow 0$ with $n$
fixed. We leave this analysis for future work. Finally, 
if we can assume that for sufficiently large $n$ we also have $\min_{i\neq j} c_{ij} = \omega(n^{-1/2})$, then with high probability $\hat{\mathbf{D}}_0 =
\mathrm{diag}(n,n,\dots,n)$ and we exactly recover the true
correspondence for all vertices. The condition $\min_{i\neq j} c_{ij} =
\omega(n^{-1/2})$ is likely to be too restrictive; we
can relax this condition by assuming that the indices
$\{1,2,\dots,n\}$ can be partitioned in to $K$ distinct groups such
that $c_{i j'}>c_{ij}+ \omega(n^{-1/2})  $ for all triplets $(i,j,j')$ where $i$ and $j$
are in the same group and $i$ and $j'$ are in different groups.  
Then under this
milder condition, we can guarantee that with high probability,
$\hat{\mathbf{D}}_0$ satisfies 
$\hat{\mathbf{D}}_0(i,j) = 0$ whenever $i$ and $j$ are in different
groups. 
As a special case if $G_1,G_2$ are
$\rho-$correlated stochastic block model graphs then $\md_0$ will be
block-diagonal and hence,  for sufficiently large
$n$, $\hat{\md}_0$ will also be block-diagonal with high
probability. Thus, for $\rho$-correlated stochastic block models, our
algorithm will generally assign each vertex $u \in G_1$ to another
vertex $v \in G_2$ from the same block as $u$. 

We summarize the above discussion in the following result.
\begin{proposition} 
\label{prop1}
Let $P_{\lambda}$ and $\hat{P}_{\lambda}$ be the optimization
  problem in Eq.~\eqref{eq:objective} with cost
  matrices $\mathbf{C} = (\|X_{i} - X_{j}\|)$
  and $\hat{\mathbf{C}} = (\|\hat{\mathbf{W}} (\tilde\mx_1)_{i} -
  (\hat\mx_2)_{j}\|)$, respectively. Then for sufficiently large $n$, 
  $$\frac{\|\mathbf{D}_{\lambda} -
    \hat{\mathbf{D}}_{\lambda}\|_{F}}{\|\mathbf{D}_{\lambda}\|_{F}} = \frac{1}{ \lambda}\cdot O_p(n^{-1/2}).$$
  For a fixed $n$, as $\lambda \rightarrow 0$, we have
  $$\mathbf{D}_{\lambda} \longrightarrow \argmin_{\mathbf{D} \in
  \mathcal{D}} \{ \|\mathbf{D}\|_{F} \colon \langle \mathbf{C}, \mathbf{D} \rangle
= \xi_*\},$$
where $\xi_*$ is the minimum value achieved in $P_0$. 
Furthermore,
suppose that $\lambda = 0$ and, for sufficiently large $n$ the vertices of $G_1$
and $G_2$ can be partitioned into $K$ distinct groups such that
$c_{ij}< c_{ij'}$ with $c_{ij'} - c_{ij} = \omega(n^{-1/2})$ for all triplets $(i,j,j')$ with $i$ and $j$ being
the same group and $i$ and $j'$ being in different groups. 
Then with high
probability $\hat{\mathbf{D}}_0$ is a block diagonal matrix, i.e.,
$\hat{\mathbf{D}}_0(i,j) = 0$ whenever $i$ and $j$ are in different
groups. 
%
\end{proposition}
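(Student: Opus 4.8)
The plan is to prove the three displayed claims in turn, in each case exploiting the perturbation estimate \eqref{eq:bound_main} and the projection reformulation of \eqref{eq:objective}.

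\textbf{Relative error bound.} For fixed $\lambda>0$, completing the square exhibits $\hat{\mathbf{D}}_\lambda$ as the Euclidean projection of $-\tfrac{1}{2\lambda}\hat{\mathbf{C}}$ onto the closed, bounded, convex polytope $\mathcal{D}$, and likewise $\mathbf{D}_\lambda$ as the projection of $-\tfrac{1}{2\lambda}\mathbf{C}$. Since projection onto a convex set is $1$-Lipschitz in Frobenius norm, $\|\hat{\mathbf{D}}_\lambda-\mathbf{D}_\lambda\|_F\le\tfrac{1}{2\lambda}\|\hat{\mathbf{C}}-\mathbf{C}\|_F$. By \eqref{eq:bound_main}, $\|\hat{\mathbf{C}}-\mathbf{C}\|_F^2=\sum_{i\ne j}(\hat c_{ij}-c_{ij})^2+\sum_i(\hat c_{ii}-c_{ii})^2 = n^2\cdot O_p(n^{-1})+n\cdot O_p(n^{-1})=O_p(n)$, so $\|\hat{\mathbf{C}}-\mathbf{C}\|_F=O_p(n^{1/2})$. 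Finally, every $\mathbf{D}\in\mathcal{D}$ has $\|\mathbf{D}\|_1=n^2$ and hence $\|\mathbf{D}\|_F\ge n^{-1}\|\mathbf{D}\|_1=n$; dividing gives $\|\mathbf{D}_\lambda-\hat{\mathbf{D}}_\lambda\|_F/\|\mathbf{D}_\lambda\|_F\le\tfrac{1}{2\lambda n}\,O_p(n^{1/2})=\tfrac{1}{\lambda}\,O_p(n^{-1/2})$.

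\textbf{Limit as $\lambda\to0$ for fixed $n$.} This is the classical statement that quadratic regularisation selects the minimum-Frobenius-norm optimiser of the limiting linear program $P_0$. Since $\mathcal{D}$ is compact, $\{\mathbf{D}_\lambda\}$ has limit points as $\lambda\to0^+$; let $\bar{\mathbf{D}}=\lim_k\mathbf{D}_{\lambda_k}$ be one of them. Testing optimality of $\mathbf{D}_{\lambda_k}$ against an arbitrary solution $\mathbf{D}'$ of $P_0$ gives $\langle\mathbf{C},\mathbf{D}_{\lambda_k}\rangle\le\xi_*+\lambda_k(\|\mathbf{D}'\|_F^2-\|\mathbf{D}_{\lambda_k}\|_F^2)$; as $\|\cdot\|_F^2$ is bounded on $\mathcal{D}$, letting $k\to\infty$ yields $\langle\mathbf{C},\bar{\mathbf{D}}\rangle\le\xi_*$, so $\bar{\mathbf{D}}$ is optimal for $P_0$. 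Testing instead against $\mathbf{D}_0^\star:=\argmin\{\|\mathbf{D}\|_F:\mathbf{D}\in\mathcal{D},\ \langle\mathbf{C},\mathbf{D}\rangle=\xi_*\}$ — the unique minimiser, since $\|\cdot\|_F^2$ is strictly convex on the nonempty compact convex optimal face of $P_0$ — and using $\langle\mathbf{C},\mathbf{D}_{\lambda_k}\rangle\ge\xi_*$ gives $\|\mathbf{D}_{\lambda_k}\|_F\le\|\mathbf{D}_0^\star\|_F$, hence $\|\bar{\mathbf{D}}\|_F\le\|\mathbf{D}_0^\star\|_F$. Uniqueness forces $\bar{\mathbf{D}}=\mathbf{D}_0^\star$; since every limit point equals $\mathbf{D}_0^\star$ and $\mathcal{D}$ is compact, $\mathbf{D}_\lambda\to\mathbf{D}_0^\star$.

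\textbf{Block-diagonality when $\lambda=0$.} First transfer the separation from $\mathbf{C}$ to $\hat{\mathbf{C}}$: by \eqref{eq:bound_main}, $\max_{i,j}|\hat c_{ij}-c_{ij}|=O_p(n^{-1/2})$ (including the diagonal), so on an event $\mathcal{E}_n$ with $\mathbb{P}(\mathcal{E}_n)\to1$ and $n$ large, $\hat c_{ij'}-\hat c_{ij}\ge(c_{ij'}-c_{ij})-2\max_{i,j}|\hat c_{ij}-c_{ij}|>0$ for every triplet $(i,j,j')$ with $i,j$ in a common group and $j'$ in another, because the $\omega(n^{-1/2})$ gap dominates the $O_p(n^{-1/2})$ error; equivalently $\max_{j\in\mathrm{grp}(i)}\hat c_{ij}<\min_{j'\notin\mathrm{grp}(i)}\hat c_{ij'}$ for every $i$. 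Next, $\mathcal{D}$ is the set of $n$-scaled doubly stochastic matrices, so by Birkhoff--von Neumann the linear program $\min_{\mathbf{D}\in\mathcal{D}}\langle\hat{\mathbf{C}},\mathbf{D}\rangle$ has a permutation-matrix optimiser and $\hat{\mathbf{D}}_0$ is $n$ times a convex combination of \emph{cost-optimal} permutation matrices; it therefore suffices to show that on $\mathcal{E}_n$ every optimal assignment $\pi$ maps each group onto itself. If some $\pi$ does not, then chasing the ``misplaced'' vertices through the finitely many groups yields vertices $i_1,\dots,i_r$ in distinct groups with $\pi(i_m)$ in the group of $i_{m+1}$ (cyclically); replacing $\pi$ on $\{i_1,\dots,i_r\}$ by the cyclic shift $\pi'(i_{m+1})=\pi(i_m)$ makes each $i_m$ map within its own group, and the transferred separation gives $\hat c_{i_m,\pi'(i_m)}<\hat c_{i_m,\pi(i_m)}$ for every $m$, so $\pi'$ strictly improves the cost, contradicting optimality of $\pi$. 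Hence on $\mathcal{E}_n$ every optimal assignment, and therefore $\hat{\mathbf{D}}_0$, is block diagonal.

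I expect the third claim to be the main obstacle: one must carefully quantify the interplay between the $\omega(n^{-1/2})$ separation and the $O_p(n^{-1/2})$ estimation error so as to obtain a genuine high-probability event, and make the ``chase the misplacements'' cyclic-cancellation argument for the assignment problem precise. By contrast, the first claim is a one-line projection/Lipschitz estimate combined with \eqref{eq:bound_main}, and the second is the textbook analysis of the quadratic regularisation path.
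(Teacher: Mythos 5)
Your proposal is correct and follows essentially the same route as the paper for all three claims: the projection/Lipschitz estimate with $\|\hat{\mathbf{C}}-\mathbf{C}\|_F=O_p(n^{1/2})$ and $\|\mathbf{D}\|_F\ge n$ for the relative error bound, the Peyr\'{e}--Cuturi-style regularization-path argument for the $\lambda\to 0$ limit (your use of $\|\cdot\|_F^2$ in the optimality inequality is in fact more faithful to the objective than the paper's, and your uniqueness remark tightens subsequential convergence to full convergence), and Birkhoff--von Neumann plus an exchange argument for block-diagonality. The only substantive difference is that in the third part you perform a single group-level cyclic exchange where the paper splits each cross-group cycle of the permutation into within-group cycles; these are two implementations of the same idea, though your version leaves implicit the small flow-decomposition step showing that a cycle of misplaced groups must exist whenever the assignment is not group-preserving.
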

Note that the
simulation results in Section~\ref{sec:experiment} are more accurate
than what Proposition~\ref{prop1} suggests, i.e., the correlation structure
between the two graphs lead to better nomination than
just nominating vertices from the same block.






\subsection{Reranking based on likelihood}
\label{sec:reranking}
The algorithm in Section~\ref{sec:methodology} output a nomination
list $\ell(x)$ for the vertex of
interest. When the pair $(G_1, G_2)$ is an instance of a
$\rho$-correlated generalized random dot product graph, then for any
fixed $\rho > 0$, Theorem~\ref{thm:main} guarantees that $\sigma(x)$ is located at the
top of the nomination list $\ell(x)$, i.e., $\mathrm{rk}(\sigma(x))/n
\rightarrow 0$ as $n \rightarrow \infty$, and furthermore, if $\rho = 1$
then $\mathrm{rk}(\sigma(x)) = 1$ asymptotically almost surely. 

We now describe a procedure for refining the nomination list so that
$\mathrm{rk}(\sigma(x)) = 1$ even when $\rho < 1$, provided that we
have enough seed vertices. Let $x \in V_1$ be given and let
$v \in V_2$ be arbitrary. Then for any seed vertex $w \in S_1$ with
correspondence $\sigma(w) \in S_2$, we have, by the assumptions on the
$\rho$-correlation structure
\begin{gather*}
  \mathbb{P}(\ma_1(x,w) = 1, \ma_2(v, \sigma(w)) = 1) = \mpp(x,w)^2 +
  \rho \mpp(x,w)(1 - \mpp(x,w)),
  \quad \text{if }v = \sigma(x), \\
  \mathbb{P}(\ma_1(x,w) = 1, \ma_2(v,\sigma(w)) = 1)  = \mpp(x,w) \mpp(v, \sigma(w)),
  \quad \text{if }v \not = \sigma(x).
\end{gather*}
Let $C_{vw} = \ma_1(x,w) \ma_2(v, \sigma(w)) \in \{0,1\}$ and let
$p_{vw}(\rho)$ be
$$p_{vw}(\rho) = \mpp(x,w) \mpp(v, \sigma(w)) 
+ \rho \mpp(x,w)(1 - \mpp(x,w)).$$
Then the collection $\{C_{vw}\}_{w \in S_1}$ are {\em independent}
Bernoulli random variables with mean parameters $\{p_{vw}(\rho)\}_{w \in
  S_1}$ where $\rho \neq 0$, if and only if $v = \sigma(x)$. 
For a fixed $\rho \neq 0$, the likelihood of observing
$\{C_{vw}\}_{w \in S_1}$, assuming the edge probabilities $\{\mpp_{ij}\}_{i < j}$
are known, is then
\begin{equation*}
  \begin{split}
    \mathcal{L}(\rho; \{C_{vw}\}_{w \in S_1}) = \prod_{w \in S_1} (p_{vw}(\rho))^{c_{vw}}
(1 - p_{vw}(\rho))^{1 - c_{vw}}.
\end{split}
  \end{equation*}
Deciding between $v = \sigma(x)$ and $v \not = \sigma(x)$ is thus analogous
to test $\mathbb{H}_0 \colon \rho
= 0$ against $\mathbb{H}_A \colon \rho \not
= 0$.
For our problem, the edge probabilities $\{\mpp_{i,j}\}_{i < j}$ are
unknown. Nevertheless, Theorem~\ref{thm:main} guarantees that
$\{\mpp_{i,j}\}_{i < j}$ can be estimated uniformly well by
$\{\hat{\mpp}_{i,j}\}_{i < j}$. 
In summary, our procedure for refining the nomination list $\ell(x)$
is as follows.

\begin{itemize}
\item For every $v$ in top ranked part of $\ell(x)$, find $\hat{\rho}_v \in [-1,1]$ that
  maximizes the likelihood $\mathcal{L}(\rho; \{C_{vw}\}_{w \in
    S_1})$; here the true edge
    probabilities $\{\mpp_{i,j}\}$ defining $p_{vw}(\rho)$ are replaced by their estimates
    $\{\hat{\mpp}_{i,j}\}_{i < j}$. 
  \item Reorder these $v \in \ell(x)$ according to decreasing values of $|\hat\rho_v|$.
\end{itemize}


\section{Simulation and Real Data Experiments}
\label{sec:experiment}

We illustrate the performance of our algorithm by 
synthetic and real data experiments. For the synthetic data
experiments,
we generate two types of synthetic data which correspond to two special cases of $\rho$-correlated GRDPG model. 
For the real data
experiments, we use the high-school friendship data from
\cite{moreno2013network} and Microsoft Bing entity graph transitions data from \cite{agterberg2020vertex}.
We evaluate the performance of our algorithm using the following two criteria.

\begin{itemize}
	\item \textbf{Mean reciprocal rank (MRR)}
	
	The reciprocal rank (RR) of a nomination list
    $\ell(x)$ is a measure of how far down a ranked list
one must go to find the true corresponding vertex of interest
$\sigma(x)$, i.e., with a slight abuse of notation, $$\mathrm{RR}(x) = \mathrm{rk}(\sigma(x))^{-1}\in(0,1],$$
	where $\mathrm{rk}(\sigma(x))$ is the rank of $\sigma(x)$ in
    $\ell(x)$, with ties broken randomly. 
	For Monte Carlo experiments, we also consider the mean reciprocal
    rank, i.e., the reciprocal rank averaged over the Monte Carlo
    replicates; we denote this as $\mathrm{MRR}$.
    Larger values of $\mathrm{RR}$ or
    $\mathrm{MRR}$ indicate better performance.
	
	\item \textbf{Mean normalized rank (MNR)}
	
	The normalized rank (NR) of a nomination list $\ell(x)$ is another
    measure of the rank of $\sigma(x)$ in $\ell(x)$, and is defined as
	$$\mathrm{NR}(x)=\frac{\mathrm{rk}(\sigma(x))-1}{|V_2\setminus S_2|-1}\in[0,1],$$
	where $|V_2\setminus S_2|$ is the set of all possible {\em non-seed} candidates. Note
    that $\mathrm{NR}(x)=0$ if and only if $\mathrm{rk}(\sigma(x)) =
    1$. For Monte Carlo experiments we also consider the
    mean normalized rank (MNR). Smaller values of $\mathrm{NR}$
    or $\mathrm{MNR}$ indicate better performance.
\end{itemize}

\subsection{Simulation experiments}
\label{sec:simulation}

We consider two special cases of the $\rho$-correlated GRDPG
model. We first assumes that the matrix of edge
probabilities is positive semidefinite, i.e., that $q = 0$ in the
GRDPG model. We refer to this as the $\rho$-$\mathrm{RDPG}$ or
$\rho$-correlated random dot product graphs model.
The second assumes that the edge
probabilities matrix is that of the popular stochastic block model
graphs \cite{holland}.


In the case of
$\rho$-$\mathrm{RDPG}$, we generate pairs of graphs $(G_1, G_2)$ on $n =
300$ vertices where the latent positions $\{\mx_i\}$ are sampled
uniformly on the unit sphere in $\mathbb{R}^{3}$. We then choose,
uniformly at random, a vertex $x \in V_1$ and use our algorithm to
find a nomination list $\ell(x) \subset V_2$.
Recall that the edges of the graphs are pairwise correlated. This
correlation structure then yields a canonical notion of correspondence between the
vertices in $G_1$ and those in $G_2$. In
other words, given any vertex $v \in G_1$ with latent position $X_v$, the most ``interesting'' or
``similar'' vertex to $v$ in $G_2$ is simply the vertex $u \in G_2$ with
latent position $X_u = X_v$. We evaluate our
algorithm using the mean reciprocal rank (MRR) and the mean normalized
rank (MNR) calculated from $500$ Monte Carlo replicates. The results,
as a function of the correlation $\rho \in \{0,0.3,0.5,0.7,1\}$,
are presented in Figure~\ref{rho-RDPG}. The
embeddings of the graphs are aligned either via orthogonal Procrustes (see
Section~\ref{sec:alignment}) or via the adaptive point set registration procedure of \cite{myronenko2010point}.

The setup for $\rho$-$\mathrm{SBM}(L,b,\mb)$ is similar. We generate
pairs of graphs on $n = 300$ vertices with $L = 3$ blocks and
$100$ vertices in each block. The block probabilities
matrix is
$$\mb=\left[\begin{array}{lll}0.7 & 0.3 & 0.4 \\ 0.3 & 0.7 & 0.2
              \\ 0.4 & 0.2 & 0.7\end{array}\right].$$
The mean reciprocal rank (MRR) and the mean normalized
rank (MNR), calculated from $500$ Monte Carlo replicates, are given in Figure~\ref{rho-SBM}. 
 
For these two settings we choose $d=3$ for the adjacency spectral
embedding step, i.e., we embed the graphs into $\real^3$. We recall
that our algorithm requires alignment of these embeddings either via
orthogonal Procrustes or via an adaptive point set
registration. These two choices lead to two slightly different
quadratic program formulation. More specifically, as we are
embedding into $\mathbb{R}^{3}$, the orthogonal Procrustes procedure needs at least $3$ seed vertices to
align the embeddings. These seed vertices can then be incorporated
into the quadratic program in
Section~\ref{sec:quadratic}. In contrast, if the embeddings are aligned
using adaptive point set registration, then seed vertices are not
necessary and hence the quadratic program is formulated with no
seeds. When using orthogonal Procrustes, we also explore the impact of $K$, the number of seed
vertices. We find that increasing $K$ does improve our algorithm,
but that the improvement is not overly substantial in the regime where
$K$ is small. For example, in the $\rho$-$\mathrm{RDPG}$ setting with
$\rho=0.5$, increasing $K$ from $3$ to $9$ increases the MRR from
$0.28$ to $0.37$ for the orthogonal Procrustes step, thus, for simplicity of presentation, we fixed $K =6$ and show the results.

The mean reciprocal rank and mean normalized rank of our algorithm, as
a function of the correlation coefficient $\rho$, are presented in
Fig.~\ref{rho-RDPG} and Fig.~\ref{rho-SBM}. Our
algorithm is generally quite accurate. In particular, even when
$\rho=0$ our algorithm still performs substantially better than the
baseline. We also note that the performance of orthogonal Procrustes (using $K = 6$ seeds) and adaptive point set
registration (with no seeds) are similar, with the difference being even less
pronounced in the $\rho$-SBM setting. We posit that the more obvious community structure
in the SBM setting helps the adaptive
point set registration procedure to align the embeddings more
accurately, thereby reducing the need for seed vertices.
\begin{figure}[tp!]
\centering
\subfigure{\includegraphics[width=0.35\textwidth]{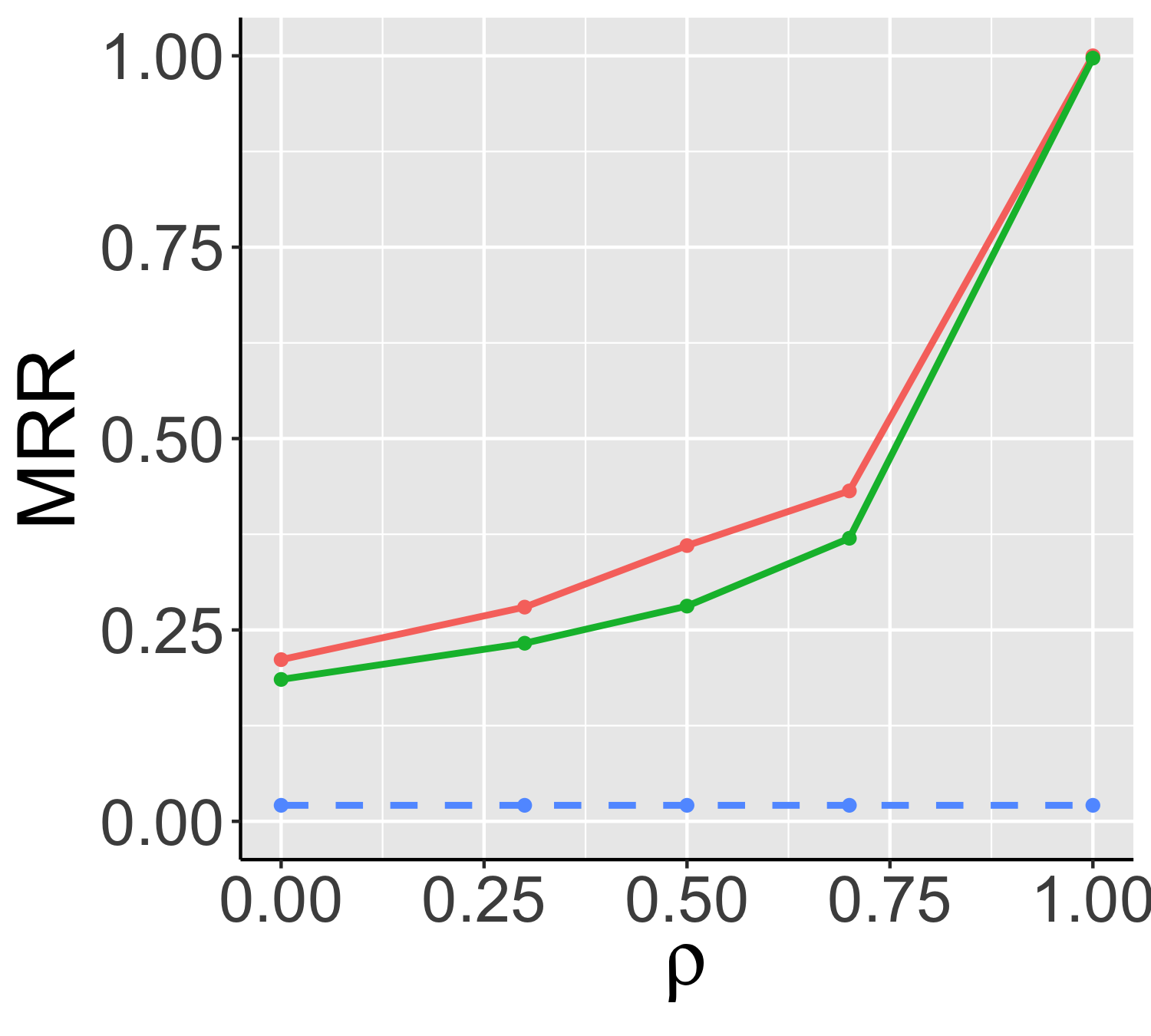}}
\subfigure{\includegraphics[width=0.35\textwidth]{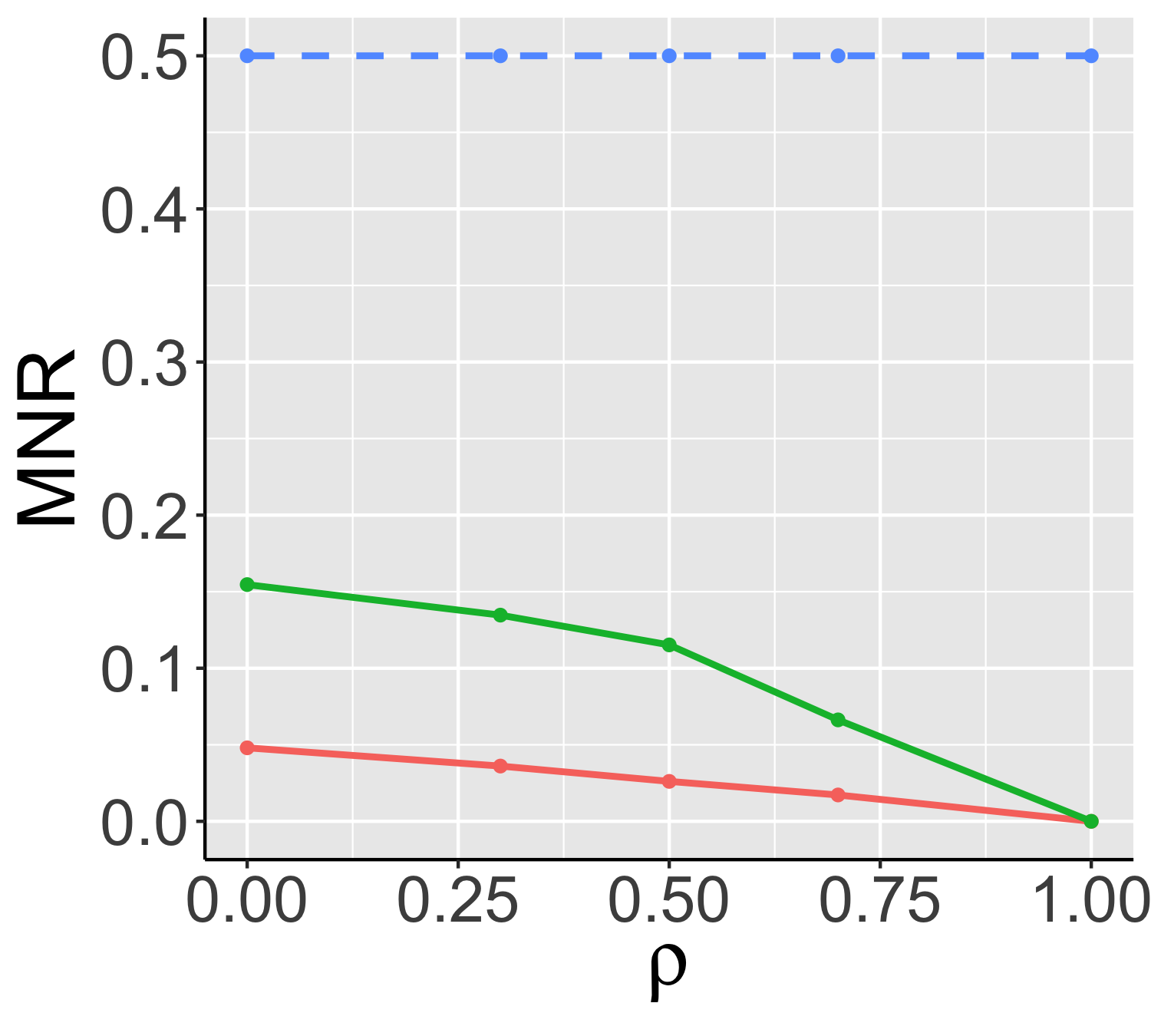}}
\caption{Performance of our algorithm for pairs of
  $\rho$-$\mathrm{RDPG}$ graphs on $n = 300$ vertices. The mean
  reciprocal rank (MRR) and mean normalized rank (MNR) are computed
  based on $500$ Monte Carlo replicates. The MRR and MNR are plotted
  for different values of the correlation coefficient $\rho$. The red
  and green lines
  correspond to the case where the graphs embeddings are aligned via orthogonal Procrustes
  and via the adaptive point set registration procedure,
  respectively.
  The dotted blue lines correspond to the baseline MRR and
  MNR for a nomination list chosen uniformly at random.}
\label{rho-RDPG}
\end{figure}

\begin{figure}[htbp]
\centering
\subfigure{\includegraphics[width=0.35\textwidth]{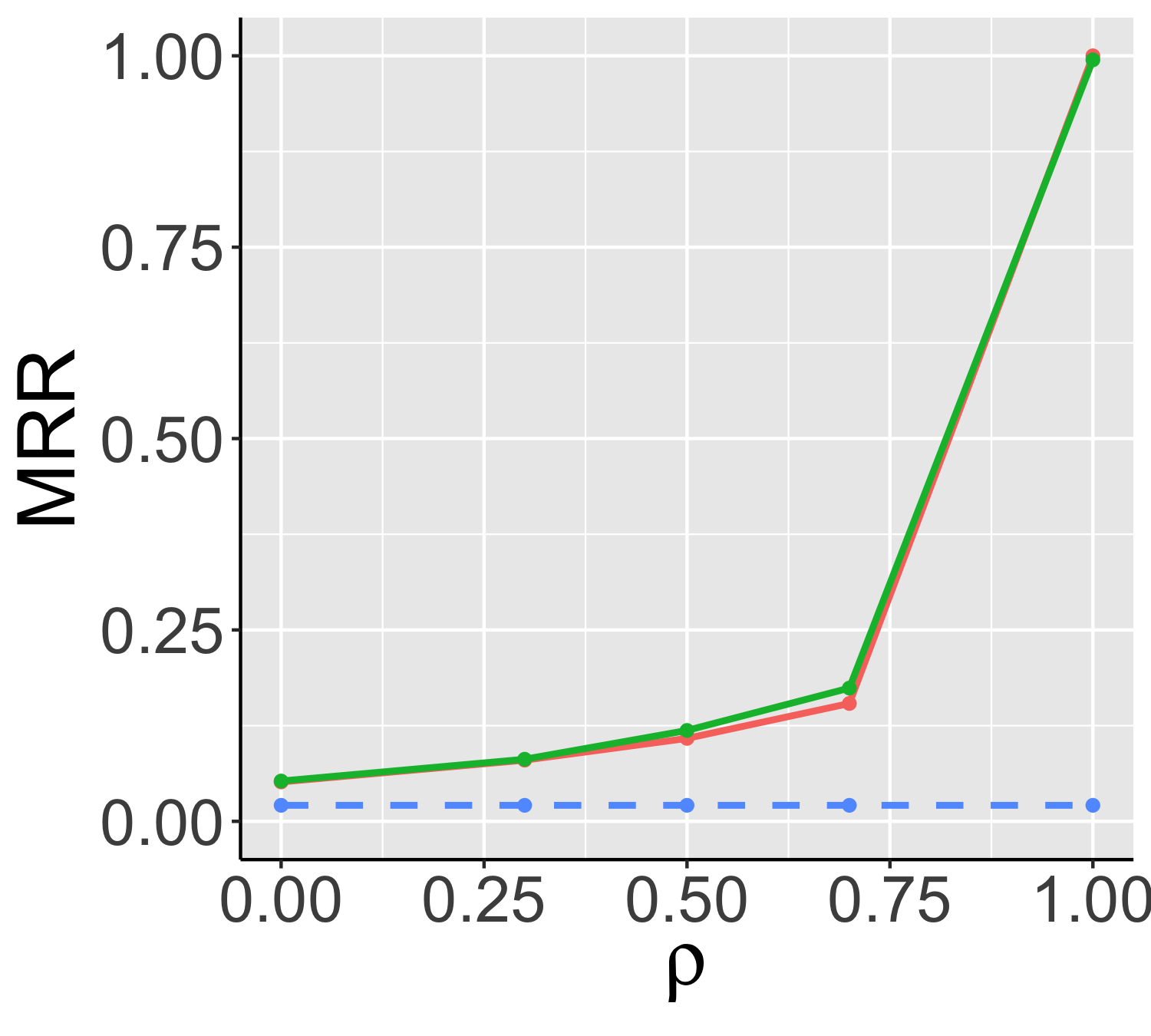}}
\subfigure{\includegraphics[width=0.35\textwidth]{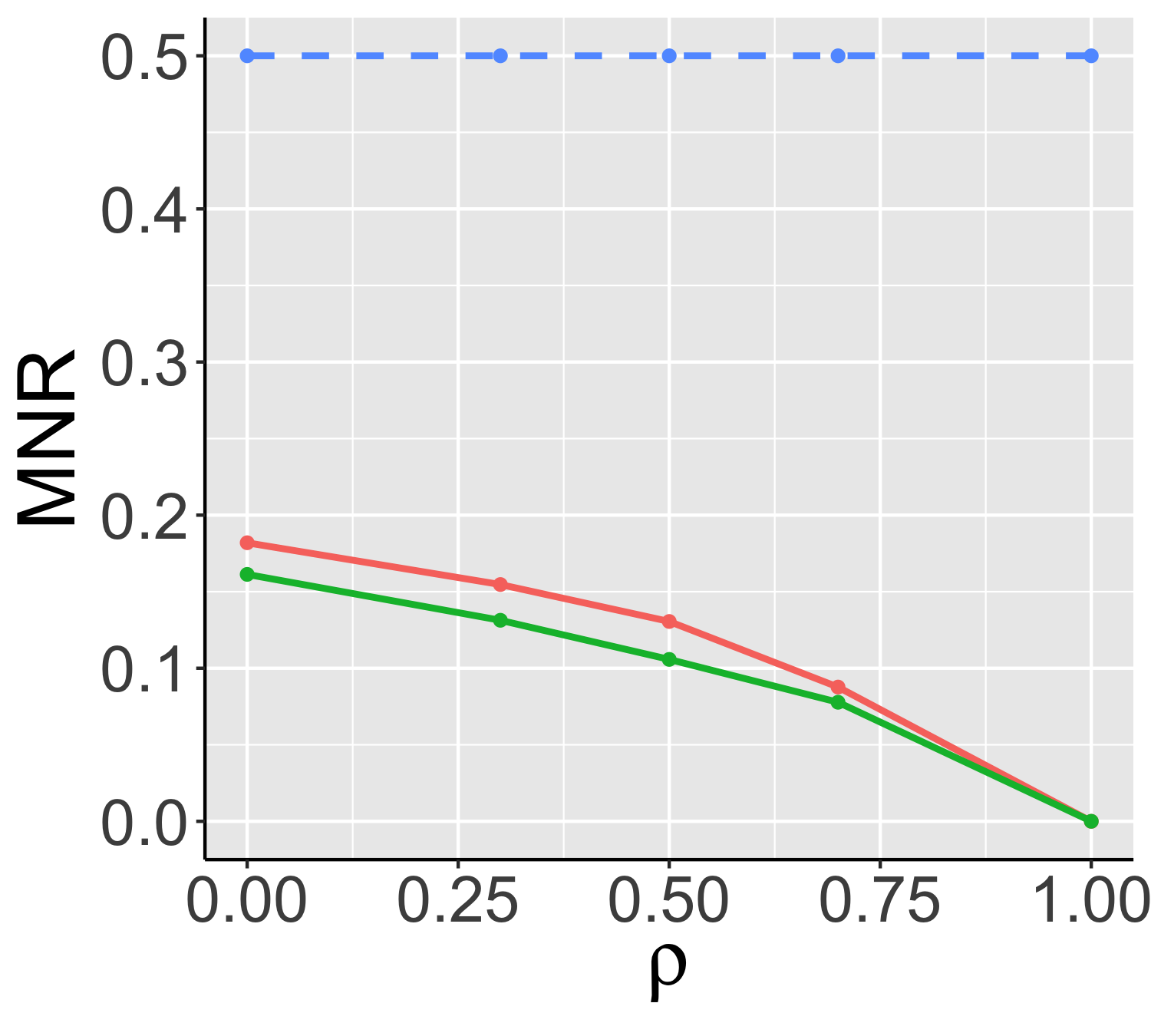}}
\caption{Performance of our algorithm for pairs of
  $\rho$-$\mathrm{SBM}$ graphs on $n = 300$ vertices. The mean
  reciprocal rank (MRR) and mean normalized rank (MNR) are computed
  based on $500$ Monte Carlo replicates. See the caption to
  Figure~\ref{rho-RDPG} for further descriptions of the various
  colored lines.}
\label{rho-SBM}
\end{figure}

We next explore how the reranking step in Section~\ref{sec:reranking}
can improve the performance of our algorithm, especially when there
are enough seed vertices. More specifically, we set $\rho = 0.7$ and vary the number of
seed vertices $K$ from $10$ to $50$. These seed vertices are
incorporated into both the quadratic program formulation and the
reranking step. We then compare the performance of our
algorithm with and without the reranking step. The MRR
averaged over $500$ Monte Carlo replicates are
presented in Fig.~\ref{likelihood_rho-RDPG} for the $\rho$-RDPG
setting and in Fig.~\ref{likelihood_rho-SBM} for the $\rho$-SBM
setting. These figures indicate that the reranking step leads to
significant improvement even for small values of $K$, e.g., compare
the MRR in the $\rho$-$\mathrm{SBM}$ setting with $K = 10$
seeds.

\begin{figure}[htbp]
\centering
\subfigure{\includegraphics[width=0.35\textwidth]{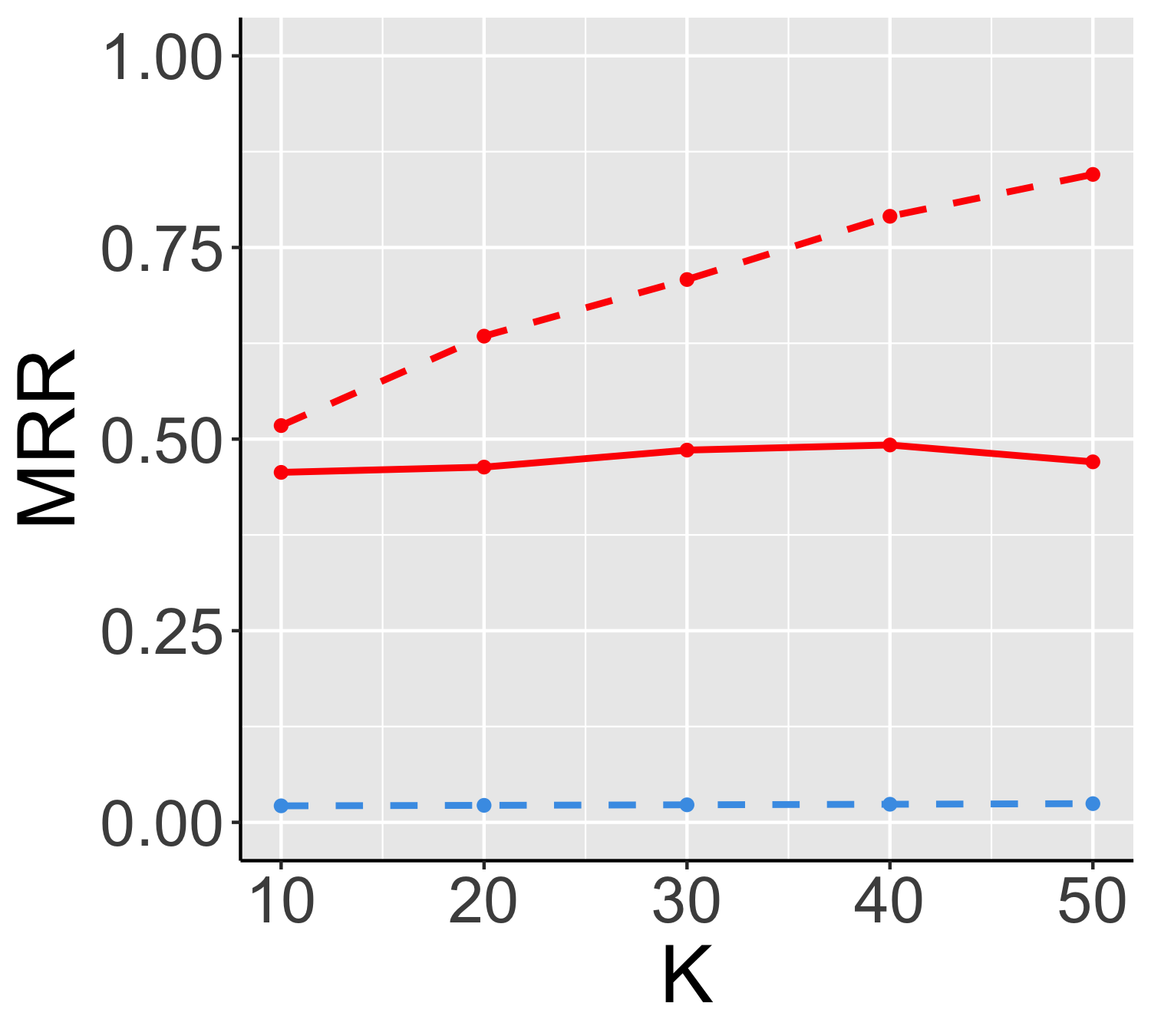}}
\subfigure{\includegraphics[width=0.35\textwidth]{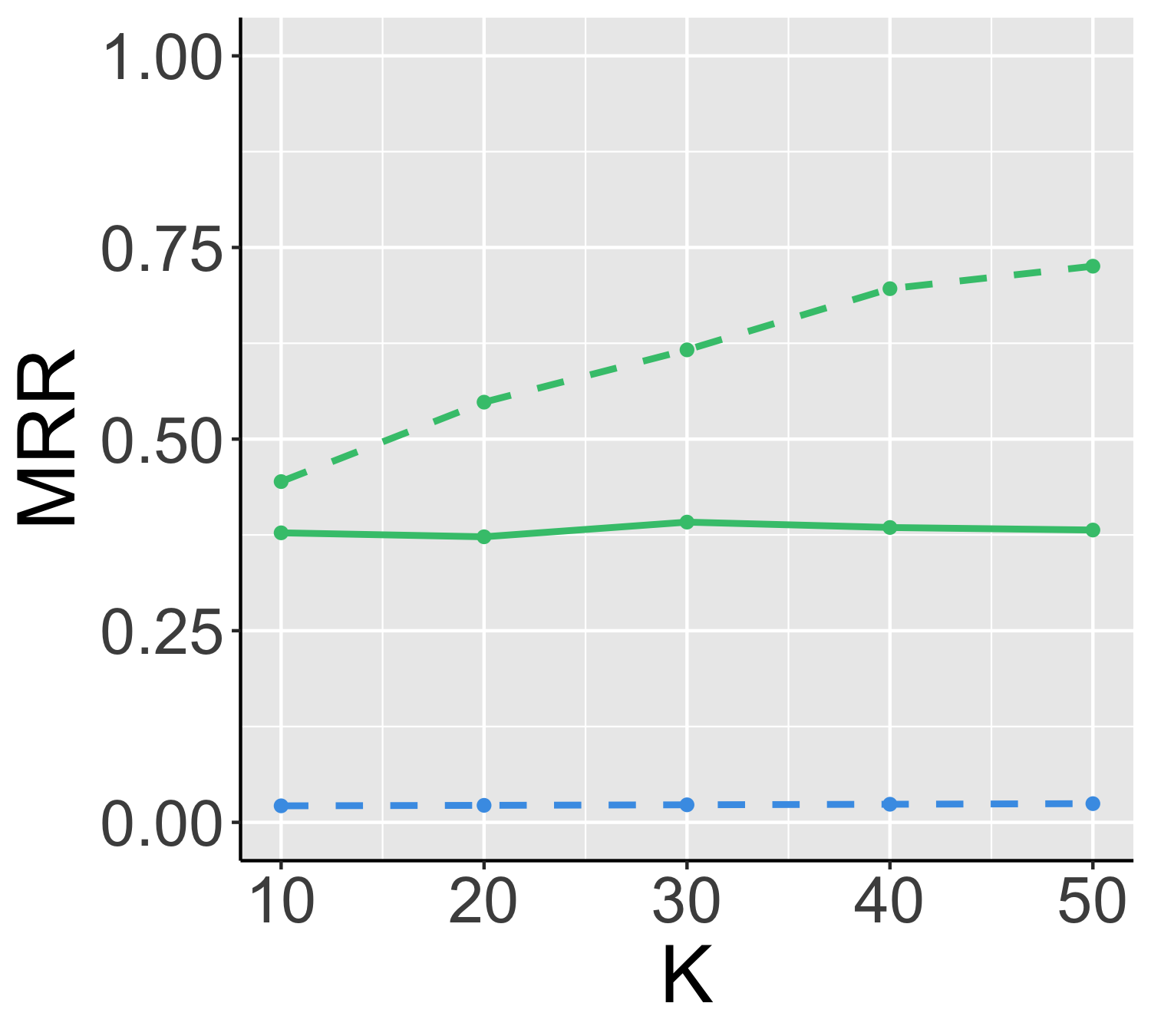}}
\caption{Performance of our algorithm with and without the reranking
step for pairs of $\rho\text{-RDPG}$ graphs on $n=300$ vertices and correlation $\rho = 0.7$. The
mean reciprocal rank (MRR) are computed based on $500$ Monte Carlo
replicates. The red
  and green lines
  correspond to the case where the graphs embeddings are aligned via orthogonal Procrustes
  and via the adaptive point set registration procedure,
  respectively. In each plot, the corresponding dashed red or green line describes the result after the reranking step. The dotted blue lines correspond to the baseline MRR for
a nomination list chosen uniformly at random.}
\label{likelihood_rho-RDPG}
\end{figure}

\begin{figure}[htbp]
\centering
\subfigure{\includegraphics[width=0.35\textwidth]{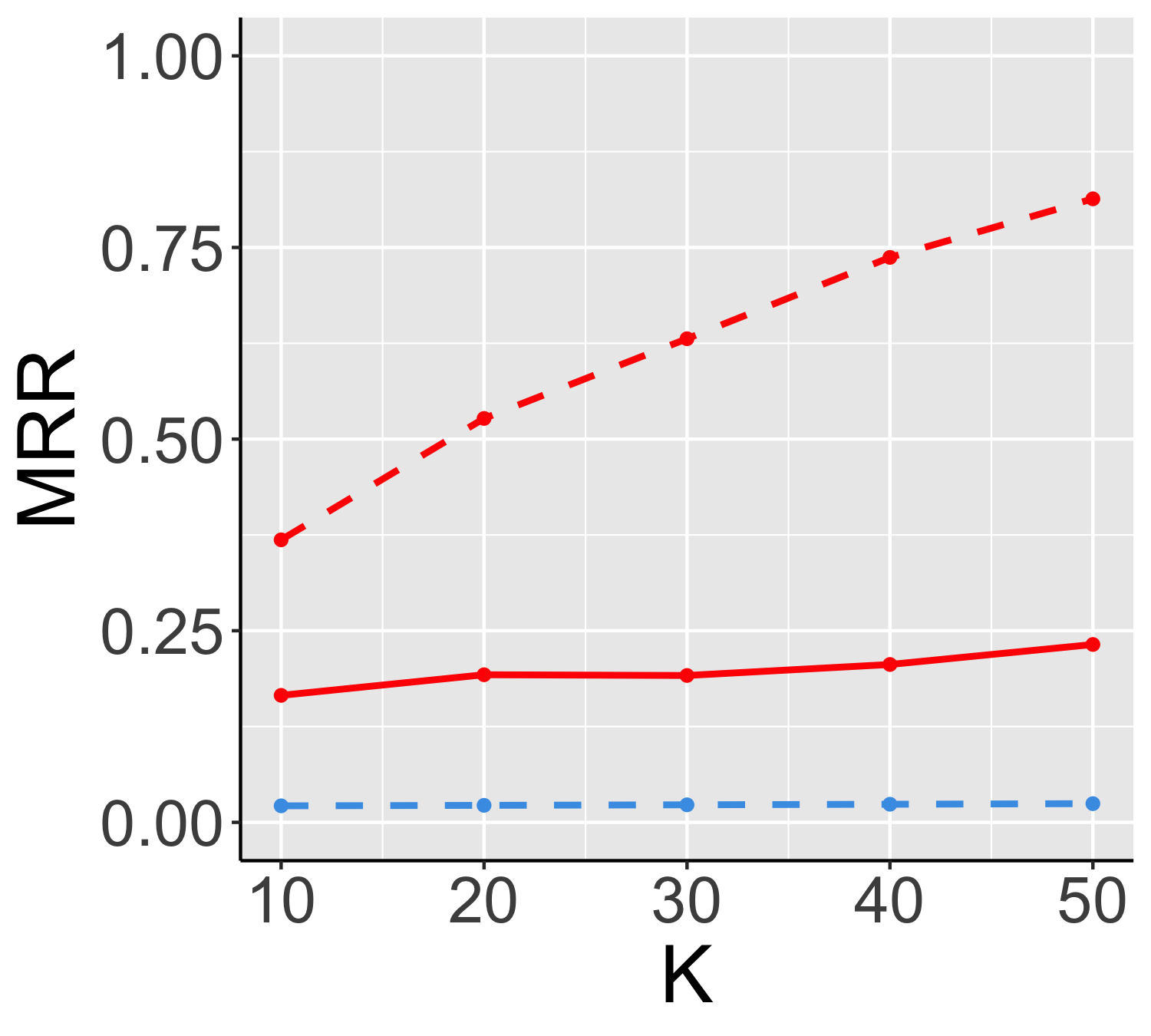}}
\subfigure{\includegraphics[width=0.35\textwidth]{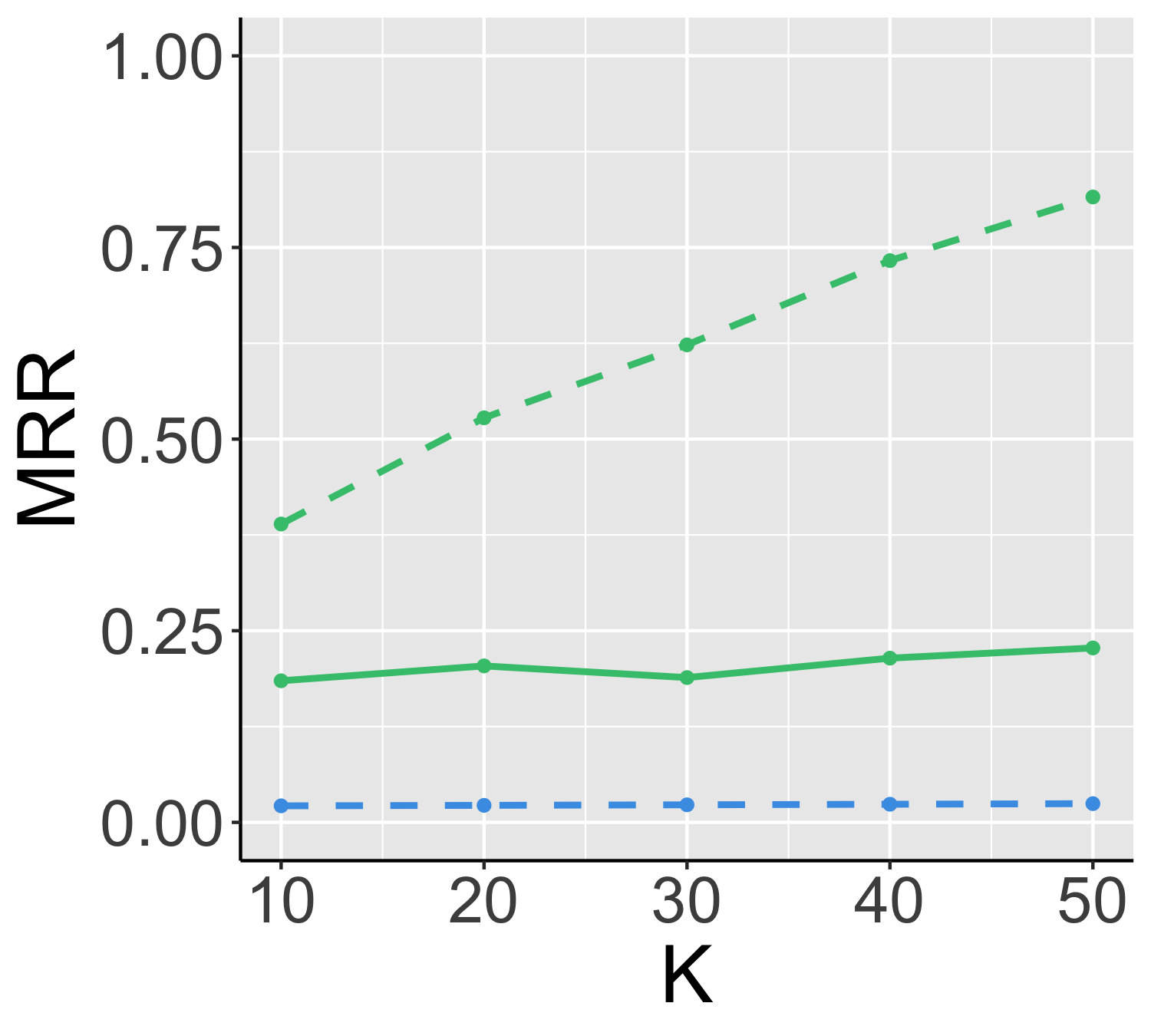}}
\caption{Performance of our algorithm with and without the reranking
  step for pairs of $\rho\text{-SBM}(\mx)$ graphs on $n=300$
  vertices and correlation $\rho = 0.7$. See the caption to
  Figure~\ref{likelihood_rho-RDPG} for further descriptions.}
\label{likelihood_rho-SBM}
\end{figure}

Appendix~\ref{sec:add res} contains additional simulation results illustrating how the choice of embedding dimension $d$, the sparsity parameter $\gamma$, and the penalty parameter $\lambda$ affects the performance of our algorithm. In particular, Figure~\ref{rho-RDPG_d} and Figure~\ref{rho-SBM_d} show that our algorithm is relatively robust to the choice of $d$ while Figure~\ref{rho-SBM_gamma}, Table~\ref{tab:RDPG_lambda} and Table~\ref{tab:SBM_lambda} show that our algorithm is also relatively robust to the choice of $\gamma$ and $\lambda$, respectively. In addition, we also include in Appendix~\ref{sec:add res} comparisons between our algorithm and the embedding followed by Gaussian mixture modeling algorithm of \cite{agterberg2020vertex}, and we see from Figure~\ref{rho-RDPG_compare} and Figure~\ref{rho-SBM_compare} that the accuracy of \cite{agterberg2020vertex}'s method and our algorithm with orthogonal Procrustes are very similar but we note the running time of \cite{agterberg2020vertex}'s method is roughly $6$ times slower than ours, and we see our algorithm with the adaptive point set registration procedure needs no seeds information but also has comparable accuracy.

\subsection{Real data experiments}

We now explore the practical application of our algorithms on real
data. In Section~\ref{sec:High School}, we consider a pair of
high-school friendship networks containing some of the same vertices
and in which we would like to identify the same individuals across the
two networks. In Section~\ref{sec:Bing}, we explore the graphs derived from
Microsoft Bing entity graph transitions.

\subsubsection{High school friendship networks}
\label{sec:High School}

We first focus on the high school friendship network data from
\cite{mastrandrea2015contact}. This dataset contains two observed
graphs and, for each graph, the vertices  represent
students and the edges represent their friendship. The
first graph is extracted from the Facebook social network, i.e., if two
individuals are friends on Facebook, then they are adjacent. The
second graph is created based on the result of a survey of the
students; for every pair of students, they are considered adjacent if
at least one of the students in this pair reports that they are friends with another student. There are $156$ vertices in the first graph, $134$
vertices in the second graph, and $82$ vertices shared between the two graphs.
These $82$ shared vertices will induce the notion of interestingness
for our subsequent analysis. In other words, given a query vertex $x$
in one graph, with $x$ being one of the $82$ shared vertices, we are interested in finding
the same vertex $x$ in the second graph. This application is thus
analogous to that of network deanoymization. 

As the number of unshared vertices is reasonably large, we consider
two experimental setups. In the first setup we used only the subgraphs
induced by the $82$ shared vertices while in the other setup we used
the full graphs on $156$ and $134$ vertices. For the adjacency
spectral embedding step we set $d=2$. Orthogonal Procrustes alignment
of the embeddings then requires at least $2$ seed vertices.


For the experiment using only the shared vertices we
iteratively consider each vertex as the vertex of interest. For each
vertex of interest we choose a pair of seed vertices, align the
embeddings using orthogonal Procrustes, and then solve a quadratic program to obtain a
nomination list (the seed vertices are not used in the quadratic program). We repeat this procedure $100$ times for each vertex
of interest, each time choosing a random pair of seed
vertices. Figure~\ref{highschool_shared_Algo1_count} then illustrates,
for each of the $82$ possible vertex of interest $x$, how often
$\mathrm{NR}(x) \in \{0,(0,0.2],(0.2,0.5],(0.5,1]\}$; the
mean normalized rank for a nomination list chosen uniformly at random is
$0.5$. Figure~\ref{highschool_shared_Algo1_count} indicates that the
nomination lists obtained by our algorithm are in general quite
accurate; indeed, the normalized rank values are small for most of the
nomination lists, with a significant portion of the nomination lists
even having normalized rank values of $0$, i.e., the true correspondence
of the vertex of interest is at the top of the nomination list.

We next consider the impact of increasing the number of seed vertices
$K$. For simplicity, we present our analysis for a randomly chosen
vertex of interest $x = 27$ as an example. Similar results hold for
other vertices. We vary $K$ from $2$ to $10$
and run 500 Monte Carlo replicates to compute the MNR. We tabulate
how often $\mathrm{NR}(x) \in \{0,(0,0.2],(0.2,0.5],(0.5,1]\}$ in Figure~\ref{highschool_shared_Algo1_single}. We see from Figure~\ref{highschool_shared_Algo1_single}
that $K = 7$ seed vertices is sufficient for the NR of the nomination
lists for $x = 27$ to be
between $0$ and $0.2$ always. 

Analogous results are available when we align the embeddings using
adaptive point set registration procedure. However, since adaptive
point set registration does not use any seed vertex, it lead to more
robust performance when compared to using
orthogonal Procrustes.
Finally, we note in passing that our algorithm is quite
computationally efficient, e.g., generating
Figure~\ref{highschool_shared_Algo1_count} takes us only about 7 minutes
on a normal laptop.

\begin{figure}[tp!]
\centering
\includegraphics[width=0.99\textwidth]{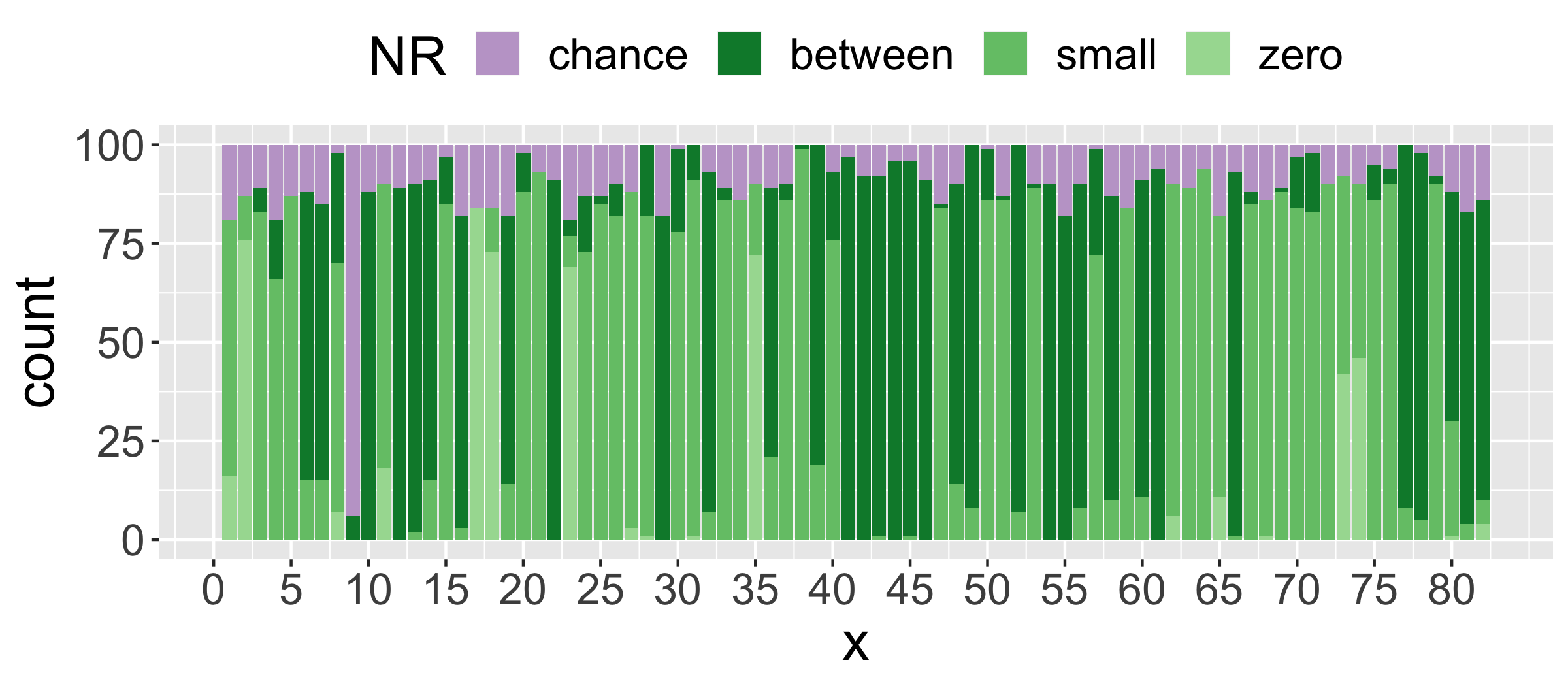}
\caption{Performance of our algorithm for vertex nomination between
the two high-school networks. Here we consider only the subgraphs
induced by the $82$ shared vertices.  The graphs embeddings are
aligned via orthogonal Procrustes transformation using two randomly
selected seeds; these seeds are only used for the alignment and are
not incorporated into the quadratic programming step. For each $x \in
V_1$ we repeat this random seeds selection $100$ times and record the
normalized rank of its correspondence $\sigma(x) \in V_2$. The four
categories correspond to the case when the normalized rank (NR) is equal to
$0$, lying between $0$ and $0.2$, lying between $0.2$ and $0.5$, or
larger than $0.5$.}

\label{highschool_shared_Algo1_count}
\end{figure}

\begin{figure}[htbp!]
\centering
\subfigure{\includegraphics[width=0.49\textwidth]{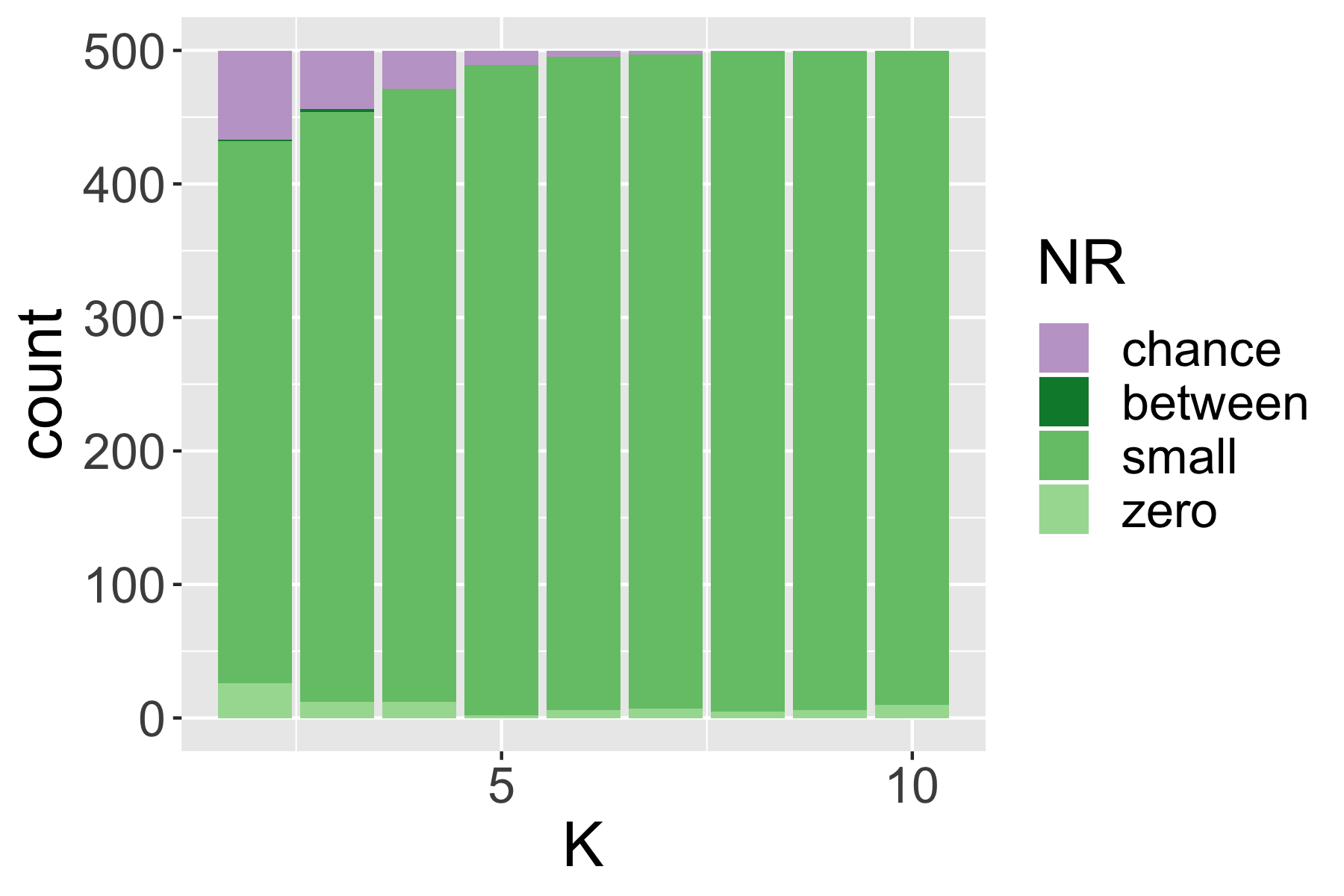}}
\subfigure{\includegraphics[width=0.49\textwidth]{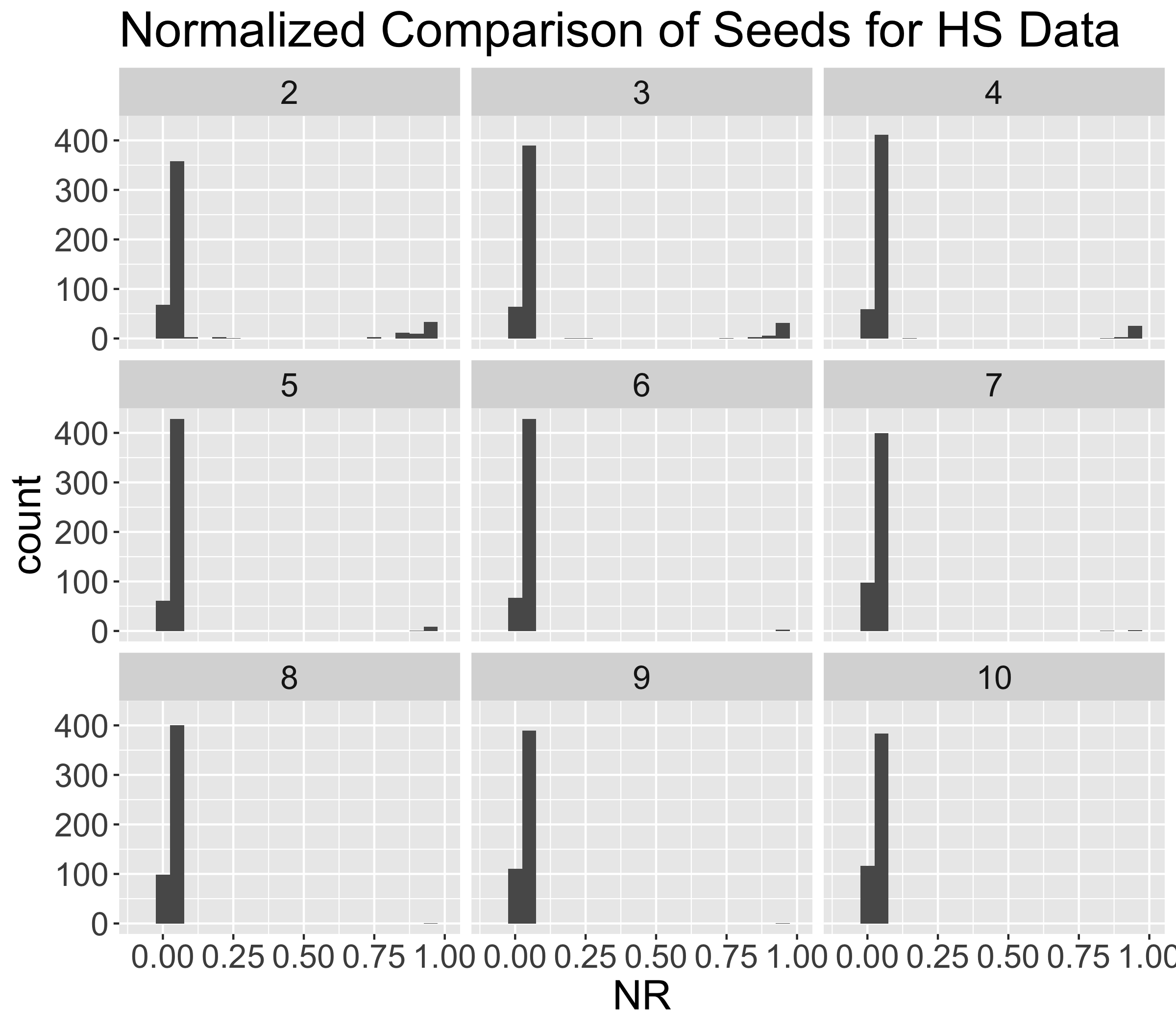}}
\caption{For applying the algorithm with orthogonal Procrustes to
subgraphs of high school network generated by shared vertices, using
$x=27$ as the vertex of interest, we vary the number of seed vertices
$K$ from $2$ to $10$, uniformly at random generate $500$ sets of seed
vertices and plot NR.}
\label{highschool_shared_Algo1_single}
\end{figure}

\begin{figure}[tbp]
\centering
\includegraphics[width=0.99\textwidth]{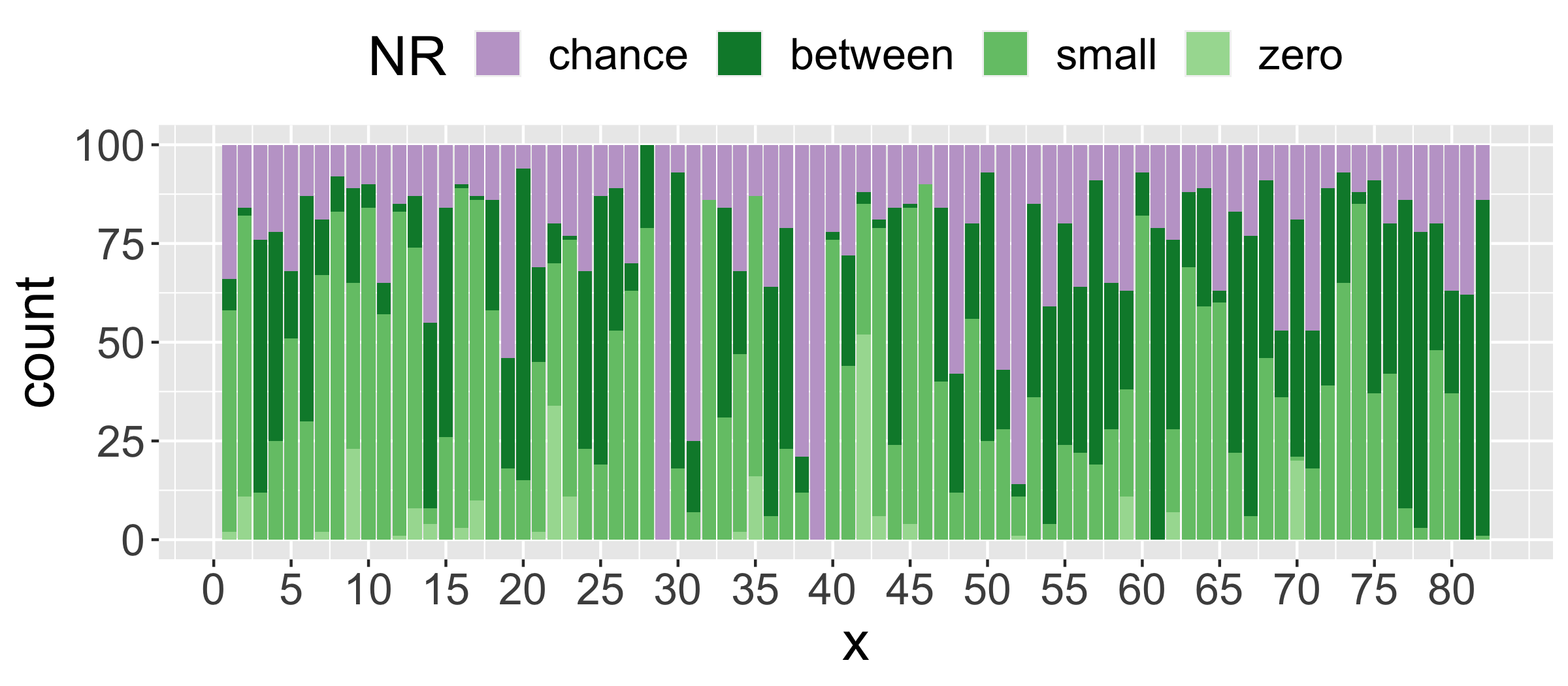}
\caption{Performance of our algorithm for vertex nomination between
the two high-school networks. Here we consider the graphs with full
vertices. The graphs embeddings are aligned via orthogonal Procrustes
transformation using two randomly selected seeds; these seeds are only
used for the alignment and are not incorporated into the quadratic
programming step. See the caption to
Figure~\ref{highschool_shared_Algo1_count} for further descriptions of
the experiment.}
\label{highschool_full_Algo1_count}
\end{figure}
We now consider the setup using the full graphs on $134$ and $156$
vertices. Once again we use orthogonal Procrustes to align the
embeddings. We then consider each of the $82$ shared
vertices as the vertex of interests $x$ and find the nomination list
$\ell(x)$ using the same procedure as that outlined above for the
setup using the induced subgraphs. Note that the main
difference between the current setup and that of the induced subgraphs
is that, for each vertex of interest, there are more candidate
vertices in the current setup; this make the task harder and hence the performance of our
algorithm is likely to be worse in the current setup. The experiment
results in Figure~\ref{highschool_full_Algo1_count}
confirmed this speculation. Indeed, comparing
Figure~\ref{highschool_shared_Algo1_count} and 
Figure~\ref{highschool_full_Algo1_count}, we see that the number of
times in which the obtained nomination list is no better than chance
increases. Nevertheless, our algorithm is still quite accurate since, for
almost all of the vertex of interests,
the true correspondences do appear frequently at the top of the nomination lists. 
\subsubsection{Microsoft Bing entity graph transitions}
\label{sec:Bing}
In this section, we consider graphs derived from one month of Bing
entity graph transitions. The dataset for this example is from
\cite{agterberg2020vertex} and contains two graphs on the \emph{same} set of
vertices; these vertices denote entities. The (weighted) edges in each graph
represent transition rates between the entities during an internet browsing
session, but the types of transitions differ between the two
graphs. More specifically, the edges in the first graph $G_1$ represents transitions that were made using a
suggestion interface while the transitions in the second graph $G_2$ were
made independently of any suggestion interface. As the suggestion interface can only
suggest a few entities at a time, the edges in $G_1$
are much more constrained than those in $G_2$. The first and
second graphs both have $13535$ vertices and approximately $5.2 \times 10^5$
and $5.9 \times 10^5$ edges, respectively. There is, once again, a
one-to-one correspondence between the vertices in both networks and we
use this correspondence to define our notion of interestingness, i.e.,
given a vertex $x$ in one graph, we are interested in finding the same
vertex in the other graph. 

For our first analysis we sub-sample the graphs and only consider the
subgraphs induced by the first $1000$ vertices. These induced
subgraphs are also unweighted, i.e., two vertices are adjacent in a
induced subgraph if the corresponding transition rate in the original
graph is non-zero. Denoting by $G_1$ and $G_2$ the resulting induced
subgraphs, $G_1$ and $G_2$ have $8365$ edges and $10247$ edges, respectively. We
emphasize that there is a 1-to-1 correspondence between the vertex
sets of $G_1$ and $G_2$.

We now explore the performance of our algorithm for vertex nomination
between $G_1$ and $G_2$. In particular, we sequentially consider each
vertex $x \in G_1$ as the vertex of interest, and for a given vertex
of interest we randomly select 10 other vertices as seeds. After
computing the NR for all vertices, we present the histogram of NR to
show the distribution.  The results are given in
Figure~\ref{Bing_first1000} for both the cases where the graph
embeddings are aligned via orthogonal Procrustes and via adaptive
point set registration. We emphasize that there are two variants of
adaptive rigid point set registration used here. In the first variant the
$10$ seed vertices are used in the quadratic programming formulation
while in the second variant the seed vertices are not used at all. 
Figure~\ref{Bing_first1000} indicates that the
normalized rank values are generally quite small and hence the
nomination lists returned by our algorithm are
accurate. Figure~\ref{Bing_first1000} also indicates that there is
almost no difference between using orthogonal Procrustes and using
adaptive point set registration and, more importantly, our
algorithm perform well even when there are no seeds information, i.e.,
the performance of adaptive point set registration
with no seeds is virtually identical to that of orthogonal Procrustes
and adaptive point set registrations with $10$ seeds. Indeed,
Table~\ref{tab:Bing1} summarizes the quantiles of the NR for different
variants of our algorithm and we see from these quantiles that the
performance of the three variants are virtually indistinguishable.
\begin{figure}[tp]
\centering
\subfigure{\includegraphics[width=0.3\textwidth]{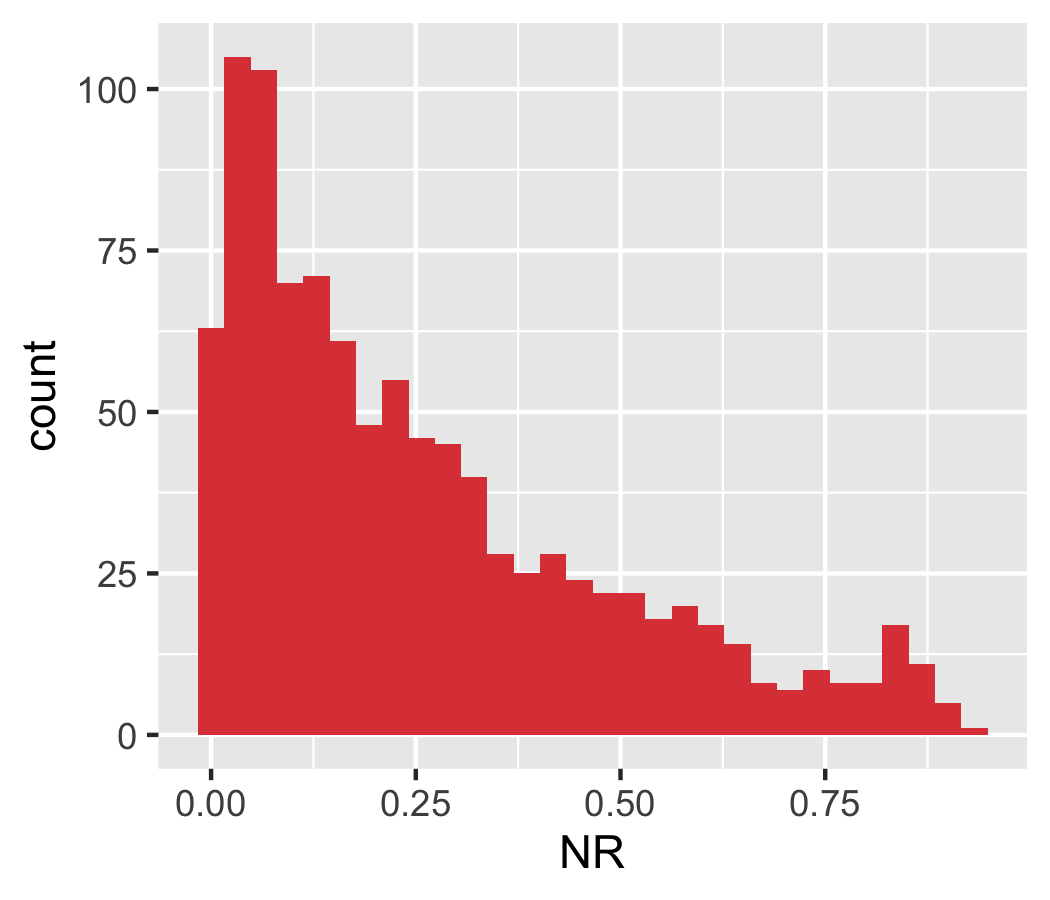}}
\subfigure{\includegraphics[width=0.3\textwidth]{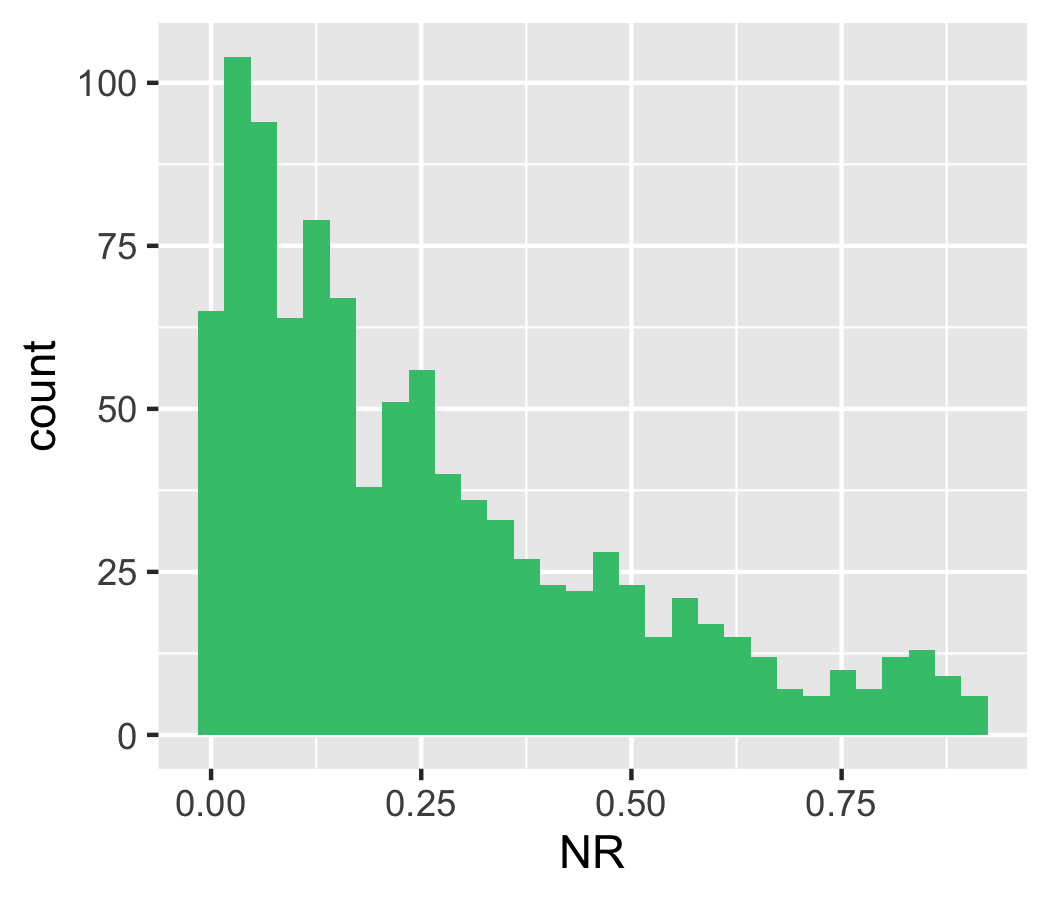}}
\subfigure{\includegraphics[width=0.3\textwidth]{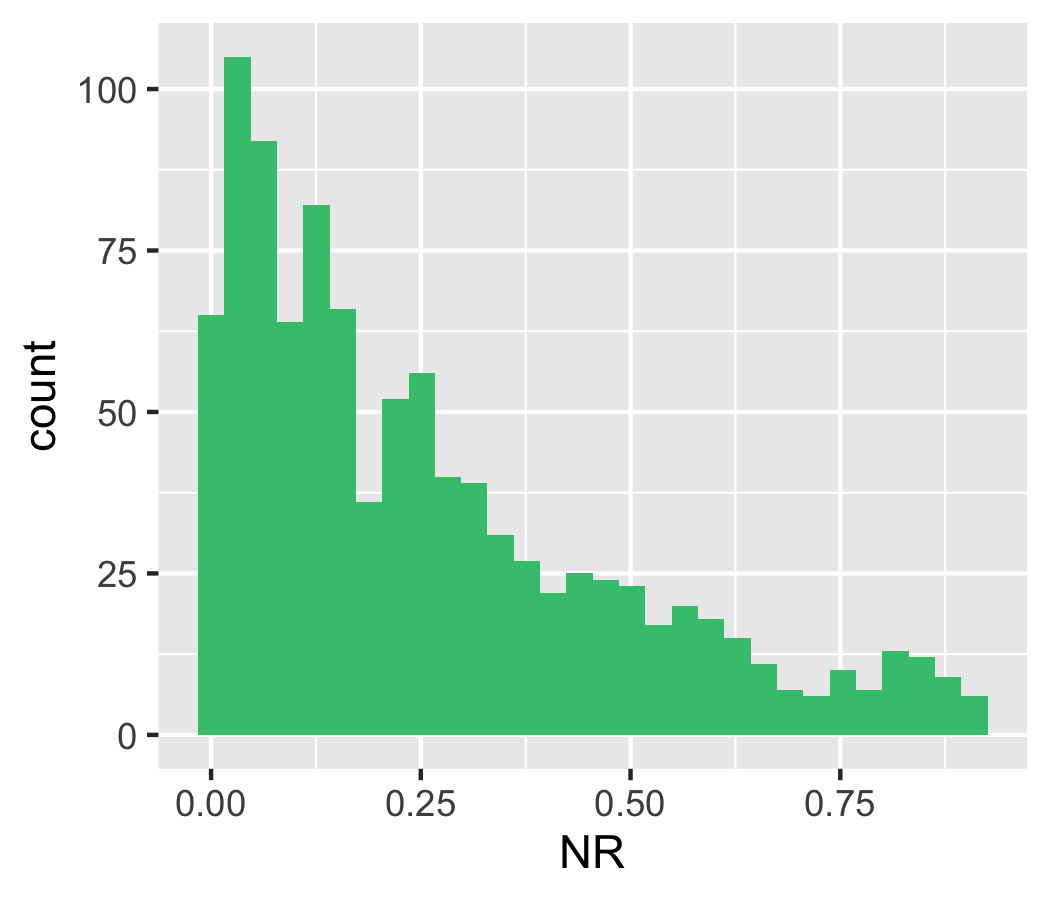}}
\caption{Performance of our algorithm for vertex nomination between
the two Microsoft Bing entities transition networks on $n=1000$
vertices.  For each $x\in V_1$ we randomly selected 10 seeds and
record the normalized rank (NR) of its correspondence $\sigma(x)\in
V_2$.  The red and green histogram of NR correspond to the case where
the graphs embeddings are aligned via orthogonal Procrustes and via
the adaptive point set registration procedure, respectively. The last
green figure corresponds to the case for adaptive point set
registration procedure without any seeds.}
\label{Bing_first1000}
\end{figure}
\begin{table}[pth]
\centering
\begin{tabular}{c|cccccccc}
 & 1\% & 5\%& 10\%&25\%& 50\%& 75\%&95\%& 99\% \\
 \hline
Procrustes (10 seeds) &  0.003 & 0.013 & 0.030 & 0.074 & 0.196 & 0.387 & 0.750 & 0.870   \\
set registration (10 seeds) &  0.002 & 0.012 & 0.025 & 0.073 & 0.196 & 0.387 & 0.757 & 0.877\\
set registration (no seeds) & 0.002 & 0.013 & 0.025 & 0.073 & 0.196 & 0.386 & 0.757 & 0.876 
\end{tabular}
\caption{Quantile levels of normalized rank (NR) values for vertex nomination
  with the Bing entity networks on $n = 1000$ vertices}
\label{tab:Bing1}
\end{table}
\begin{figure}[pth]
\centering
\subfigure{\includegraphics[width=0.4\textwidth]{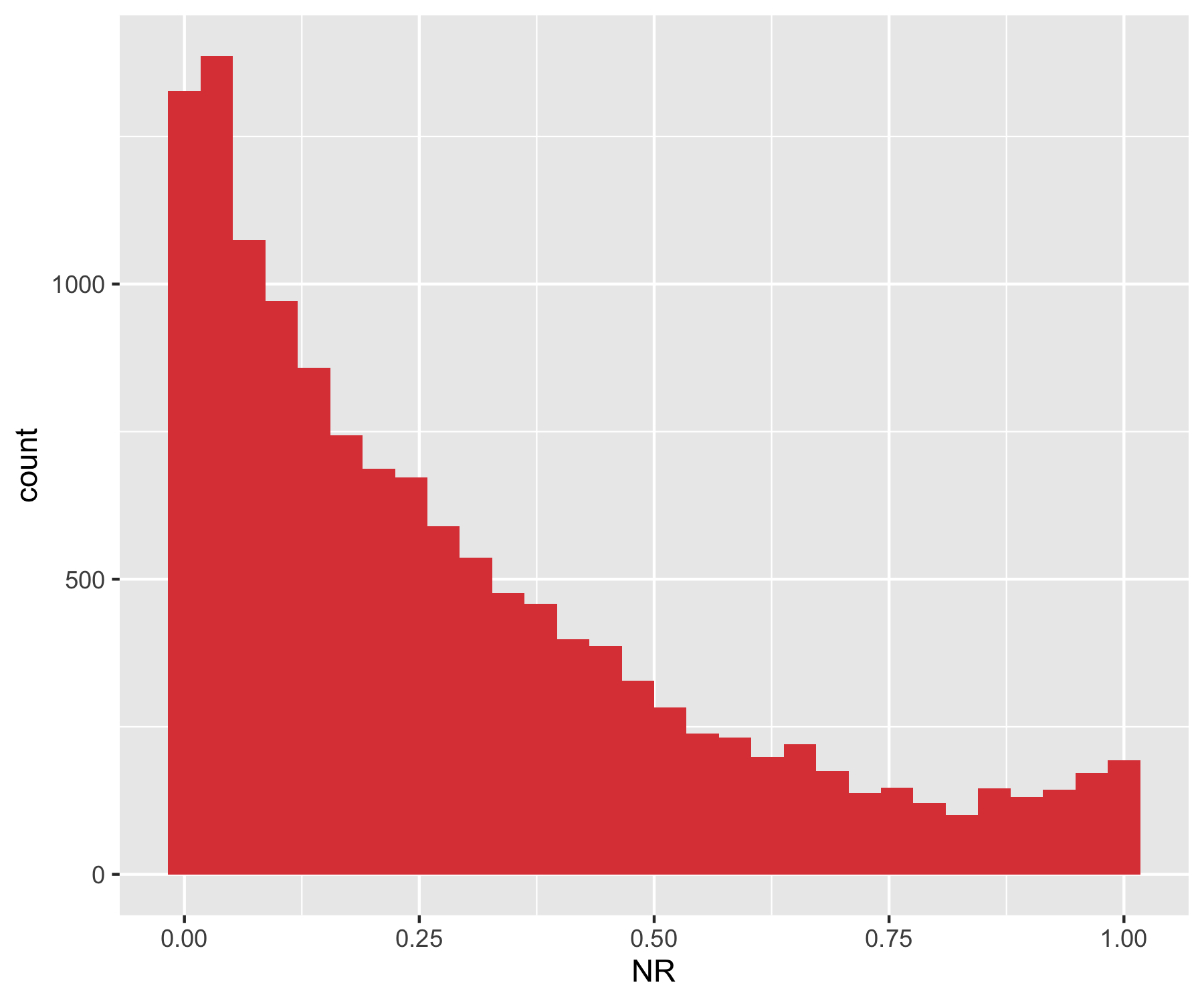}}
\subfigure{\includegraphics[width=0.4\textwidth]{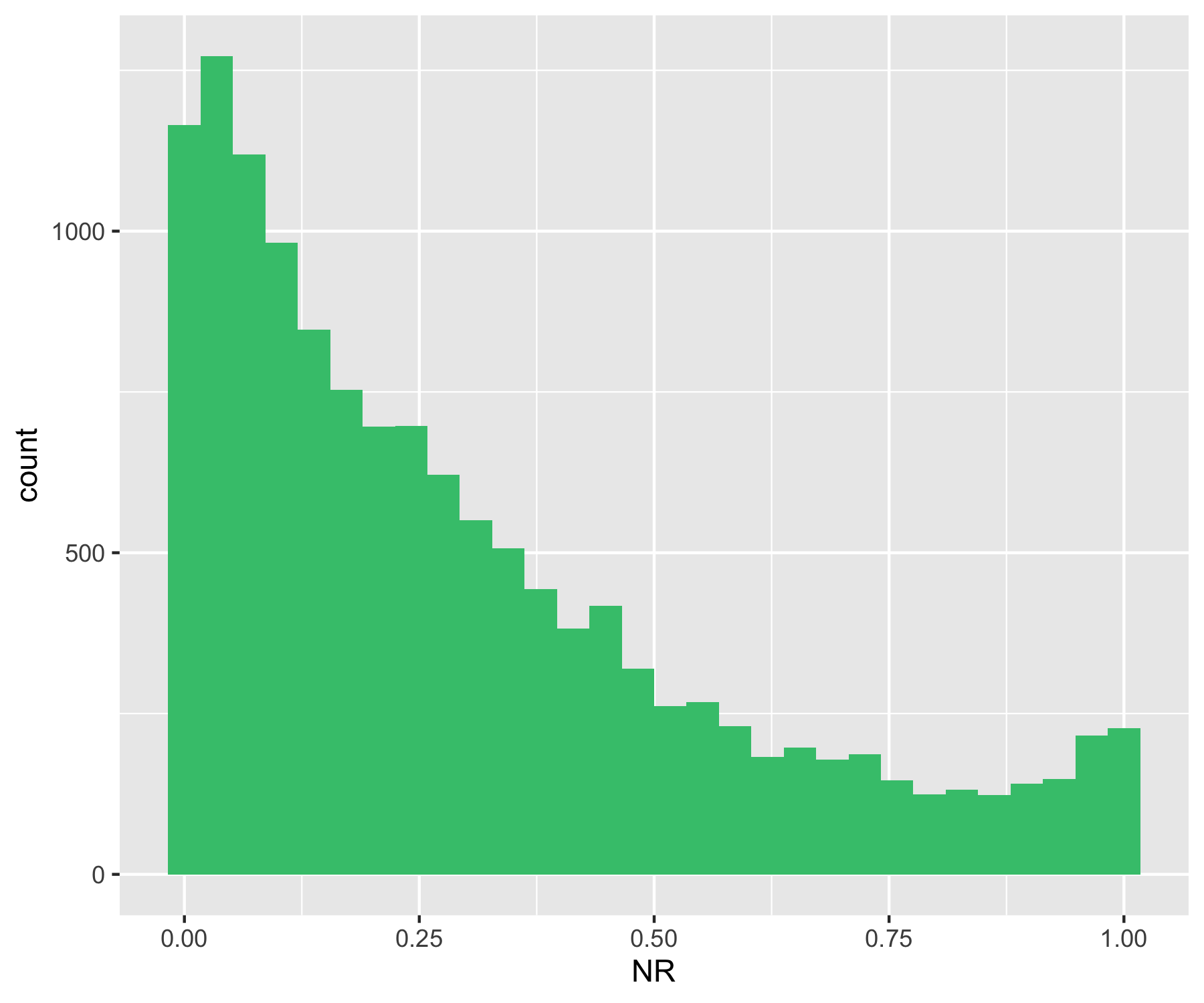}}
\caption{Performance of our algorithm for vertex nomination between
the two Microsoft Bing entities transition networks on $n=13535$
vertices. For each $x\in V_1$ we randomly selected 10 seeds and our
algorithms are applied on the 1-neighborhood of them. We record the
normalized rank (NR) of its correspondence $\sigma(x)$ in the whole
nomination list.  The red and green histogram of NR correspond to the
case where the graphs embeddings are aligned via orthogonal Procrustes
and via the adaptive point set registration procedure, respectively.}
\label{Bing_1neighborhood}
\end{figure}

For our second analysis we do not sub-sample the graphs and hence $G_1$ contains
$13535$ vertices and $519389$ edges while $G_2$ contains the same $13535$
vertices and $595047$ edges. Once again we sequentially consider each vertex $x \in G_1$ is
as the vertex of interest. For a given vertex of interest we randomly select $10$ other
vertices as seeds. We then take the induced subgraph in $G_1$
(respectively $G_2$) formed by the
$1$-neighborhood of these $11$ vertices (the vertex of interest and
the $10$ seed vertices). Letting $G_1(v)$ and $G_2(v)$ denote the
induced subgraph of $G_1$ and $G_2$, we then apply our algorithm to find the nomination
correspondence $\sigma(x) \in G_2(v)$ for the given vertex of interest $x \in
G_1(v)$. The histogram of the NR are summarized in
Figure~\ref{Bing_1neighborhood}. Figure~\ref{Bing_1neighborhood}
indicates that, even though there is no longer an exact 1-to-1 correspondence between the
vertices set in the induced $1$-neighborhood subgraphs, our algorithms still give accurate nomination lists.
\begin{table}[pth]
\centering
\begin{tabular}{c|cccccccc}
 & 1\% & 5\%& 10\%&25\%& 50\%& 75\%&95\%&99\% \\
 \hline
Procrustes (10 seeds) & 0.000& 0.006 &0.018 &0.073& 0.211& 0.429& 0.869 & 0.991   \\
set registration (10 seeds) &  0.000 &0.008 &0.022& 0.081& 0.220& 0.440& 0.894 &0.992\\
\end{tabular}
\caption{Quantile levels of normalized rank (NR) values for vertex nomination
  with the Bing entity networks on $n = 13535$ vertices}
\label{tab:Bing2}
\end{table}

Appendix~\ref{sec:add res} contains additional real data analysis
results illustrating how the choice of embedding dimension $d$ and the
penalty parameter $\lambda$ affects the performance of our algorithm
for the high school friendship data and the Bing data; see
Figure~\ref{highschool_shared_Algo1_d}, Table~\ref{tab:Bing1_d},
Figure~\ref{highschool_shared_Algo1_lambda} and
Table~\ref{tab:Bing1_lambda}. In addition,
Figure~\ref{highschool_shared_Algo1_rerank} and
Table~\ref{tab:Bing1_reranking} illustrate how the reranking step
also improves the performance for these real data applications; for
example for the Microsoft Bing data, Table~\ref{tab:Bing1_reranking}
in Appendix~\ref{sec:real data reranking} shows
that with $100$ seeds the $25$th percentile of the normalized rank
changes from $0.073$ to $0.027$ after we apply the reranking step. It
means that without the reranking step, $25\%$ of the time we can find
the corresponding vertex among the first $73$ vertices of the
nomination list, and with the reranking step, $25\%$ of the time we
can find the corresponding vertex among the first $27$ vertices of the
nomination list. This a substantial improvement and indicates that
while we do not expect the Microsoft Bing entity graph transitions
data to follows a $\rho$-correlated GRDPG model, the $\rho$-correlated
GRDPG model still provides a useful surrogate for analyzing the
pairwise correlations between the edges of the two Bing graphs.
Finally, Appendix~\ref{sec:add res} also presents comparisons between
our algorithm and the algorithm of \cite{agterberg2020vertex}.  For
the high school friendship data, we see from
Figure~\ref{highschool_shared_Algo1_count_compare} that our algorithm
is much more accurate and we note the running time of our algorithm
with orthogonal Procrustes is only about $7$ minutes compared to their
running time which is roughly $200$ minutes on the same
laptop. Meanwhile, for the Bing data, the algorithm in
\cite{agterberg2020vertex} and our algorithm using either the orthogonal
Procrustes alignment or adaptive point set registration alignment have similar normalized rank distribution (see
Table~\ref{tab:Bing_compare}), but \cite{agterberg2020vertex} algorithm
is roughly $8$ times slower than our algorithm. 

\section{Conclusion} In summary, the current paper provides an
algorithm for solving the vertex
nomination problem in the multi-graphs setting. Our algorithm depends
on adjacency spectral embedding and followed by solving a quadratic
programming. To eliminate non-identifiability of spectral embedding
for different graphs, besides an approach based on orthogonal
Procrustes, we propose a method using adaptive point set registration
to align the embedding that also work without needing any information
about seed vertices. Under mild assumption, we establish theoretical
guarantee on the consistency of our nomination scheme. The empirical
results on the simulation and real data analysis demonstrate that our
algorithm is generally quite accurate even when there are only a few
seeds or even no seed vertices. As we allude to in the introduction of
this paper, vertex nomination is an unsupervised learning problem and
thus evaluation of a vertex nomination algorithm usually
requires some underlying ground truth. The real data analysis
examples of this paper are based on pairs of graphs with shared
vertices and we used these shared vertices to define our groundtruth;
the resulting analysis is thus similar to network
deanonymization. When there is no known groundtruth, then our
proposed methodology can be used for exploratory data analysis or for
suggesting possible matches between a query vertex $x$ in one graph
and vertices most ``similar'' to $x$ in the second graph.
To evaluate the accuracy of the resulting nominations will, however, require additional domain knowledge or domain
experts. We believe that our chosen examples are simple to describe and yet
sufficiently rich in scope, thereby providing a clear and compelling illustration of
the effectiveness of our proposed methodology.

While the proposed algorithm is reasonably computationally efficient,
there are still technical challenges in applying the algorithm to large
graphs. For example, the Bing graphs analyzed in this paper are on the order of $10^4$ vertices and $10^5$
to $10^6$ edges and our algorithm takes roughly $30$ minutes for one full
analysis when running on a consumer laptop. For larger-scale graphs, such
as those on $10^5$ vertices and $10^7$ edges, our algorithm breaks
down. In particular, the EM steps in the adaptive point set
registration algorithm can be quite slow to converge and thus might
require sub-sampling of the embedded points before performing the
alignment. Furthermore, the quadratic programming
step requires keeping track of the assignment matrix $\mathbf{D}$; a
naive approach of storing $\mathbf{D}$ will require too much memory,
especially since $\mathbf{D}$ is likely to be sparse throughout the
optimization. Development of iterative procedures for storing and
updating $\mathbf{D}$ is thus essential for scaling our
algorithm to large graphs. We leave these investigations for future work.

\bibliography{Project}

\begin{thebibliography}{}

\bibitem[\protect\citeauthoryear{Agterberg, Park, Larson, White, Priebe, and
  Lyzinski}{Agterberg et~al.}{2020}]{agterberg2020vertex}
Agterberg, J., Y.~Park, J.~Larson, C.~White, C.~E. Priebe, and V.~Lyzinski
  (2020).
\newblock Vertex nomination, consistent estimation, and adversarial
  modification.
\newblock {\em Electronic Journal of Statistics\/}~{\em 14\/}(2), 3230--3267.

\bibitem[\protect\citeauthoryear{Ahn and Horenstein}{Ahn and
  Horenstein}{2013}]{ahn2013eigenvalue}
Ahn, S.~C. and A.~R. Horenstein (2013).
\newblock Eigenvalue ratio test for the number of factors.
\newblock {\em Econometrica\/}~{\em 81\/}(3), 1203--1227.

\bibitem[\protect\citeauthoryear{Airoldi, Blei, Fienberg, and Xing}{Airoldi
  et~al.}{2008}]{Airoldi2008}
Airoldi, E.~M., D.~M. Blei, S.~E. Fienberg, and E.~P. Xing (2008).
\newblock Mixed membership stochastic blockmodels.
\newblock {\em Journal of Machine Learning Research\/}~{\em 9}, 1981--2014.

\bibitem[\protect\citeauthoryear{Blondel, Seguy, and Rolet}{Blondel
  et~al.}{2018}]{blondel}
Blondel, M., V.~Seguy, and A.~Rolet (2018).
\newblock Smooth and sparse optimal transport.
\newblock In {\em Proceedings of the 21st International Conference on
  Artificial Intelligence and Statistics}, pp.\  880--889.

\bibitem[\protect\citeauthoryear{Cai and Zhang}{Cai and
  Zhang}{2018}]{cai2018rate}
Cai, T.~T. and A.~Zhang (2018).
\newblock Rate-optimal perturbation bounds for singular subspaces with
  applications to high-dimensional statistics.
\newblock {\em The Annals of Statistics\/}~{\em 46}, 60--89.

\bibitem[\protect\citeauthoryear{Coppersmith and Priebe}{Coppersmith and
  Priebe}{2012}]{coppersmith2012vertex}
Coppersmith, G.~A. and C.~E. Priebe (2012).
\newblock Vertex nomination via content and context.
\newblock arXiv preprint at \url{http://arxiv.org/abs/1201.4118}.

\bibitem[\protect\citeauthoryear{Dessein, Papadakis, and Rouas}{Dessein
  et~al.}{2018}]{dessein2018regularized}
Dessein, A., N.~Papadakis, and J.~Rouas (2018).
\newblock Regularized optimal transport and the rot mover's distance.
\newblock {\em The Journal of Machine Learning Research\/}~{\em 19\/}(1),
  590--642.

\bibitem[\protect\citeauthoryear{Diaconis and Janson}{Diaconis and
  Janson}{2008}]{diaconis08:_graph_limit_exchan_random_graph}
Diaconis, P. and S.~Janson (2008).
\newblock Graph limits and exchangeable random graphs.
\newblock {\em Rendiconti di Matematica, Serie VII\/}~{\em 28}, 33--61.

\bibitem[\protect\citeauthoryear{Duch and Arenas}{Duch and
  Arenas}{2005}]{duch2005community}
Duch, J. and A.~Arenas (2005).
\newblock Community detection in complex networks using extremal optimization.
\newblock {\em Physical Review E\/}~{\em 72}, 027104.

\bibitem[\protect\citeauthoryear{Fishkind, Lyzinski, Pao, Chen, and
  Priebe}{Fishkind et~al.}{2015}]{fishkind2015vertex}
Fishkind, D.~E., V.~Lyzinski, H.~Pao, L.~Chen, and C.~E. Priebe (2015).
\newblock Vertex nomination schemes for membership prediction.
\newblock {\em Annals of Applied Statistics\/}~{\em 9}, 1510--1532.

\bibitem[\protect\citeauthoryear{Fortunato}{Fortunato}{2010}]{fortunato2010community}
Fortunato, S. (2010).
\newblock Community detection in graphs.
\newblock {\em Physics Reports\/}~{\em 486}, 75--174.

\bibitem[\protect\citeauthoryear{Fraley and Raftery}{Fraley and
  Raftery}{1998}]{fraley1998mclust}
Fraley, C. and A.~E. Raftery (1998).
\newblock Mclust: Software for model-based cluster and discriminant analysis.
\newblock Technical Report 342, Department of Statistics, University of
  Washington: Technical Report.

\bibitem[\protect\citeauthoryear{Han, Yang, and Fan}{Han
  et~al.}{2019}]{han2019universal}
Han, X., Q.~Yang, and Y.~Fan (2019).
\newblock Universal rank inference via residual subsampling with application to
  large networks.
\newblock {\em arXiv preprint arXiv:1912.11583\/}.

\bibitem[\protect\citeauthoryear{Hoff, Raftery, and Handcock}{Hoff
  et~al.}{2002}]{Hoff2002}
Hoff, P.~D., A.~E. Raftery, and M.~S. Handcock (2002).
\newblock {Latent space approaches to social network analysis}.
\newblock {\em Journal of the American Statistical Association\/}~{\em
  97\/}(460), 1090--1098.

\bibitem[\protect\citeauthoryear{Holland, Laskey, and Leinhardt}{Holland
  et~al.}{1983}]{holland}
Holland, P.~W., K.~B. Laskey, and S.~Leinhardt (1983).
\newblock Stochastic blockmodels: first steps.
\newblock {\em Social Networks\/}~{\em 5\/}(2), 109--137.

\bibitem[\protect\citeauthoryear{Huffel}{Huffel}{1990}]{van1990partial}
Huffel, S.~V. (1990).
\newblock Partial singular value decomposition algorithm.
\newblock {\em Journal of computational and applied mathematics\/}~{\em
  33\/}(1), 105--112.

\bibitem[\protect\citeauthoryear{Incorporated}{Incorporated}{2015}]{gurobi2018gurobi}
Incorporated, G.~O. (2015).
\newblock Gurobi optimizer reference manual.
\newblock Available at \url{http://www.gurobi.com}.

\bibitem[\protect\citeauthoryear{Karrer and Newman}{Karrer and
  Newman}{2011}]{karrer2011stochastic}
Karrer, B. and M.~E.~J. Newman (2011).
\newblock Stochastic blockmodels and community structure in networks.
\newblock {\em Physical Review E\/}~{\em 83\/}(1), 016107.

\bibitem[\protect\citeauthoryear{Kozlov, Tarasov, and Khachiyan}{Kozlov
  et~al.}{1980}]{kozlov1980polynomial}
Kozlov, M.~K., S.~P. Tarasov, and L.~G. Khachiyan (1980).
\newblock The polynomial solvability of convex quadratic programming.
\newblock {\em USSR Computational Mathematics and Mathematical Physics\/}~{\em
  20}, 223--228.

\bibitem[\protect\citeauthoryear{Kudo, Maeda, and Matsumoto}{Kudo
  et~al.}{2005}]{kudo2005application}
Kudo, T., E.~Maeda, and Y.~Matsumoto (2005).
\newblock An application of boosting to graph classification.
\newblock In {\em Advances in Neural Information Processing Systems}, pp.\
  729--736.

\bibitem[\protect\citeauthoryear{Lee and Priebe}{Lee and
  Priebe}{2012}]{lee2012bayesian}
Lee, D.~S. and C.~E. Priebe (2012).
\newblock Bayesian vertex nomination.
\newblock arXiv preprint at \url{http://arxiv.org/abs/1205.5082}.

\bibitem[\protect\citeauthoryear{Levin}{Levin}{2017}]{levin2017graph}
Levin, K. (2017).
\newblock {\em Graph Inference with Applications to Low-Resource Audio Search
  and Indexing}.
\newblock Ph.\ D. thesis, Johns Hopkins University.

\bibitem[\protect\citeauthoryear{Lloyd, Orbanz, Ghahramani, and Roy}{Lloyd
  et~al.}{2012}]{lloyd2012random}
Lloyd, J., P.~Orbanz, Z.~Ghahramani, and D.~M. Roy (2012).
\newblock Random function priors for exchangeable arrays with applications to
  graphs and relational data.
\newblock In {\em Advances in Neural Information Processing Systems}, pp.\
  998--1006.

\bibitem[\protect\citeauthoryear{Lov\'{a}sz}{Lov\'{a}sz}{2012}]{lovasz12:_large}
Lov\'{a}sz, L. (2012).
\newblock {\em Large networks and graph limits}.
\newblock American Mathematical Society.

\bibitem[\protect\citeauthoryear{Lu and Peng}{Lu and
  Peng}{2013}]{lu2013spectra}
Lu, L. and X.~Peng (2013).
\newblock Spectra of edge-independent random graphs.
\newblock {\em The Electronic Journal of Combinatorics\/}~{\em 20\/}(4), P27.

\bibitem[\protect\citeauthoryear{Lyzinski, Fishkind, and Priebe}{Lyzinski
  et~al.}{2014}]{lyzinski2014seeded}
Lyzinski, V., D.~E. Fishkind, and C.~E. Priebe (2014).
\newblock Seeded graph matching for correlated erd{\"o}s-r{\'e}nyi graphs.
\newblock {\em Journal of Machine Learning Research\/}~{\em 15}, 3513--3540.

\bibitem[\protect\citeauthoryear{Lyzinski, Levin, Fishkind, and
  Priebe}{Lyzinski et~al.}{2016}]{lyzinski2016consistency}
Lyzinski, V., K.~Levin, D.~E. Fishkind, and C.~E. Priebe (2016).
\newblock On the consistency of the likelihood maximization vertex nomination
  scheme: Bridging the gap between maximum likelihood estimation and graph
  matching.
\newblock {\em Journal of Machine Learning Research\/}~{\em 17}, 1--34.

\bibitem[\protect\citeauthoryear{Lyzinski, Levin, and Priebe}{Lyzinski
  et~al.}{2019}]{lyzinski2019consistent}
Lyzinski, V., K.~Levin, and C.~E. Priebe (2019).
\newblock On consistent vertex nomination schemes.
\newblock {\em Journal of Machine Learning Research\/}~{\em 20}, 1--39.

\bibitem[\protect\citeauthoryear{Mastrandrea, Fournet, and Barrat}{Mastrandrea
  et~al.}{2015}]{mastrandrea2015contact}
Mastrandrea, R., J.~Fournet, and A.~Barrat (2015).
\newblock Contact patterns in a high school: a comparison between data
  collected using wearable sensors, contact diaries and friendship surveys.
\newblock {\em PLOS One\/}~{\em 10}, e0136497.

\bibitem[\protect\citeauthoryear{Moreno and Neville}{Moreno and
  Neville}{2013}]{moreno2013network}
Moreno, S. and J.~Neville (2013).
\newblock Network hypothesis testing using mixed kronecker product graph
  models.
\newblock In {\em IEEE 13th International Conference on Data Mining}, pp.\
  1163--1168.

\bibitem[\protect\citeauthoryear{Myronenko and Song}{Myronenko and
  Song}{2010}]{myronenko2010point}
Myronenko, A. and X.~Song (2010).
\newblock Point set registration: Coherent point drift.
\newblock {\em IEEE Transactions on Pattern Analysis and Machine
  Intelligence\/}~{\em 32}, 2262--2275.

\bibitem[\protect\citeauthoryear{Newman}{Newman}{2006}]{newman2006finding}
Newman, M.~E.~J. (2006).
\newblock Finding community structure in networks using the eigenvectors of
  matrices.
\newblock {\em Physical Review E\/}~{\em 74}, 036104.

\bibitem[\protect\citeauthoryear{Onatski}{Onatski}{2010}]{onatski2010determining}
Onatski, A. (2010).
\newblock Determining the number of factors from empirical distribution of
  eigenvalues.
\newblock {\em The Review of Economics and Statistics\/}~{\em 92\/}(4),
  1004--1016.

\bibitem[\protect\citeauthoryear{Patsolic, Park, Lyzinski, and Priebe}{Patsolic
  et~al.}{2017}]{patsolic2017vertex}
Patsolic, H.~G., Y.~Park, V.~Lyzinski, and C.~E. Priebe (2017).
\newblock Vertex nomination via seeded graph matching.
\newblock arXiv preprint at \url{http://arxiv.org/abs/1705.00674}.

\bibitem[\protect\citeauthoryear{Peyr{\'e} and Cuturi}{Peyr{\'e} and
  Cuturi}{2019}]{peyre2019computational}
Peyr{\'e}, G. and M.~Cuturi (2019).
\newblock Computational optimal transport.
\newblock {\em Foundations and Trends{\textregistered} in Machine
  Learning\/}~{\em 11\/}(5-6), 355--607.

\bibitem[\protect\citeauthoryear{Pierskalla}{Pierskalla}{1968}]{pierskalla1968multidimensional}
Pierskalla, W.~P. (1968).
\newblock The multidimensional assignment problem.
\newblock {\em Operations Research\/}~{\em 16\/}(2), 422--431.

\bibitem[\protect\citeauthoryear{Rubin-Delanchy, Priebe, Tang, and
  Cape}{Rubin-Delanchy et~al.}{2017}]{rubin2017statistical}
Rubin-Delanchy, P., C.~E. Priebe, M.~Tang, and J.~Cape (2017).
\newblock A statistical interpretation of spectral embedding: the generalised
  random dot product graph.
\newblock arXiv preprint at \url{http://arxiv.org/abs/1709.05506}.

\bibitem[\protect\citeauthoryear{Schaeffer}{Schaeffer}{2007}]{schaeffer2007graph}
Schaeffer, S.~E. (2007).
\newblock Graph clustering.
\newblock {\em Computer Science Review\/}~{\em 1}, 27--64.

\bibitem[\protect\citeauthoryear{Sch{\"o}nemann}{Sch{\"o}nemann}{1966}]{schonemann1966generalized}
Sch{\"o}nemann, P.~H. (1966).
\newblock A generalized solution of the orthogonal procrustes problem.
\newblock {\em Psychometrika\/}~{\em 31}, 1--10.

\bibitem[\protect\citeauthoryear{Spielman and Teng}{Spielman and
  Teng}{2013}]{spielman2013local}
Spielman, D.~A. and S.-H. Teng (2013).
\newblock A local clustering algorithm for massive graphs and its application
  to nearly linear time graph partitioning.
\newblock {\em SIAM Journal on Computing\/}~{\em 42}, 1--26.

\bibitem[\protect\citeauthoryear{Sun and Priebe}{Sun and
  Priebe}{2013}]{sun2013efficiency}
Sun, M. and C.~E. Priebe (2013).
\newblock Efficiency investigation of manifold matching for text document
  classification.
\newblock {\em Pattern Recognition Letters\/}~{\em 34}, 1263--1269.

\bibitem[\protect\citeauthoryear{Sun, Tang, and Priebe}{Sun
  et~al.}{2012}]{sun2012comparison}
Sun, M., M.~Tang, and C.~E. Priebe (2012).
\newblock A comparison of graph embedding methods for vertex nomination.
\newblock In {\em 11th International Conference on Machine Learning and
  Applications}, pp.\  398--403.

\bibitem[\protect\citeauthoryear{Sussman, Park, Priebe, and Lyzinski}{Sussman
  et~al.}{2020}]{sussman2019matched}
Sussman, D., Y.~Park, C.~E. Priebe, and V.~Lyzinski (2020).
\newblock Matched filters for noisy induced subgraph detection.
\newblock {\em IEEE Transactions on Pattern Analysis and Machine
  Intelligence\/}~{\em 42}, 2887--2900.

\bibitem[\protect\citeauthoryear{Tang, Athreya, Sussman, Lyzinski, and
  Priebe}{Tang et~al.}{2017}]{tang2014semiparametric}
Tang, M., A.~Athreya, D.~L. Sussman, V.~Lyzinski, and C.~E. Priebe (2017).
\newblock A semiparametric two-sample hypothesis testing problem for random dot
  product graphs.
\newblock {\em Journal of Computational and Graphical Statistics\/}~{\em 26},
  344--354.

\bibitem[\protect\citeauthoryear{Xu}{Xu}{2017}]{xu2017rates}
Xu, J. (2017).
\newblock Rates of convergence of spectral methods for graphon estimation.
\newblock arXiv preprint at \url{http://arxiv.org/abs/1709.03183}.

\bibitem[\protect\citeauthoryear{Yin, Benson, Leskovec, and Gleich}{Yin
  et~al.}{2017}]{yin2017local}
Yin, H., A.~R. Benson, J.~Leskovec, and D.~F. Gleich (2017).
\newblock Local higher-order graph clustering.
\newblock In {\em Proceedings of the 23rd ACM SIGKDD International Conference
  on Knowledge Discovery and Data Mining}, pp.\  555--564.

\bibitem[\protect\citeauthoryear{Yoder, Chen, Pao, Bridgeford, Levin, Fishkind,
  Priebe, and Lyzinski}{Yoder et~al.}{2020}]{yoder2020vertex}
Yoder, J., L.~Chen, H.~Pao, E.~Bridgeford, K.~Levin, D.~E. Fishkind, C.~Priebe,
  and V.~Lyzinski (2020).
\newblock Vertex nomination: The canonical sampling and the extended spectral
  nomination schemes.
\newblock {\em Computational Statistics \& Data Analysis\/}~{\em 145}, 106916.

\bibitem[\protect\citeauthoryear{Young and Scheinerman}{Young and
  Scheinerman}{2007}]{young2007random}
Young, S.~J. and E.~R. Scheinerman (2007).
\newblock Random dot product graph models for social networks.
\newblock In {\em International Workshop on Algorithms and Models for the
  Web-Graph}, pp.\  138--149. Springer.

\bibitem[\protect\citeauthoryear{Yu, Wang, and Samworth}{Yu
  et~al.}{2014}]{yu2014useful}
Yu, Y., T.~Wang, and R.~J. Samworth (2014).
\newblock A useful variant of the {D}avis--{K}ahan theorem for statisticians.
\newblock {\em Biometrika\/}~{\em 102}, 315--323.

\bibitem[\protect\citeauthoryear{Zhang, Cui, Neumann, and Chen}{Zhang
  et~al.}{2018}]{zhang2018end}
Zhang, M., Z.~Cui, M.~Neumann, and Y.~Chen (2018).
\newblock An end-to-end deep learning architecture for graph classification.
\newblock In {\em 32nd AAAI Conference on Artificial Intelligence}, pp.\
  4438--4445.

\bibitem[\protect\citeauthoryear{Zhu and Ghodsi}{Zhu and
  Ghodsi}{2006}]{zhu2006automatic}
Zhu, M. and A.~Ghodsi (2006).
\newblock Automatic dimensionality selection from the scree plot via the use of
  profile likelihood.
\newblock {\em Computational Statistics \& Data Analysis\/}~{\em 51}, 918--930.

\end{thebibliography}
\newpage
\appendix

\renewcommand\thesection{\Alph{section}} 
\renewcommand\thesubsection{\Alph{section}.\arabic{subsection}} 
\setcounter{section}{0}

\renewcommand\thefigure{\Alph{section}\arabic{figure}} 
\setcounter{figure}{0}

\renewcommand\thetable{\Alph{section}\arabic{table}} 
\setcounter{table}{0}

\section{Proof of Proposition \ref{prop1}}

The first part of Proposition \ref{prop1} has been proved in
Section~\ref{sec:theorey}. Now we prove the second part, i.e., we will
show that for a fixed $n$, as $\lambda \rightarrow 0$, we have
  $$\mathbf{D}_{\lambda} \longrightarrow \argmin_{\mathbf{D} \in
  \mathcal{D}} \{ \|\mathbf{D}\|_{F} \colon \langle \mathbf{C}, \mathbf{D} \rangle
= \xi_*\},$$
where $\xi_*$ is the minimum value achieved in $P_0$.

The following argument is adapted from the proof of Proposition 4.1 in
\cite{peyre2019computational}. We consider a sequence
$(\lambda_{\ell})_{\ell}$ such that $\lambda_{\ell} \rightarrow 0$ and
$\lambda_{\ell}>0 .$ Since $\mathcal{D}$ is bounded, we can extract a
sequence (that we do not relabel for the sake of simplicity) such that
$\mathbf{D}_{\lambda_\ell} \rightarrow \mathbf{D}_{\star}.$ Since
$\mathcal{D}$ is closed, $\mathbf{D}_{\star} \in \mathcal{D} .$ We
consider any $\mathbf{D}$ such that $\langle\mathbf{C},
\mathbf{D}\rangle=\xi_{*}$. 
By optimality of such $\mathbf{D}$ and
$\mathbf{D}_{\lambda_\ell}$ for their respective optimization
problems, we have
\begin{equation}
  \label{eq:lambda}
0 \leqslant\left\langle\mathbf{C}, \mathbf{D}_{\lambda_\ell}\right\rangle-\langle\mathbf{C}, \mathbf{D}\rangle \leqslant \lambda_{\ell}\cdot\left(\|\mathbf{D}\|_F-\|\mathbf{D}_{\lambda_\ell}\|_F\right).
\end{equation}

Since $\|\cdot\|_F$ is continuous, taking the limit $\ell
\rightarrow+\infty$ in this expression shows that
$\left\langle\mathbf{C},
\mathbf{D}_\star\right\rangle=\langle\mathbf{C}, \mathbf{D}\rangle$ so
that $\mathbf{D}_{\star}$ is a feasible point of $\{\md \colon \langle
\mathbf{C}, \mathbf{D} \rangle = \xi_*\}$. Furthermore, dividing by
$\lambda_{\ell}$ in Eq. (\ref{eq:lambda}) and taking the limit shows
that $\|\mathbf{D}_\star\|_F \leqslant \|\mathbf{D}\|_F,$ which shows
that $\mathbf{D}_{\star}$ is a solution of $\argmin_{\mathbf{D} \in
\mathcal{D}} \{ \|\mathbf{D}\|_{F} \colon \langle \mathbf{C},
\mathbf{D} \rangle = \xi_*\}$.

Finally we prove the claim that if the vertices $\{1,2,\dots,n\}$ can
be partitioned into $K$ distinct groups/blocks such that $c_{ij'}>c_{ij}+\omega(n^{-1/2})$
for all triplets $(i,j,j')$ with $i$ and $j$ being the same group and
$i$ and $j'$ being in different groups, then with high probability $\hat{\mathbf{D}}_0$ is
block diagonal, i.e., with high probability $\hat{\mathbf{D}}_0(i,j) =
0$ whenever $i$ and $j$ are in different groups. 

Suppose that $\hat{\mathbf{D}}_0$ is not block diagonal. Since
  $n^{-1}\hat{\mathbf{D}}_0$ is a doubly stochastic matrix, by the
  Birkhoff-von Neumann theorem we can write $n^{-1} \hat{\mathbf{D}}_0$ as a convex
  combination of permutation matrices, i.e., $\hat{\mathbf{D}}_0 = n
  \sum_{\sigma} \lambda_{\sigma} \bm{\Pi}_{\sigma}$ where the sum is
  over all permutations $\sigma$ of $\{1,2,\dots,n\}$, the
  $\bm{\Pi}_{\sigma}$ represent permutation matrices
  corresponding to the permutations $\sigma$ and
  $\{\lambda_{\sigma}\}$ are the coefficients for the convex
  combination.  
  We therefore have
  \begin{equation}
    \label{eq:claim0}
  \langle \hat{\mathbf{C}}, \hat{\mathbf{D}}_0 \rangle =
  n \sum_{\sigma} \langle \hat{\mathbf{C}}, \lambda_{\sigma}
  \bm{\Pi}_{\sigma} \rangle
  \geq
  n \min_{\sigma} \langle \hat{\mathbf{C}}, \bm{\Pi}_{\sigma} \rangle.
  \end{equation}

  Now according to the condition that $c_{ij'} > c_{ij}$ with
  $c_{ij'} - c_{ij} = \omega(n^{-1/2})$ for all triplets $(i,j,j')$
  where $i$ and $j$ belong to the same group and $i$ and $j'$
  belonging to different groups, we have, using
  Eq.~(\ref{eq:bound_main}), that with high probability
  $\hat{c}_{ij'} > \hat{c}_{ij}$ for all triplets $(i,j,j')$
  satisfying the above groups condition. We will now assume that this
  condition $\hat{c}_{ij'} > \hat{c}_{ij}$ holds. 

  We recall that any permutation $\sigma$ can be decomposed
  into a product of cycles $\mathcal{C}_1, \mathcal{C}_2, \dots,
  \mathcal{C}_m$ for some $m \geq 1$. Let $\mathcal{C}_r$ be an
  arbitrary cycle and suppose that $\mathcal{C}_r$ is of length
  $s \geq 2$; note that if $\mathcal{C}_r$ has length $1$ then
  $\mathcal{C}_r$ correspond to a fixed point so that
  $\bm{\Pi}_{\sigma}(i,j) = 1$ if and only if $j = i$ where $i$ is the
  sole index appearing in $\mathcal{C}_r$. Let $i_1$ be the smallest index from
  $\{1,2,\dots,n\}$ that appears in the cycle $\mathcal{C}_r$. Then
  the remaining indices appearing in  $\mathcal{C}_r$ are of the form
  $i_2 = \sigma(i_1), i_3 = \sigma(i_2), \dots i_{s} =
  \sigma(i_{s-1}), i_1 = \sigma(i_{s})$. Now if all the indices
  appearing in $\mathcal{C}_r$ are from the same group then we are
  done. Otherwise, we split the cycle $\mathcal{C}_r$ into smaller
  cycles $\{\mathcal{C}_{r1}, \mathcal{C}_{r2}, \dots, \mathcal{C}_{rK}\}$ where each
  cycle $\mathcal{C}_{rk}$ only have indices appearing in group
  $k$. The order of elements in each cycle $\mathcal{C}_{rk}$ are
  arranged according to the order in which they appear within the
  original cycle $\mathcal{C}_r$; for example if
  $\mathcal{C}_{r}$ contains five elements $(i_1, i_2, \dots, i_5)$ with $i_1, i_2, i_4$
  belonging to group $1$ and $i_3, i_5$ belonging to group $2$ then
  $\mathcal{C}_{r1} = (i_1, i_2, i_4)$ and $\mathcal{C}_{r2} = (i_3,i_5)$. 
  Let $\sigma_{1}, \sigma_{2}, \dots,
  \sigma_{K}$ be the permutations corresponding to these cycles; note
  that there could be empty cycles $\mathcal{C}_{rk}$ in which case the corresponding
  $\sigma_{k}$ can be dropped or ignored. We then have
  \begin{equation}
    \label{eq:claim1}
   \sum_{i \in \mathcal{C}_r} \hat{c}_{i \sigma(i)} \geq \sum_{k=1}^{K} \sum_{i
     \in \mathcal{C}_{rk}} \hat{c}_{i \sigma_{k}(i)}
   \end{equation}
 with strict inequality unless $\mathcal{C}_r$ only have elements
  from the same group.

  The justification for Eq.~(\ref{eq:claim1}) is as follows. If $i$
  and $\sigma(i)$ are from the same
  group then $\sigma_{k}(i) = \sigma(i)$ and hence $\hat{c}_{i \sigma(i)} = \hat{c}_{i \sigma_{k}(i)}$. Otherwise, if $i$ and $\sigma(i)$ are from
  different group, say group $k$ and $k'$, then at the time 
  right before the edge $(i, \sigma(i))$ is considered the index $j =
  \sigma(i)$ has in-degree $0$ and out degree $0$
  while the index $i$ has in degree $1$ and out degree $0$. 
  The splitting of $\mathcal{C}_r$ into smaller cycles
  $\mathcal{C}_{r1}, \dots, \mathcal{C}_{rK}$ is equivalent to
  changing the outgoing edge for index $i$ and the incoming edge
  for index $j = \sigma(i)$ using the following rules. 
  \begin{enumerate}
    \item If $i$ is the last index in $\mathcal{C}_{rk}$ then we had replace $c_{ij}$
  with the strictly smaller cost $c_{i i_*}$ where $i_* = \sigma_k(i)$ is the first index that appears in
  $\mathcal{C}_{rk}$. Note that $i_* = i$ is a possibility if
  $\mathcal{C}_{rk}$ contains only a single index. 
  \item If $i$ is not the last index in $\mathcal{C}_{rk}$ 
  then we had replace the cost $c_{ij}$ with the strictly smaller
  cost $c_{ii*}$ where $i_* = \sigma_k(i)$ is the index appearing right after $i$ in
  $\mathcal{C}_{rk}$. 
\item If $j$ is the first index in $\mathcal{C}_{rk'}$ then we had
  replaced $c_{ij}$ with the strictly smaller $c_{j_*j}$ where $j_* =
  \sigma_{k'}^{-1}(j)$
  is the last index that
  appears in $\mathcal{C}_{rk'}$. There is once again the possibility
  that $j_* = j$.
  . 
\item If $j$ is not the first index in $\mathcal{C}_{rk'}$ then we had
  replaced $c_{ij}$ with
  the strictly smaller $c_{j_*j}$ where $j_* = \sigma_{k'}^{-1}(j)$ is the index appearing right before $j$
  in $\mathcal{C}_{rk'}$.
\end{enumerate} 
Note
that in the above steps we neither change the {\em outgoing} edge of
any index $i'$ appearing before $i$ nor change the {\em incoming} edge
of any index $j'$ appearing before $j$.  The sum of the costs $c_{ij}$
for both the incoming and outgoing edges for any cycle $\mathcal{C}_r$
is twice the total cost of the cycle, and hence, by going sequentially
through the indices in the cycle $\mathcal{C}_r$ and applying the
above rules we will never increase the cost for any outgoing edge or
incoming edge; the total sum of the cost for the smaller cycles
$\mathcal{C}_{rk}$ is thus strictly less than that for
$\mathcal{C}_r$, with equality if and only if all of the indices
appearing in $\mathcal{C}_r$ are from a single group.

The above reasoning implies that the permutation
$\sigma_*$ which minimizes $\langle \hat{\mathbf{C}},
\bm{\Pi}_{\sigma} \rangle$ over all permutation $\sigma$ is a
union of disjoint cycles, each of which contains indices from a single
group, i.e., if $\sigma_*$ minimizes $\langle \hat{\mathbf{C}},
\bm{\Pi}_{\sigma} \rangle$ then $\bm{\Pi}_{\sigma_*}(i,j)
= 0$ whenever $i$ and $j$ belong to different groups. Let
$\mathcal{S}$ be the set of all permutations $\sigma$ for which $\bm{\Pi}_{\sigma}(i,j)
= 0$ whenever $i$ and $j$ belong to different groups. Then by
Eq.~(\ref{eq:claim0}), 
$\hat{\mathbf{D}}_0 = n\sum_{\sigma} \lambda_{\sigma}
\bm{\Pi}_{\sigma}$ minimizes $\langle \hat{\mathbf{C}}, \mathbf{D}
\rangle$ over the simplex constraint if and only if $\hat{\mathbf{D}}_0$
is a convex combination of elements in $\mathcal{S}$ and hence
$\hat{\mathbf{D}}_0(i,j) = 0$ whenever $i$ and $j$ belong to different
groups. We had thus justified our last claim from Proposition~\ref{prop1}.

\section{Proof of Theorem \ref{thm:main}}
The following argument is adapted from the proof of Theorem~5 in \cite{rubin2017statistical} for
bounding $\|\hat{\mathbf{X}} - \mathbf{X}\|_{2 \to \infty}$ in the
case of a {\em single} generalized random dot product graph to the
current setting of bounding $\|\hat{\mathbf{X}}_1 -
\hat{\mathbf{X}}_2\|_{2 \to \infty}$ for a {\em pair} of {\em correlated} generalized random dot
product graphs. 

We set the block spectral decomposition of the symmetric matrix $\ma_1$ as 
$
\ma_1=[\muu_1|\muu_1'][\ms_1\oplus\ms_1'][\muu_1|\muu_1']^\top=\muu_1\ms_1\muu_1^\top+\muu_1'\ms_1'\muu_1'^\top,
$
where the diagonal matrix $\ms_1\in\real^{d\times d}$ contains the $d$ largest-in-magnitude nonzero eigenvalues of $\ma_1$. Similarly,
$
\ma_2=\muu_2\ms_2\muu_2^\top+\muu_2'\ms_2^\prime\muu_2'^\top.
$
According to the definition of adjacency spectral embedding, we know $\hat\mx_1=\muu_1|\ms_1|^{\frac{1}{2}}$ and $\hat\mx_2=\muu_2|\ms_2|^{\frac{1}{2}}$. So our goal is to prove
$$
\min_{\mw\in\mathbb{O}_{d}}\left\|\muu_1|\ms_1|^{\frac{1}{2}}\mw-\muu_2|\ms_2|^{\frac{1}{2}}\right\|_{2\to\infty}=(1-\rho)^{1/2}\cdot O_p\left(n^{-1/2}\right)+O_p\left((\log n)^{2c}n^{-1}\gamma^{-1/2}\right).
$$

We set $\mw^*={\mw_1}^\top\mw_2$,where $\mw_1,\mw_2$ are two orthogonal matrices and we will give their specific formula in the following proof. 
Let $\mpp=\muu\ms\muu^\top$ be the eigendecomposition of $\mpp$, where
$\muu\in\real^{n\times d}$ is the matrix whose columns are the
eigenvectors and the diagonal matrix $\ms\in\real^{d\times d}$ contains all the $d$ nonzero eigenvalues of $\mpp$.
Now we split $\muu_1|\ms_1|^{\frac{1}{2}}\mw^*-\muu_2|\ms_2|^{\frac{1}{2}}$ as
\begin{equation}
  \label{eq:T1-6}
\begin{split}
\muu_1 |\ms_{1}|^{\frac{1}{2}} \mw^*-\muu_{2} |\ms_{2}|^{\frac{1}{2}}
=&\underbrace{\left[\muu_{1} |\ms_{1}|^{\frac{1}{2}} \mw_1^\top\mw_2-\muu\muu^\top\muu_{1} |\ms_{1}|^{\frac{1}{2}} \mw_1^\top\mw_2\right] }_{\mt_1}\\
&+\underbrace{\left[\muu\muu^\top\muu_{1} |\ms_{1}|^{\frac{1}{2}} \mw_1^\top\mw_2-\muu|\ms|^{\frac{1}{2}}\muu^\top\muu_{1} \mw_1^\top\mw_2\right] }_{\mt_2}\\
&+\underbrace{\left[\muu|\ms|^{\frac{1}{2}}\muu^\top\muu_{1} \mw_1^\top\mw_2-\muu |\ms|^{\frac{1}{2}}\mw_1\mw_1^\top\mw_2\right]}_{\mt_3} \\
&+\underbrace{\left[\muu |\ms|^{\frac{1}{2}}\mw_2-\muu|\ms|^{\frac{1}{2}}\muu^\top\muu_2\right]}_{\mt_4}\\
&+\underbrace{\left[\muu|\ms|^{\frac{1}{2}}\muu^\top\muu_2-\muu\muu^\top\muu_2|\ms_2|^{\frac{1}{2}}\right]}_{\mt_5}\\
&+\underbrace{\left[\muu\muu^\top\muu_2|\ms_2|^{\frac{1}{2}}-\muu_2|\ms_2|^{\frac{1}{2}}\right]}_{\mt_6}.
\end{split}
\end{equation}
By Lemma~\ref{lm:newT2T5}, Lemma~\ref{lm:newT3T4} and Lemma~\ref{lm:newT1T6}, we have that for some constant $c>0$
$$
\begin{aligned}
	&\|\mt_2+\mt_3+\mt_4+\mt_5\|_{2\to\infty}=O_p\left(n^{-1}\gamma^{-1/2}\right),\\
&\|\mt_1+\mt_6\|_{2\to\infty}=(1-\rho)^{1/2}\cdot O_p\left(n^{-1/2}\right)+O_p\left((\log n)^{2c}n^{-1}\gamma^{-1/2}\right).
\end{aligned}
$$
Theorem~\ref{thm:main} then follows immediately. 

\newtheorem{lemma}{Lemma}

\begin{lemma}
  \label{lm:newT2T5}
  For the term $\mt_2,\mt_5$ in Eq.(\ref{eq:T1-6}), we have
$$
\|\mt_2\|_{2\to\infty}=O_p\left(n^{-1}\gamma^{-1/2}\right),\quad\|\mt_5\|_{2\to\infty}=O_p\left(n^{-1}\gamma^{-1/2}\right).
$$
\end{lemma}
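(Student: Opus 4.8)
The plan is to bound $\mt_2$ in detail and to observe that $\mt_5$ is handled by the identical argument with the roles of $\ma_1$ and $\ma_2$ interchanged. Write $\me_k=\ma_k-\mpp$ for $k\in\{1,2\}$ and $\mr=\muu^\top\muu_1$. First I would reduce the claim to an operator-norm bound: pulling $\muu$ out on the left and the orthogonal matrix $\mw_1^\top\mw_2$ out on the right of the defining expression for $\mt_2$ in Eq.~(\ref{eq:T1-6}),
$$\mt_2=\muu\big(\mr\,|\ms_1|^{1/2}-|\ms|^{1/2}\mr\big)\mw_1^\top\mw_2=:\muu\,\mm\,\mw_1^\top\mw_2 .$$
Since right-multiplication by an orthogonal matrix does not change the $2\to\infty$ norm and $\|\muu\mm\|_{2\to\infty}\le\|\muu\|_{2\to\infty}\,\|\mm\|$, and since the GRDPG structure $\mpp=\gamma\,\mx\mi_{p,q}\mx^\top$ together with the second-moment hypothesis of Theorem~\ref{thm:main} gives $\|\muu\|_{2\to\infty}=O(n^{-1/2})$ (the column space of $\mpp$ equals that of $\mx$, so $\muu=\mx\mathbf{G}$ with $\mathbf{G}^\top\mx^\top\mx\mathbf{G}=\mi_d$ and $\|\mathbf{G}\|=O(n^{-1/2})$), it remains to prove $\|\mm\|=O_p\big((n\gamma)^{-1/2}\big)$. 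The same reduction applies verbatim to $\mt_5=\muu\big(|\ms|^{1/2}\muu^\top\muu_2-\muu^\top\muu_2|\ms_2|^{1/2}\big)$.

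Next I would collect the spectral estimates, all of which are available from the single-graph GRDPG analysis of \citet{rubin2017statistical}. The second-moment hypothesis forces the $d$ nonzero eigenvalues of $\mpp$ to be of exact order $n\gamma$, so $\|\ms\|\asymp n\gamma$ and $\lambda_{\min}(|\ms|)\gtrsim n\gamma$; Condition~(1) of Theorem~\ref{thm:main} gives $\|\me_k\|=O_p(\sqrt{n\gamma})$, and hence, by Weyl's inequality, $\|\ms_k\|=O_p(n\gamma)$, $\lambda_{\min}(|\ms_k|)\gtrsim n\gamma$ with high probability, and — the point I would be most careful about — for all large $n$ the signature of the $d$ largest-in-magnitude eigenvalues of $\ma_k$ coincides with that of $\mpp$, so the ``$+$'' and ``$-$'' eigenspaces of $\mpp,\ma_1,\ma_2$ can be indexed consistently. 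A Davis--Kahan bound with eigengap $\asymp n\gamma$ gives $\|(\mi-\muu\muu^\top)\muu_k\|=O_p\big((n\gamma)^{-1/2}\big)$, and an entrywise Bernstein bound (using $\|\muu\|_{2\to\infty}^2=O(n^{-1})$ and $\mathrm{Var}(\ma_k(i,j))\le C\gamma$) gives $\|\muu^\top\me_k\muu\|=O_p(\sqrt\gamma)$. From $\muu_1\ms_1=\ma_1\muu_1$ and $\muu\muu^\top\mpp=\mpp$ I obtain the identity $\mr\ms_1-\ms\mr=\muu^\top\me_1\muu_1$, and writing $\muu_1=\muu\mr+(\mi-\muu\muu^\top)\muu_1$ and applying the two previous bounds yields $\|\mr\ms_1-\ms\mr\|=O_p(1)$; the same reasoning gives $\|(\muu)_+^\top\me_1(\muu_1)_-\|=O_p(1)$ for the cross-signature block.

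The core of the proof is to turn the $\gamma$-free estimate $\|\mr\ms_1-\ms\mr\|=O_p(1)$ into the desired $O_p\big((n\gamma)^{-1/2}\big)$ control of $\mm$, by exploiting the $\Theta(n\gamma)$ scale of the eigenvalues while respecting the indefinite $\mi_{p,q}$ structure. I would split $\mm$ into its $(++),(--),(+-),(-+)$ blocks relative to the common signature. On a sign-consistent block the absolute values drop out, and e.g.\ $\mm_{++}=\mr_{++}(\ms_1^+)^{1/2}-(\ms^+)^{1/2}\mr_{++}$ solves the Sylvester identity $\mm_{++}(\ms_1^+)^{1/2}+(\ms^+)^{1/2}\mm_{++}=(\mr\ms_1-\ms\mr)_{++}$; because the two square roots are positive definite with smallest eigenvalue $\gtrsim\sqrt{n\gamma}$, the Sylvester operator is invertible with inverse of norm $\lesssim(n\gamma)^{-1/2}$, so $\|\mm_{++}\|=O_p\big((n\gamma)^{-1/2}\big)$, and likewise for $(--)$. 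On the cross-signature block the relevant cancellation is different: I would first show $\|\mr_{+-}\|=O_p\big((n\gamma)^{-1}\big)$ from the relation $\mr_{+-}\ms_1^--\ms^+\mr_{+-}=(\muu)_+^\top\me_1(\muu_1)_-$ (right-hand side $O_p(1)$) together with the fact that the eigenvalues of $\ms_1^-$ and $\ms^+$ have opposite signs and magnitude $\asymp n\gamma$, so that this Sylvester operator has inverse of norm $\lesssim(n\gamma)^{-1}$; then $\|\mm_{+-}\|\le\|\mr_{+-}\|\big(\|\ms_1^-\|^{1/2}+\|\ms^+\|^{1/2}\big)=O_p\big((n\gamma)^{-1}\big)\cdot O_p(\sqrt{n\gamma})=O_p\big((n\gamma)^{-1/2}\big)$, and symmetrically for $(-+)$. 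Summing the four blocks gives $\|\mm\|=O_p\big((n\gamma)^{-1/2}\big)$, hence $\|\mt_2\|_{2\to\infty}\le\|\muu\|_{2\to\infty}\|\mm\|=O(n^{-1/2})\cdot O_p\big((n\gamma)^{-1/2}\big)=O_p(n^{-1}\gamma^{-1/2})$, and the bound for $\mt_5$ follows identically.

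The main obstacle is this last step for the cross-signature blocks. The naive estimate $\|\mm_{+-}\|\le\|\mr_{+-}\|\,\|\ms\|^{1/2}$ combined with the Davis--Kahan-only bound $\|\mr_{+-}\|=O_p\big((n\gamma)^{-1/2}\big)$ only yields $O_p(1)$, which is useless; one genuinely has to extract the sharper $O_p\big((n\gamma)^{-1}\big)$ rate for the cross-signature block of $\muu^\top\muu_1$ through the spectral-separation/Sylvester argument above, and, relatedly, one must carefully justify that the signatures of $\ma_1,\ma_2,\mpp$ align for all sufficiently large $n$. Everything else — the $2\to\infty$ bound on $\muu$, the spectral-norm and Davis--Kahan bounds, and the entrywise concentration of $\muu^\top\me_k\muu$ — is routine.
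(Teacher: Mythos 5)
Your proposal is correct and follows essentially the same route as the paper: peel off $\muu$ on the left (using $\|\muu\|_{2\to\infty}=O(n^{-1/2})$), reduce to the $d\times d$ factor $\muu^\top\muu_1|\ms_1|^{1/2}-|\ms|^{1/2}\muu^\top\muu_1$, and control it via the commutator identity $\muu^\top\muu_1\ms_1-\ms\muu^\top\muu_1=\muu^\top(\ma_1-\mpp)\muu_1=O_p(1)$ together with the sharper $O_p((n\gamma)^{-1})$ bound on the cross-signature block of $\muu^\top\muu_1$. Your Sylvester-operator inversions are exactly the paper's Hadamard multipliers $\mh_1,\mh_2$ written in operator rather than entrywise form, and you correctly identify the cross-signature block as the step where the naive bound fails.
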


\begin{proof}
	
For $\mt_2$ we have
$$\begin{aligned}
\left\|\mt_2 \right\|_{2 \rightarrow \infty} 
& \leqslant\left\|\muu\right\|_{2\to \infty} \cdot \left\|\muu^\top\muu_1|\ms_1|^{\frac{1}{2}}-|\ms|^{\frac{1}{2}}\muu^\top\muu_1\right\|_{2}\cdot \|\mw_1^\top\mw_2\|_2 \\
& \leqslant\left\|\muu\right\|_{2\to \infty} \cdot \left\|\muu^\top\muu_1|\ms_1|^{\frac{1}{2}}-|\ms|^{\frac{1}{2}}\muu^\top\muu_1\right\|_{2}.
\end{aligned}$$
For the first part, we have $\left\|\muu\right\|_{2\to \infty}=O_p\left(n^{-1/2}\right)$.
For the second part, we notice that for any $i,j=1,\dots, d$, the entry $ij$ of it can be written as 
$$
\begin{aligned}
(\muu^\top\muu_1|\ms_1|^{\frac{1}{2}}-|\ms|^{\frac{1}{2}}\muu^\top\muu_1)_{i,j}
&=\left[\muu^\top\muu_1\right]_{i,j}\cdot\left(\sqrt{|\lambda_{j}(\ma_1)}|-\sqrt{|\lambda_{i}(\mpp)|}\right).
\end{aligned}
$$
So for $i\leqslant p, j\leqslant p$,
$$
\begin{aligned}
(\muu^\top\muu_1|\ms_1|^{\frac{1}{2}}-|\ms|^{\frac{1}{2}}\muu^\top\muu_1)_{i,j}
&=\left[\muu^\top\muu_1\right]_{i,j}\cdot\left(\sqrt{\lambda_{j}(\ma_1)}-\sqrt{\lambda_{i}(\mpp)}\right)\\
&=\left[\muu^\top\muu_1\right]_{i,j}\cdot\left(\lambda_{j}(\ma_1)-\lambda_{i}(\mpp)\right)\cdot\left(\sqrt{\lambda_{j}(\ma_1)}+\sqrt{\lambda_{i}(\mpp)}\right)^{-1}.
\end{aligned}
$$
Similarly, we have for $i> p, j> p$,
$$
\begin{aligned}
(\muu^\top\muu_1|\ms_1|^{\frac{1}{2}}-|\ms|^{\frac{1}{2}}\muu^\top\muu_1)_{i,j}
&=\left[\muu^\top\muu_1\right]_{i,j}\cdot\left(\lambda_{j}(\ma_1)-\lambda_{i}(\mpp)\right)\cdot\left(-\sqrt{-\lambda_{j}(\ma_1)}-\sqrt{-\lambda_{i}(\mpp)}\right)^{-1}.
\end{aligned}
$$
For $i> p, j\leqslant p$,
$$
\begin{aligned}
(\muu^\top\muu_1|\ms_1|^{\frac{1}{2}}-|\ms|^{\frac{1}{2}}\muu^\top\muu_1)_{i,j}
=&\left[\muu^\top\muu_1\right]_{i,j}\cdot\left(\lambda_{j}(\ma_1)-\lambda_{i}(\mpp)\right)\cdot\left(\sqrt{\lambda_{j}(\ma_1)}+\sqrt{-\lambda_{i}(\mpp)}\right)^{-1}\\
&+2\left[\muu^\top\muu_1\right]_{i,j}\cdot\lambda_{i}(\mpp)\cdot\left(\sqrt{\lambda_{j}(\ma_1)}+\sqrt{-\lambda_{i}(\mpp)}\right)^{-1}.
\end{aligned}
$$
For $i\leqslant p, j> p$,
$$
\begin{aligned}
(\muu^\top\muu_1|\ms_1|^{\frac{1}{2}}-|\ms|^{\frac{1}{2}}\muu^\top\muu_1)_{i,j}
=&\left[\muu^\top\muu_1\right]_{i,j}\cdot\left(\lambda_{j}(\ma_1)-\lambda_{i}(\mpp)\right)\cdot\left(-\sqrt{-\lambda_{j}(\ma_1)}-\sqrt{\lambda_{i}(\mpp)}\right)^{-1}\\
&-2\left[\muu^\top\muu_1\right]_{i,j}\cdot\lambda_{i}(\mpp)\cdot\left(\sqrt{-\lambda_{j}(\ma_1)}+\sqrt{\lambda_{i}(\mpp)}\right)^{-1}.
\end{aligned}
$$
We define matrices $\mh_1,\mh_2\in\real^{d\times d}$ as
$$
\begin{aligned}
	&(\mh_1)_{i,j}=\left(\sqrt{|\lambda_{j}(\ma_1)|}+\sqrt{|\lambda_{i}(\mpp)|}\right)^{-1}\cdot \mathbb{I}(j\leqslant p)+\left(-\sqrt{|\lambda_{j}(\ma_1)|}-\sqrt{|\lambda_{i}(\mpp)|}\right)^{-1}\cdot \mathbb{I}(j>p),\\
	&(\mh_2)_{i,j}=\lambda_{i}(\mpp)\cdot\left(\sqrt{|\lambda_{j}(\ma_1)|}+\sqrt{|\lambda_{i}(\mpp)|}\right)^{-1}.
\end{aligned}
$$
According to Lemma \ref{lm:eigenPandA}, $\lambda_i(\ma_1),\lambda_j(\mpp)=O_p(n\gamma),\Omega_p(n\gamma)$ for any $i,j\leq d$, it follows that
$$
(\mh_1)_{i,j}=O_p\left(\frac{1}{\sqrt{n\gamma}}\right),
(\mh_2)_{i,j}=O_p\left(\sqrt{n\gamma}\right).
$$
Letting $\circ$ denote the Hadamard matrix product, we arrive at the decomposition
\begin{equation}
  \label{eq:V1}
\muu^\top\muu_1|\ms_1|^{\frac{1}{2}}-|\ms|^{\frac{1}{2}}\muu^\top\muu_1=(\muu^\top\muu_1\ms_1-\ms\muu^\top\muu_1)\circ\mh_1+ \mv\circ \mh_2,
\end{equation}
where $$\mv=\left(\begin{array}{cc}
0 & -2\mathbf{U}_{(+)}^{\top} \mathbf{U}_{1(-)} \\
2\mathbf{U}_{(-)}^\top \mathbf{U}_{1(+)} & 0
\end{array}\right).$$
The definition of $\muu_{(+)},\muu_{(-)},\muu_{1(+)},\muu_{1(-)}$ can be found in Lemma \ref{lm:UU-W}, and from the proof of Lemma \ref{lm:UU-W}, we know
$\|\mathbf{U}_{(+)}^{\top} \mathbf{U}_{1(-)}\|_F=O_p\left(\frac{1}{n{\gamma}}\right)$, $
\|\mathbf{U}_{(-)}^\top \mathbf{U}_{1(+)}\|_F=O_p\left(\frac{1}{n{\gamma}}\right)$.
So we have $\left\|\mv\right\|_F=O_p\left(\frac{1}{n{\gamma}}\right)$. It follows that
$$
\|\mv\circ \mh_2\|_2\leqslant d\cdot\|\mh_2\|_{max}\cdot \|\mv\|_F=O_p\left(\sqrt{n\gamma}\right)\cdot O_p\left(\frac{1}{n{\gamma}}\right)=O_p\left(\frac{1}{\sqrt{n\gamma}}\right).
$$
According to Lemma \ref{lm:UUS-SUU}, $\|\muu^\top\muu_1\ms_1-\ms\muu^\top\muu_1\|_F=O_p(1)$, thus we have
$$
\|(\muu^\top\muu_1\ms_1-\ms\muu^\top\muu_1)\circ\mh_1\|_2\leqslant d\cdot\|\mh_1\|_{max}\cdot \|\muu^\top\muu_1\ms_1-\ms\muu^\top\muu_1\|_F=O_p\left(\frac{1}{\sqrt{n\gamma}}\right)\cdot O_p(1)=O_p\left(\frac{1}{\sqrt{n\gamma}}\right).
$$
So we bound the spectral norm of $\muu^\top\muu_1|\ms_1|^{\frac{1}{2}}-|\ms|^{\frac{1}{2}}\muu^\top\muu_1$ as
$$
\|\muu^\top\muu_1|\ms_1|^{\frac{1}{2}}-|\ms|^{\frac{1}{2}}\muu^\top\muu_1\|_2=O_p\left(\frac{1}{\sqrt{n\gamma}}\right).
$$
We finally conclude
$$
\begin{aligned}
\|\mt_2\|_{2\to\infty} 
&\leqslant O_p\left(\frac{1}{\sqrt{n}}\right)\cdot O_p\left(\frac{1}{\sqrt{n\gamma}}\right)\cdot 1
=O_p\left(\frac{1}{n\sqrt{\gamma}}\right).
\end{aligned}
$$

The proof for $\|\mt_5\|_{2\to\infty}$ is almost the same.
\end{proof}

\begin{lemma}
  \label{lm:newT3T4}
  For the term $\mt_3,\mt_4$ in Eq.~\eqref{eq:T1-6}, we have
$$
\|\mt_3\|_{2\to\infty}=O_p\left(n^{-1}\gamma^{-1/2}\right),\quad\|\mt_4\|_{2\to\infty}=O_p\left(n^{-1}\gamma^{-1/2}\right)
.$$
\end{lemma}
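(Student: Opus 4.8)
The plan is to exploit the common left factor $\muu|\ms|^{1/2}$ carried by both $\mt_3$ and $\mt_4$, whose rows control the $2\to\infty$ norm, while bounding the remaining factors in spectral norm. First I would rewrite, using that $\mw_1,\mw_2\in\mathbb{O}_d$ and that $\mw_i$ is the orthogonal polar factor of $\muu^\top\muu_i$ (so that $\mw_i$ is close to $\muu^\top\muu_i$; see Lemma~\ref{lm:UU-W}),
$$\mt_3=\muu|\ms|^{1/2}\bigl(\muu^\top\muu_1-\mw_1\bigr)\mw_1^\top\mw_2,\qquad\mt_4=\muu|\ms|^{1/2}\bigl(\mw_2-\muu^\top\muu_2\bigr).$$
Then, by the submultiplicativity $\|\ma\mb\|_{2\to\infty}\le\|\ma\|_{2\to\infty}\|\mb\|_2$ and the fact that $\|\mw_1^\top\mw_2\|_2=1$, the task reduces to bounding $\|\muu|\ms|^{1/2}\|_{2\to\infty}$ and $\|\muu^\top\muu_i-\mw_i\|_2$ for $i=1,2$.

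For the first quantity, $\muu$ has orthonormal columns with bounded coherence under the homogeneity Condition~(2), so $\|\muu\|_{2\to\infty}=O_p(n^{-1/2})$, and $\||\ms|^{1/2}\|_2=O_p((n\gamma)^{1/2})$ because the nonzero eigenvalues of $\mpp$ are of order $n\gamma$ (Lemma~\ref{lm:eigenPandA}); hence $\|\muu|\ms|^{1/2}\|_{2\to\infty}\le\|\muu\|_{2\to\infty}\,\||\ms|^{1/2}\|_2=O_p(\gamma^{1/2})$. For the second quantity, all singular values of $\muu^\top\muu_i$ lie in $(0,1]$, so with $\mw_i$ taken to be its polar factor,
$$\|\muu^\top\muu_i-\mw_i\|_2=1-\sigma_{\min}(\muu^\top\muu_i)\le\bigl\|(\mi-\muu\muu^\top)\muu_i\bigr\|_2^2=\|\sin\Theta(\muu,\muu_i)\|_2^2,$$
and a Davis--Kahan bound, using the eigengap $\Omega_p(n\gamma)$ of $\mpp$ together with the adjacency concentration $\|\ma_i-\mpp\|_2=O_p((n\gamma)^{1/2})$ that holds in the regime $n\gamma=\Omega(\log^4 n)$ forced by Condition~(1), gives $\|\sin\Theta(\muu,\muu_i)\|_2=O_p((n\gamma)^{-1/2})$ and therefore $\|\muu^\top\muu_i-\mw_i\|_2=O_p((n\gamma)^{-1})$. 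This last estimate is essentially the content of Lemma~\ref{lm:UU-W}, which I would simply invoke; its proof also controls the analogous cross-block deviations when $\mw_i$ is taken block-diagonal in $\mathbb{O}(p)\times\mathbb{O}(q)$. Combining the two estimates yields $\|\mt_3\|_{2\to\infty},\|\mt_4\|_{2\to\infty}=O_p(\gamma^{1/2})\cdot O_p((n\gamma)^{-1})=O_p(n^{-1}\gamma^{-1/2})$, as claimed.

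The crux is the quadratic gain in $\|\muu^\top\muu_i-\mw_i\|_2$: one needs the rate $O_p((n\gamma)^{-1})$ rather than the first-order rate $O_p((n\gamma)^{-1/2})$, since only the former makes $\mt_3$ and $\mt_4$ contribute at the negligible order $n^{-1}\gamma^{-1/2}$ instead of corrupting the leading term $n^{-1/2}$ of Theorem~\ref{thm:main}. This hinges on the elementary fact that the inner product of two orthonormal-column matrices spanning nearly the same subspace deviates from orthogonality only to second order in the principal angles, so the first-order eigenvector perturbation cancels. Everything else --- the coherence bound on $\muu$, the $n\gamma$-scaling of the eigenvalues of $\mpp$, and the $O_p((n\gamma)^{1/2})$ spectral concentration of $\ma_i-\mpp$ --- is standard generalized random dot product graph machinery already packaged in the preceding lemmas.
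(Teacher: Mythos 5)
Your proposal is correct and follows essentially the same route as the paper: the same factorizations $\mt_3=\muu|\ms|^{1/2}(\muu^\top\muu_1-\mw_1)\mw_1^\top\mw_2$ and $\mt_4=\muu|\ms|^{1/2}(\mw_2-\muu^\top\muu_2)$, the same $\|\cdot\|_{2\to\infty}$-times-spectral-norm splitting with $\|\muu\|_{2\to\infty}=O_p(n^{-1/2})$ and $\||\ms|^{1/2}\|_2=O_p((n\gamma)^{1/2})$, and the same invocation of Lemma~\ref{lm:UU-W} for the crucial second-order rate $\|\muu^\top\muu_i-\mw_i\|_2=O_p((n\gamma)^{-1})$. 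Your added remark correctly identifies why the quadratic gain is the crux, and you rightly note that the block-diagonal choice of $\mw_i$ (handled inside Lemma~\ref{lm:UU-W}) is where the cross-block terms are controlled.
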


\begin{proof}
For $\mt_3$ we have
$$
\begin{aligned}
\|\mt_3\|_{2\to\infty} 
&\leqslant \|\muu\|_{2\to\infty}\cdot\||\ms|^{\frac{1}{2}}\|_2\cdot\|\muu^\top\muu_1-\mw_1\|_2\cdot\|\mw_1^\top\mw_2\|_2\\
&\leqslant \|\muu\|_{2\to\infty}\cdot\||\ms|^{\frac{1}{2}}\|_2\cdot\|\muu^\top\muu_1-\mw_1\|_2.
\end{aligned}
$$
We notice $\left\|\muu\right\|_{2\to \infty}=O_p\left(n^{-1/2}\right)$.
From Lemma \ref{lm:eigenPandA}, 
we have $ {\|\mpp\|_{2}}=O_p(n \gamma)$, hence $\||\ms|^{\frac{1}{2}}\|_2=(\||\ms|\|_2)^{\frac{1}{2}} = (\|\mpp\|_{2})^{\frac{1}{2}}=O_p(\sqrt{n \gamma})$.
And according to Lemma \ref{lm:UU-W}, we have
$
\|\muu^\top\muu_1-\mw_1\|_2=O_p\left(\frac{1}{n\gamma}\right)
$.
We immediately conclude
$$
\|\mt_3\|_{2\to\infty}
=O_p\left(\frac{1}{\sqrt{n}}\right)\cdot O_p(\sqrt{n \gamma})\cdot O_p\left(\frac{1}{n\gamma}\right)
=O_p\left(\frac{1}{n\sqrt{\gamma}}\right).
$$

The proof for $\|\mt_4\|_{2\to\infty}$ is almost the same.
\end{proof}

\begin{lemma}
  \label{lm:newT1T6}
  For the term $\mt_1,\mt_6$ in Eq.\eqref{eq:T1-6}, we have
$$
\|\mt_1+\mt_6\|_{2\to\infty}=(1-\rho)^{1/2}\cdot O_p\left(n^{-1/2}\right)+O_p\left((\log n)^{2c}n^{-1}\gamma^{-1/2}\right)
$$ for some constant $c$.
\end{lemma}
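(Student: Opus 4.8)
The plan is to recast the residual terms so that the projector $\mi-\muu\muu^\top$ acts on \emph{centered} noise matrices, and then to split $\mt_1+\mt_6$ into a ``correlation'' part that inherits the factor $(1-\rho)^{1/2}$ from the entrywise variance of $\ma_1-\ma_2$, and a ``perturbation'' part governed by the spectral/$2\to\infty$ concentration of $\ma_k-\mpp$ and the eigenvalue/eigenvector comparisons already established in Lemmas~\ref{lm:eigenPandA}--\ref{lm:UUS-SUU}.

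\emph{Step 1 (rewriting the residuals).} Using the eigenrelations $\ma_k\muu_k=\muu_k\ms_k$, put $\mathbf{R}_k:=\muu_k\ms_k^{-1}|\ms_k|^{1/2}$; since $\ms_k^{-1}|\ms_k|^{1/2}$ is diagonal with entries $\mathrm{sgn}(\lambda)|\lambda|^{-1/2}$, Lemma~\ref{lm:eigenPandA} gives $\|\mathbf{R}_k\|_2=\max_i|\lambda_i(\ma_k)|^{-1/2}=O_p((n\gamma)^{-1/2})$ and $\|\mathbf{R}_k\|_{2\to\infty}\le\|\muu_k\|_{2\to\infty}\|\mathbf{R}_k\|_2=O_p(n^{-1}\gamma^{-1/2})$, and $\muu_k|\ms_k|^{1/2}=\ma_k\mathbf{R}_k$. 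Because $\mpp=\muu\ms\muu^\top$ yields $(\mi-\muu\muu^\top)\mpp=\moo$, we may replace $\ma_k$ by $\ma_k-\mpp$, obtaining $\mt_1=(\mi-\muu\muu^\top)(\ma_1-\mpp)\mathbf{R}_1\mw_1^\top\mw_2$ and $\mt_6=-(\mi-\muu\muu^\top)(\ma_2-\mpp)\mathbf{R}_2$. Adding and subtracting $(\mi-\muu\muu^\top)(\ma_1-\mpp)\mathbf{R}_2$ gives the split
\[
\mt_1+\mt_6=\underbrace{(\mi-\muu\muu^\top)(\ma_1-\ma_2)\mathbf{R}_2}_{\mathbf{Z}_{\mathrm{corr}}}\;+\;\underbrace{(\mi-\muu\muu^\top)(\ma_1-\mpp)\bigl(\mathbf{R}_1\mw_1^\top\mw_2-\mathbf{R}_2\bigr)}_{\mathbf{Z}_{\mathrm{pert}}}.
\]

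\emph{Step 2 (the correlation term).} By Definition~\ref{def:rho-correlated} the entries $\{(\ma_1-\ma_2)(i,j)\}_{i<j}$ are collectively independent, mean zero, and of variance $2\mpp_{ij}(1-\mpp_{ij})(1-\rho)=O(\gamma(1-\rho))$; hence (matrix Bernstein, using Condition~(1)) $\|\ma_1-\ma_2\|_2=(1-\rho)^{1/2}O_p(\sqrt{n\gamma})$ and $\|\ma_1-\ma_2\|_{2\to\infty}=(1-\rho)^{1/2}O_p(\sqrt{n\gamma})$. Since $\mathbf{R}_2$ depends on $\ma_2$, I would decouple it by writing $\mathbf{R}_2=\mathbf{R}^{\circ}\mathbf{Q}_2+\mathbf{\Delta}_2$, where $\mathbf{R}^{\circ}:=\muu\,\mathrm{sgn}(\ms)|\ms|^{-1/2}$ is deterministic given $\mpp$, $\mathbf{Q}_2$ has bounded operator norm, and $\mathbf{\Delta}_2$ is a small remainder bounded via Lemmas~\ref{lm:eigenPandA}--\ref{lm:UU-W} together with a leave-one-out control of $\mathbf{\Delta}_2$ against each row of $\ma_1-\ma_2$ (as in Step 3). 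For the leading piece $(\mi-\muu\muu^\top)(\ma_1-\ma_2)\mathbf{R}^{\circ}$ I would bound the $2\to\infty$ norm by $\|(\ma_1-\ma_2)\mathbf{R}^{\circ}\|_{2\to\infty}+\|\muu\|_{2\to\infty}\|\ma_1-\ma_2\|_2\|\mathbf{R}^{\circ}\|_2$: the first summand is handled by a row-wise Bernstein inequality (the $i$th row of $(\ma_1-\ma_2)\mathbf{R}^{\circ}$ is a sum of independent mean-zero vectors, each $O(n^{-1}\gamma^{-1/2})$-bounded, with total conditional variance $\le 2\gamma(1-\rho)\,\mathrm{tr}(|\ms|^{-1})=O((1-\rho)n^{-1})$) plus a union bound over the $n$ rows, and the second summand is $O_p(n^{-1/2})\cdot(1-\rho)^{1/2}O_p(\sqrt{n\gamma})\cdot O_p((n\gamma)^{-1/2})$. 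Both are $(1-\rho)^{1/2}O_p(n^{-1/2})$, so $\|\mathbf{Z}_{\mathrm{corr}}\|_{2\to\infty}=(1-\rho)^{1/2}O_p(n^{-1/2})$.

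\emph{Step 3 (the perturbation term, and the main obstacle).} Here $\mathbf{R}_1\mw_1^\top\mw_2-\mathbf{R}_2$ is small: exactly as in the proof of Lemma~\ref{lm:newT2T5}, the eigenvalue bounds of Lemma~\ref{lm:eigenPandA}, the alignment bound $\|\muu^\top\muu_k-\mw_k\|_2=O_p((n\gamma)^{-1})$ of Lemma~\ref{lm:UU-W}, and the commutator bound of Lemma~\ref{lm:UUS-SUU} control its spectral norm, the factor $(\log n)^{2c}$ entering through the $2\to\infty$/spectral concentration of $\ma_k-\mpp$ invoked inside those lemmas. The naive estimate $\|(\ma_1-\mpp)\mathbf{G}\|_{2\to\infty}\le\|\ma_1-\mpp\|_{2\to\infty}\|\mathbf{G}\|_2$ is too lossy (since $\|\ma_1-\mpp\|_{2\to\infty}=O_p(\sqrt{n\gamma})$); instead one isolates a leave-one-out surrogate of $\mathbf{G}=\mathbf{R}_1\mw_1^\top\mw_2-\mathbf{R}_2$ that is (nearly) independent of each row of $\ma_1$, reducing the bound to $\|\ma_1-\mpp\|_2$ times the $2\to\infty$ norm of the surrogate plus a negligible cross term, which gives $\|\mathbf{Z}_{\mathrm{pert}}\|_{2\to\infty}=O_p((\log n)^{2c}n^{-1}\gamma^{-1/2})$; adding the two bounds proves the lemma. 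The crux throughout is that the decisive factor $1-\rho$ lives only in the variance of $\ma_1-\ma_2$, so it must be extracted by isolating that matrix \emph{before} any concentration step, while simultaneously decoupling $\mathbf{R}_1,\mathbf{R}_2$ from $\ma_1,\ma_2$; working with the individual noise matrices $\ma_k-\mpp$ destroys the correlation and recovers only the crude $\min_{\mw}\|\hat\mx_1\mw-\hat\mx_2\|_{2\to\infty}=O_p(n^{-1/2})$ rate noted before Theorem~\ref{thm:main}. Maintaining $2\to\infty$ (rather than merely spectral) control through this decoupling is precisely what forces the leave-one-out/matrix-series bookkeeping of \cite{rubin2017statistical}, here adapted to a correlated pair of graphs.
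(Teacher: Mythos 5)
Your central idea --- that the factor $(1-\rho)^{1/2}$ can only be extracted by isolating $\me=\ma_1-\ma_2$ \emph{before} any concentration step --- is exactly the idea driving the paper's proof, and your algebraic split of $\mt_1+\mt_6$ into a correlation piece and a perturbation piece is valid. The difference is in how the dependence problem is handled, and this is where your write-up stops short of a proof. The paper invokes the expansion from \cite{rubin2017statistical} (their B.2.4),
$$\muu_1|\ms_1|^{1/2}=\muu|\ms|^{1/2}\mw_1+(\ma_1-\mpp)\,\muu|\ms|^{-1/2}\mw_1\mi_{p,q}+\mr_1,\qquad \|\mr_1\|_{2\to\infty}=O_p\bigl((\log n)^{2c_1}n^{-1}\gamma^{-1/2}\bigr),$$
in which the matrix multiplying the noise is the \emph{deterministic} $\muu|\ms|^{-1/2}$. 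Since $\mw_1$ and $\mw_2$ are block-diagonal and commute with $\mi_{p,q}$, the two linear terms combine exactly into $(\mi-\muu\muu^\top)\me\,\muu|\ms|^{-1/2}\mw_2\mi_{p,q}$, which is then dispatched by an entrywise Bernstein bound on $\me\muu$ (Lemma~\ref{lm:EUandUEU}); no decoupling or leave-one-out is needed at this stage because all of that work is already packaged into the residuals $\mr_1,\mr_2$ of the cited expansion, which is also where the $(\log n)^{2c}$ comes from.

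Your version instead keeps the random matrices $\mathbf{R}_k=\muu_k\,\mathrm{sgn}(\ms_k)|\ms_k|^{-1/2}$, so both of your terms involve noise multiplied by a matrix correlated with that noise, and the entire burden falls on the decoupling $\mathbf{R}_2=\mathbf{R}^{\circ}\mathbf{Q}_2+\mathbf{\Delta}_2$ and the leave-one-out control in Steps 2--3 --- which are described but not carried out. Two specific points need more than a gesture. First, your Step 3 silently requires $\|\mathbf{R}_1\mw_1^\top\mw_2-\mathbf{R}_2\|_2=O_p((n\gamma)^{-1})$, a full order of magnitude better than the trivial bound $O_p((n\gamma)^{-1/2})$ obtained from $\|\mathbf{R}_k\|_2$ separately; this does follow from combining Lemmas~\ref{lm:eigenPandA}, \ref{lm:(I-UU)U} and \ref{lm:UU-W} (eigenvalue perturbation of order $\sqrt{n\gamma}$ against a gap of order $n\gamma$, plus subspace alignment), but it is not ``exactly as in Lemma~\ref{lm:newT2T5}'' and without it the perturbation term is only $O_p(\sqrt{\log n}\,n^{-1/2})$ with no $(1-\rho)$ factor, which would destroy the theorem. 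Second, the leave-one-out argument needed to make the row-wise Bernstein bounds legitimate when $\mathbf{\Delta}_2$ and $\mathbf{G}$ depend on the very rows being summed is precisely the technical content of the result the paper cites; re-deriving it for a correlated pair is doable but is the bulk of the proof, not a footnote to it. So: right strategy, essentially the paper's strategy, but the two genuinely hard steps are deferred rather than proved.
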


\begin{proof}
According to B.2.4 in \cite{rubin2017statistical},
$$
{\mathbf{U}_1}|{\mathbf{S}_1}|^{\frac{1}{2}}=\mathbf{U}|\mathbf{S}|^{\frac{1}{2}} \mathbf{W}_1+(\mathbf{A}_1-\mathbf{P}) \mathbf{U}|\mathbf{S}|^{-\frac{1}{2}} \mathbf{W}_1 \mathbf{I}_{p, q}+\mathbf{R}_1,
$$
for some (residual) matrix $\mathbf{R}_1 \in \mathbb{R}^{n \times d}$ satisfying $\|\mathbf{R}_1\|_{2 \rightarrow \infty}=O_p\left(\frac{(\log n)^{2 c_1}}{n^{}\gamma^{1 / 2}}\right)$ for some constant $c_1>0$.
And we notice that
$
(\mi-\muu\muu^\top)\muu=0
$,
then
$$
\begin{aligned}
	\mt_1
	&=(\mi-\muu\muu^\top){\mathbf{U}_1}|{\mathbf{S}_1}|^{\frac{1}{2}}\mw_1^\top\mw_2\\
&= (\mi-\muu\muu^\top)(\mathbf{A}_1-\mathbf{P}) \mathbf{U}|\mathbf{S}|^{-\frac{1}{2}} \mathbf{W}_1 \mathbf{I}_{p, q}\mw_1^\top\mw_2+(\mi-\muu\muu^\top)\mr_1\mw_1^\top\mw_2\\
&= (\mi-\muu\muu^\top)(\mathbf{A}_1-\mathbf{P}) \mathbf{U}|\mathbf{S}|^{-\frac{1}{2}} \mathbf{W}_2 \mathbf{I}_{p, q}+(\mi-\muu\muu^\top)\mr_1\mw_1^\top\mw_2.
\end{aligned}
$$
With the similar proof, we have
$$
\begin{aligned}
	\mt_6
&= -(\mi-\muu\muu^\top)(\mathbf{A}_2-\mathbf{P}) \mathbf{U}|\mathbf{S}|^{-\frac{1}{2}} \mathbf{W}_2 \mathbf{I}_{p, q}-(\mi-\muu\muu^\top)\mr_2,
\end{aligned}
$$ 
where $\mr_2$ satisfies $\|\mathbf{R}_2\|_{2 \rightarrow \infty}=O_p\left(\frac{(\log n)^{2 c_2}}{n^{}\gamma^{1 / 2}}\right)$ for some constant $c_2>0$.
We set $\me=\ma_1-\ma_2$. We therefore have 
$$
\begin{aligned}
	\mt_1+\mt_6
=& (\mi-\muu\muu^\top)\me\mathbf{U}|\mathbf{S}|^{-\frac{1}{2}} \mathbf{W}_2 \mathbf{I}_{p, q}+\mr,
\end{aligned}
$$
where we set
$$\mr=\mr_1\mw_1^\top\mw_2-\muu\muu^\top\mr_1\mw_1^\top\mw_2-\mr_2+\muu\muu^\top\mr_2.$$

We bound the terms of $\mr$ one by one. Taking $\muu\muu^\top\mr_1\mw_1^\top\mw_2$ as an example, we can bound the $2\to\infty$ norm as
$$
\begin{aligned}
\|\muu\muu^\top\mr_1\mw_1^\top\mw_2\|_{2\to\infty}
&\leqslant \|\muu\muu^\top\|_\infty\cdot\|\mr_1\|_{2\to\infty}\cdot\|\mw_1^\top\mw_2\|_2\\
&\leqslant O_p(1)\cdot O_p\left(\frac{(\log n)^{2 c_1}}{n^{}\gamma^{1 / 2}}\right)\cdot 1
=O_p\left(\frac{(\log n)^{2 c_1}}{n^{}\gamma^{1 / 2}}\right).
\end{aligned}
$$
Then with the similar analysis, we have $\|\mr_1\mw_1^\top\mw_2\|_{2\to\infty}=O_p\left(\frac{(\log n)^{2 c_1}}{n^{}\gamma^{1 / 2}}\right)$, $\|\mr_2\|_{2\to\infty}=O_p\left(\frac{(\log n)^{2 c_2}}{n^{}\gamma^{1 / 2}}\right)$, $\|\muu\muu^\top\mr_2\|_{2\to\infty}=O_p\left(\frac{(\log n)^{2 c_2}}{n^{}\gamma^{1 / 2}}\right)$. Hence by setting $c=\text{max}\{c_1,c_2\}$, we have
$$
\|\mr\|_{2\to\infty}=O_p\left(\frac{(\log n)^{2 c}}{n^{}\gamma^{1 / 2}}\right).
$$

Now we bound the main part of $\mt_1+\mt_6$. We first have
$$
\begin{aligned}
	\|(\mi-\muu\muu^\top)\me \mathbf{U}|\mathbf{S}|^{-\frac{1}{2}} \mathbf{W}_2 \mathbf{I}_{p, q}\|_{2\to\infty}
	\leqslant & \|\me \mathbf{U}|\mathbf{S}|^{-\frac{1}{2}} \mathbf{W}_2 \mathbf{I}_{p, q}\|_{2\to\infty}+\|\muu\muu^\top\me\mathbf{U}|\mathbf{S}|^{-\frac{1}{2}} \mathbf{W}_2 \mathbf{I}_{p, q}\|_{2\to\infty}.
\end{aligned}
$$
According to Lemma \ref{lm:EUandUEU}, we have $\|\me\muu\|_{2\to\infty}=\sqrt{1-\rho}\cdot O_p(\sqrt{\gamma})$,$\|\muu^{\top}\me\muu\|_F=\sqrt{1-\rho}\cdot O_p(\sqrt{\gamma})$. We then have
$$
\begin{aligned}
	\|\me \mathbf{U}|\mathbf{S}|^{-\frac{1}{2}} \mathbf{W}_2 \mathbf{I}_{p, q}\|_{2\to\infty}
	&\leqslant \|\me\muu\|_{2\to\infty}\cdot \||\mathbf{S}|\|_2^{-\frac{1}{2}}\cdot \|\mathbf{W}_2 \mathbf{I}_{p, q}\|_2\\
	&\leqslant \sqrt{1-\rho}\cdot O_p(\sqrt{\gamma})\cdot O_p\left(\frac{1}{\sqrt{n\gamma}}\right)\cdot 1
	=\sqrt{1-\rho}\cdot O_p\left(\frac{1}{\sqrt{n}}\right),
\end{aligned}
$$
$$
\begin{aligned}
	\|\muu\muu^\top\me\mathbf{U}|\mathbf{S}|^{-\frac{1}{2}} \mathbf{W}_2 \mathbf{I}_{p, q}\|_{2\to\infty}
	&\leqslant \|\muu\|_{2\to\infty}\cdot \|\muu^\top\me\muu\|_F\cdot \||\mathbf{S}|\|_2^{-\frac{1}{2}}\cdot \|\mathbf{W}_2 \mathbf{I}_{p, q}\|_2\\
	&\leqslant O_p\left(\frac{1}{\sqrt{n}}\right)\cdot \sqrt{1-\rho}\cdot O_p(\sqrt{\gamma})\cdot O_p\left(\frac{1}{\sqrt{n\gamma}}\right)\cdot 1
	=\sqrt{1-\rho}\cdot O_p\left(\frac{1}{n}\right).
\end{aligned}
$$
Thus we derive the bound of the main part of $\mt_1+\mt_6$ as
$$
\|(\mi-\muu\muu^\top)\me \mathbf{U}|\mathbf{S}|^{-\frac{1}{2}} \mathbf{W}_2 \mathbf{I}_{p, q}\|_{2\to\infty}=\sqrt{1-\rho}\cdot O_p\left(\frac{1}{\sqrt{n}}\right).
$$
We therefore have
$$
\|\mt_1+\mt_6\|_{2\to\infty}=\sqrt{1-\rho}\cdot O_p\left(\frac{1}{\sqrt{n}}\right)+O_p\left(\frac{(\log n)^{2 c}}{n^{}\gamma^{1 / 2}}\right).
$$
\end{proof}

\begin{lemma}
  \label{lm:eigenPandA}
  Let $|\lambda_{1}(\ma_1)| \geq |\lambda_2(\ma_1)| \geq \dots$ be the
  eigenvalues of $\ma_1$, ordered in decreasing modulus. We then have
  $$\lambda_{k}(\ma_1)=\left\{\begin{array}{ll} {\Omega_{p}( n \gamma),O_p({n \gamma})}
                                & \text{for $k = 1,2, \dots, d$}
                                \\
                                {O_p(\sqrt{n \gamma})} & \text{for $k=d+1, \dots, n$}
\end{array},\right.$$
The same bounds are true for the eigenvalues of $\mathbf{A}_2$. And for the eigenvalues of $\mathbf{P}$, we have
$$\lambda_{k}(\mpp)=\left\{\begin{array}{ll}
{\Omega_p(n \gamma),O_p({n \gamma})} & \text{for $k = 1,2, \dots, d$} \\
{0} & \text{for $k=d+1, \dots, n$}
\end{array}.\right.$$

\end{lemma}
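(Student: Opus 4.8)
The plan is to derive the eigenvalue estimates for $\ma_1$ and $\ma_2$ from those of the population matrix $\mpp$ via a spectral-norm perturbation argument. There are three ingredients: (i) pinning down the orders of the eigenvalues of $\mpp$ itself; (ii) a concentration bound $\|\ma_i-\mpp\|_2 = O_p(\sqrt{n\gamma})$; and (iii) Weyl's inequality to transfer (i) through (ii).

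First I would analyse $\mpp=\gamma\,\mx\mi_{p,q}\mx^\top$ directly. Since $\mathrm{rank}(\mpp)=d$, the matrix $\mpp$ has exactly $n-d$ zero eigenvalues, which is the claim $\lambda_k(\mpp)=0$ for $k>d$. For the $d$ nonzero eigenvalues, note that $\mathrm{rank}(\mpp)=d$ forces $\mx$ to have full column rank, so the nonzero eigenvalues of $\mx\mi_{p,q}\mx^\top$ coincide with the eigenvalues of the $d\times d$ matrix $\mi_{p,q}\mx^\top\mx$, which in turn has the same eigenvalues as $\mx^\top\mx\,\mi_{p,q}$. By Condition (2) of Theorem~\ref{thm:main}, $\tfrac1n\mx^\top\mx\,\mi_{p,q}=\tfrac1n\sum_i X_iX_i^\top\mi_{p,q}\to\Gamma$ almost surely, and $\Gamma$ is invertible (which is exactly what makes $\mathrm{rank}(\mpp)=d$ non-degenerate in the limit). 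Hence the eigenvalues of $\mx^\top\mx\,\mi_{p,q}$ equal $n$ times those of $\Gamma+o(1)$, so all are $\asymp n$ in modulus; multiplying by $\gamma$ gives $\lambda_k(\mpp)=\Omega_p(n\gamma)$ and $O_p(n\gamma)$ for $k=1,\dots,d$.

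Next I would establish $\|\ma_1-\mpp\|_2=O_p(\sqrt{n\gamma})$. Writing $\ma_1=\mpp+(\ma_1-\mpp)$, the upper-triangular entries of $\ma_1-\mpp$ are independent, mean zero, bounded by $1$, with variances $\mpp_{ij}(1-\mpp_{ij})$. Because $\mathcal{X}$ is a fixed bounded set, $\mpp_{ij}=\gamma\,(\mx\mi_{p,q}\mx^\top)_{ij}=O(\gamma)$ uniformly, so the maximal row variance $\max_i\sum_j\mpp_{ij}(1-\mpp_{ij})$ is of order $n\gamma$; Condition (1) then forces this quantity to be $\gtrsim\log^4 n$, in particular $n\gamma\to\infty$. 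In this regime, the standard spectral-norm concentration bound for symmetric random matrices with independent bounded entries --- the same one used in \cite{rubin2017statistical} --- yields $\|\ma_1-\mpp\|_2=O_p(\sqrt{n\gamma})$, and identically $\|\ma_2-\mpp\|_2=O_p(\sqrt{n\gamma})$ since $G_2$ is also marginally $\mathrm{GRDPG}(\mpp)$. Finally, Weyl's inequality gives $|\lambda_k(\ma_1)-\lambda_k(\mpp)|\le\|\ma_1-\mpp\|_2=O_p(\sqrt{n\gamma})$ for every $k$: for $k\le d$, combining with the previous paragraph and $\sqrt{n\gamma}=o(n\gamma)$ yields $\lambda_k(\ma_1)=\Omega_p(n\gamma)$ and $O_p(n\gamma)$; for $k>d$, since $\lambda_k(\mpp)=0$ we get $\lambda_k(\ma_1)=O_p(\sqrt{n\gamma})$; and the same conclusions hold verbatim for $\ma_2$.

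The main obstacle is the concentration step (ii): one must invoke a spectral-norm concentration result for the centered adjacency matrix that remains valid in the sparse regime where $n\gamma$ can be as small as a constant multiple of $\log^4 n$ --- this is precisely why Condition (1) is stated as it is. Everything else, namely the rank-and-eigenvalue computation for $\mpp$ and the bookkeeping with Weyl's inequality, is routine.
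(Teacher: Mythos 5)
Your proposal is correct and follows essentially the same route as the paper: compute the nonzero eigenvalues of $\mpp=\gamma\mx\mi_{p,q}\mx^\top$ via $\lambda_k(\mpp)=n\gamma\,\lambda_k\bigl(\tfrac1n\sum_i X_iX_i^\top\mi_{p,q}\bigr)$ and the convergence assumption, bound $\|\ma_i-\mpp\|_2=O_p(\sqrt{n\gamma})$ with a sparse-regime spectral concentration result (the paper cites Theorem~1 of \cite{lu2013spectra}, which gives $[2+o(1)]\sqrt{\max_i\sum_j\mpp_{ij}}$ under the $\log^4 n$ degree condition), and transfer via Weyl's inequality. Your additional remark that the limit $\Gamma$ must be invertible for the $\Omega_p(n\gamma)$ lower bound is a point the paper leaves implicit, but there is no substantive difference in approach.
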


\begin{proof}
  Recall the assumption on the maximum expected degree of $\mpp$, i.e.,
  $$\max _{1 \leqslant i \leqslant n}
\sum_{j=1}^{n}\mpp_{i,j}\geqslant\max _{1 \leqslant i \leqslant n}
\sum_{j=1}^{n}\mpp_{i,j}(1-\mpp_{i,j})\geqslant C\ln ^{4} n.$$
We then have, from Theorem 1 in
\cite{lu2013spectra} and Weyl's inequality, that
\begin{equation}
  \label{eq:lu1}
\begin{split}
 \max_{k=1,2\dots,n} \left|\lambda_{k}(\ma_1)-\lambda_{k}(\mpp)\right|
 & \leqslant \|\ma_1-\mpp\|_{2}
 \leqslant[2+o(1)] \sqrt{\max _{1 \leqslant i \leqslant n} \sum_{j=1}^{n}\mpp_{i,j}}
 =O_p\left(\sqrt{n \gamma}\right).
\end{split}
\end{equation}
Since $\mpp$ is symmetric and rank$(\mpp)=d$, there exists a
decomposition $\mpp= \gamma \mx\mi_{p,q}\mx^\top,$ where $\mx\in\real^{n\times
  d}$ and each row of $\mx$ correspondents the latent position of each
vertex in $G_1$, and $\mi_{p, q}=\operatorname{diag}(1, \ldots, 1,-1,
\ldots,-1)$ with $p$ ones followed by $q$ minus ones on its diagonal
satisfying $p+q=d$. We now consider the eigenvalues of
$\mpp$. As $\mathbf{P}$ is rank $d$, we have $\lambda_k(\mpp)=0$ for
$k > d$. Furthermore, for $k=1,\cdots, d$
$$\begin{aligned}
\lambda_{k}(\mpp) &=\lambda_{k}\left(\gamma
  \mx\mi_{p,q}\mx^\top\right)= \gamma \lambda_{k}\left(\mx^{\top}
  \mx\mi_{p,q}\right)=n \gamma \cdot \lambda_{k}\left(\frac{1}{n} \sum_{i=1}^{n} X_{i} X_{i}^{\top}\mi_{p,q}\right),
\end{aligned}$$
where $X_i$ represents the $i$th row of $\mx$, i.e., the latent
position of $i$th vertex. Since
$\tfrac{1}{n} \sum_{i=1}^{n} X_{i} X_{i}^{\top}\mi_{p,q}$ converges to
a constant matrix, we have
$$\lambda_{k}(\mpp)=\left\{\begin{array}{ll}
{\Omega_p(n \gamma),O_p({n \gamma})} & \text{for $k = 1,2, \dots, d$} \\
{0} & \text{for $k \geqslant d + 1$}
\end{array}.\right.$$
Eq.~\eqref{eq:lu1} then implies
$$\lambda_{k}(\ma_1)=\left\{\begin{array}{ll}
{\Omega_p(n \gamma),O_p({n \gamma})} & \text{for $k=1, \dots, d$} \\
{O_p(\sqrt{n \gamma})} & \text{for $k \geqslant d+1$}
\end{array}.\right.$$
The proof for the eigenvalues of $\ma_2$ is identical.
\end{proof}

\begin{lemma}
  \label{lm:UUS-SUU}
  For the term in Eq.~\eqref{eq:V1}, we have
  $$\|\muu^\top\muu_1\ms_1-\ms\muu^\top\muu_1\|_F=O_p(1).$$
\end{lemma}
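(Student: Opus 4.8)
The plan is to first algebraically reduce the quantity to a bilinear expression in $\ma_1-\mpp$, and then control it by splitting $\muu_1$ into its components parallel and perpendicular to the column space of $\muu$.

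\textbf{Step 1 (reduction).} Since the columns of $\muu_1$ are eigenvectors of $\ma_1$ with the corresponding eigenvalues on the diagonal of $\ms_1$, we have $\ma_1\muu_1=\muu_1\ms_1$, hence $\muu^\top\ma_1\muu_1=\muu^\top\muu_1\ms_1$. Likewise, from $\mpp=\muu\ms\muu^\top$ and $\muu^\top\muu=\mi$ we get $\muu^\top\mpp\muu_1=\ms\muu^\top\muu_1$. Subtracting the two identities gives
$$\muu^\top\muu_1\ms_1-\ms\muu^\top\muu_1=\muu^\top(\ma_1-\mpp)\muu_1,$$
so it suffices to prove $\|\muu^\top(\ma_1-\mpp)\muu_1\|_F=O_p(1)$.

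\textbf{Step 2 (parallel/perpendicular split).} Write $\muu_1=\muu\muu^\top\muu_1+(\mi-\muu\muu^\top)\muu_1$, so that
$$\muu^\top(\ma_1-\mpp)\muu_1=\bigl[\muu^\top(\ma_1-\mpp)\muu\bigr]\bigl(\muu^\top\muu_1\bigr)+\muu^\top(\ma_1-\mpp)(\mi-\muu\muu^\top)\muu_1.$$
For the first term, conditionally on the latent positions $\muu$ is deterministic, so each of the $d^2$ entries of $\muu^\top(\ma_1-\mpp)\muu$ is a bilinear form in the independent, mean-zero, bounded entries of $\ma_1-\mpp$; since $\mathrm{Var}\bigl((\ma_1)_{kl}\bigr)=\mpp_{kl}(1-\mpp_{kl})\le\mpp_{kl}=O(\gamma)$ and $\muu$ has orthonormal columns, each such entry has second moment $O(\gamma)$, whence (by Chebyshev, or Bernstein for a sharper tail) $\|\muu^\top(\ma_1-\mpp)\muu\|_F=O_p(\gamma^{1/2})=O_p(1)$ because $\gamma\le 1$ and $d$ is fixed. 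Together with $\|\muu^\top\muu_1\|_2\le 1$, the first term is $O_p(1)$.

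\textbf{Step 3 (the perpendicular term).} For the second term, bound the first factor crudely by $\|\muu^\top(\ma_1-\mpp)\|_2\le\|\ma_1-\mpp\|_2=O_p\bigl((n\gamma)^{1/2}\bigr)$, which is exactly the concentration estimate used in (the proof of) Lemma~\ref{lm:eigenPandA} via \cite{lu2013spectra}. The perpendicular component of $\muu_1$ is controlled by the eigengap: since $\lambda_{d+1}(\mpp)=0$ while $|\lambda_d(\mpp)|=\Omega_p(n\gamma)$ by Lemma~\ref{lm:eigenPandA}, the standard $\sin\Theta$ bound---equivalently, the closeness of $\muu^\top\muu_1$ to an orthogonal matrix established in Lemma~\ref{lm:UU-W}, which yields $\|(\mi-\muu\muu^\top)\muu_1\|_F^2=d-\|\muu^\top\muu_1\|_F^2=O_p\bigl((n\gamma)^{-1}\bigr)$---gives $\|(\mi-\muu\muu^\top)\muu_1\|_F=O_p\bigl((n\gamma)^{-1/2}\bigr)$. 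Multiplying, the second term is $O_p\bigl((n\gamma)^{1/2}\bigr)\cdot O_p\bigl((n\gamma)^{-1/2}\bigr)=O_p(1)$. Adding the two bounds proves $\|\muu^\top(\ma_1-\mpp)\muu_1\|_F=O_p(1)$.

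The only genuine subtlety is in Step 3: because $\muu_1$ is random and statistically dependent on $\ma_1-\mpp$, one cannot treat $\muu^\top(\ma_1-\mpp)\muu_1$ as a bilinear form in deterministic vectors and obtain the $O_p(1)$ (indeed $O_p(\gamma^{1/2})$) rate directly. The split in Step 2 isolates the dependence into the perpendicular component $(\mi-\muu\muu^\top)\muu_1$, whose Davis--Kahan bound of order $(n\gamma)^{-1/2}$ is small enough that even the lossy operator-norm bound $\|\ma_1-\mpp\|_2=O_p((n\gamma)^{1/2})$ on the other factor suffices; this trade-off is the crux of the argument.
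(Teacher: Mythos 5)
Your proposal is correct and follows essentially the same route as the paper: the identical reduction $\muu^\top\muu_1\ms_1-\ms\muu^\top\muu_1=\muu^\top(\ma_1-\mpp)\muu_1$, the same parallel/perpendicular split of $\muu_1$, and the same two estimates (Bernstein-type concentration of the fixed bilinear form $\muu^\top(\ma_1-\mpp)\muu=O_p(\gamma^{1/2})$, and the operator-norm times Davis--Kahan bound $O_p((n\gamma)^{1/2})\cdot O_p((n\gamma)^{-1/2})$ for the perpendicular piece, which the paper packages as Lemmas~\ref{lm:EUandUEU} and~\ref{lm:(I-UU)U}). Your closing remark about why the split is needed to break the dependence between $\muu_1$ and $\ma_1-\mpp$ is exactly the point of the paper's argument.
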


\begin{proof}

Notice from the block spectral decomposition $\ma_1=\muu_1\ms_1\muu_1^T+\muu_1'\ms_1'\muu_1'^\top$, we have $\ma_1\muu_1=\muu_1\ms_1$. Similarly, we have $\mpp\muu=\muu\ms$. Then we can split $\muu^\top\muu_1\ms_1-\ms\muu^\top\muu_1$ as
$$\begin{aligned}
	\muu^\top\muu_1\ms_1-\ms\muu^\top\muu_1
	&=\muu^\top\ma_1\muu_1-\muu^\top\mpp\muu_1\\
	&=\muu^\top(\ma_1-\mpp)\muu_1\\
	&=\muu^\top(\ma_1-\mpp)(\mi-\muu\muu^\top)\muu_1+\muu^\top(\ma_1-\mpp)\muu\muu^\top\muu_1
\end{aligned}
$$
And according to Lemma \ref{lm:EUandUEU} and Lemma \ref{lm:(I-UU)U}, we have
$
\|\muu^\top(\ma_1-\mpp)\|_2=O_p(\sqrt{n\gamma}),
\|(\mi-\muu\muu^\top)\muu_1\|_F=O_p(\frac{1}{\sqrt{n\gamma}}),
\|\muu^\top(\ma_1-\mpp)\muu\|_F=O_p(\sqrt{\gamma})
$. We therefore have
$$
\begin{aligned}
	\|\muu^\top\muu_1\ms_1-\ms\muu^\top\muu_1\|_F
&\leq\|\muu^\top(\ma_1-\mpp)\|_2\cdot \|(\mi-\muu\muu^\top)\muu_1\|_F+\|\muu^\top(\ma_1-\mpp)\muu\|_F\cdot\|\muu^\top\muu_1\|_2
\\&=O_p(\sqrt{n\gamma})\cdot O_p(\frac{1}{\sqrt{n\gamma}})+O_p(\sqrt{\gamma})\cdot 1
=O_p(1).
\end{aligned}
$$
\end{proof}

\begin{lemma}
  \label{lm:EUandUEU}
  Let $\me=\ma_1-\ma_2$, then
  $$
  \begin{aligned}
  	&\|\me \muu \|_{2\to\infty}=\sqrt{1-\rho}\cdot O_p(\sqrt{\gamma}), \quad
  \|\me \muu \|_2=\sqrt{1-\rho}\cdot O_p(\sqrt{n\gamma})\\
  &\|\muu^{\top}\me\muu\|_F=\sqrt{1-\rho}\cdot O_p(\sqrt{\gamma}).
  \end{aligned}$$
For $\ma_1-\mpp$ and $\ma_2-\mpp$, we have the same results but without $\sqrt{1-\rho}$.
\end{lemma}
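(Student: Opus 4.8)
The plan is to regard $\me=\ma_1-\ma_2$ as a symmetric random matrix with independent entries whose variances are deflated by the correlation, and then apply standard concentration tools. Conditionally on $\mpp$, the entries $\{\me_{ij}\}_{i<j}$ are independent, mean zero, take values in $[-1,1]$, the diagonal is zero, and by the $\rho$-correlation structure
\[
\mathrm{Var}(\me_{ij})=\mathrm{Var}(\ma_1(i,j))+\mathrm{Var}(\ma_2(i,j))-2\,\mathrm{Cov}(\ma_1(i,j),\ma_2(i,j))=2(1-\rho)\,\mpp_{ij}(1-\mpp_{ij}).
\]
Since the latent positions are bounded in $\ell_2$ norm and $\mpp_{ij}=\gamma X_i^\top\mi_{p,q}X_j$, we have $\mpp_{ij}(1-\mpp_{ij})\leqslant\mpp_{ij}\leqslant C\gamma$, so each entry of $\me$ has variance at most $2C(1-\rho)\gamma$ and $\max_i\sum_j\mathrm{Var}(\me_{ij})=O((1-\rho)n\gamma)$. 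The matrices $\ma_1-\mpp$ and $\ma_2-\mpp$ fit the same description with $\mathrm{Var}(\cdot)=\mpp_{ij}(1-\mpp_{ij})$, i.e.\ with the factor $(1-\rho)$ removed, so each of the three bounds below is proved by the identical argument after dropping the $(1-\rho)^{1/2}$; I will only treat $\me$.

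For the operator-norm bound I would write $\|\me\muu\|_2\leqslant\|\me\|_2\|\muu\|_2=\|\me\|_2$ (orthonormal columns) and apply a concentration inequality for the spectral norm of a symmetric matrix with independent, uniformly bounded, heterogeneous entries --- e.g.\ the estimate of \cite{lu2013spectra} already invoked in Lemma~\ref{lm:eigenPandA} --- which gives $\|\me\|_2=O_p\bigl(\sqrt{\max_i\sum_j\mathrm{Var}(\me_{ij})}\bigr)=(1-\rho)^{1/2}\cdot O_p(\sqrt{n\gamma})$, the hypothesis $\max_i\sum_j\mpp_{ij}(1-\mpp_{ij})\geqslant C\log^4 n$ ensuring the variance proxy dominates the logarithmic correction. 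For the Frobenius bound I would compute second moments directly: $\E\|\muu^\top\me\muu\|_F^2=\sum_{k,\ell}\mathrm{Var}\bigl(\muu_{:,k}^\top\me\,\muu_{:,\ell}\bigr)$, and since the $\me_{ij}$ are independent with mean zero, only diagonal pairs survive in the expansion, so $\mathrm{Var}\bigl(\muu_{:,k}^\top\me\,\muu_{:,\ell}\bigr)=\sum_{i<j}(\muu_{ik}\muu_{j\ell}+\muu_{jk}\muu_{i\ell})^2\,\mathrm{Var}(\me_{ij})$; bounding $\mathrm{Var}(\me_{ij})\leqslant 2C(1-\rho)\gamma$ and using $\muu^\top\muu=\mi_d$ to see the remaining double sum is $O(1)$ yields $\E\|\muu^\top\me\muu\|_F^2=O((1-\rho)\gamma)$ (as $d$ is fixed), and Markov's inequality finishes it.

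The main work is the $2\to\infty$ bound. The $i$-th row of $\me\muu$ is $\sum_{j\neq i}\me_{ij}(\muu)_j$, a sum of independent mean-zero random vectors in $\real^d$ (the $j=i$ term vanishes since $\me_{ii}=0$). Each summand has $\ell_2$-norm at most $\|(\muu)_j\|\leqslant\|\muu\|_{2\to\infty}$, and $\|\muu\|_{2\to\infty}=O_p(n^{-1/2})$: the column spaces of $\muu$ and $\mx$ both equal that of $\mpp$, so $\muu=\mx(\mx^\top\mx)^{-1}\mx^\top\muu$ with the $d\times d$ factor $(\mx^\top\mx)^{-1}\mx^\top\muu$ of spectral norm $\|(\mx^\top\mx)^{-1}\|_2\|\mx\|_2=O_p(n^{-1/2})$ by the second-moment hypothesis and $\mathrm{rank}(\mpp)=d$, whence $\|(\muu)_i\|\leqslant\|X_i\|\cdot O_p(n^{-1/2})=O_p(n^{-1/2})$. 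The total conditional variance of the $i$-th row is then $\sum_j\mathrm{Var}(\me_{ij})\|(\muu)_j\|^2\leqslant O(n^{-1})\sum_j\mathrm{Var}(\me_{ij})=O((1-\rho)\gamma)$, so a vector (matrix) Bernstein inequality controls $\|(\me\muu)_i\|$ at scale $(1-\rho)^{1/2}\cdot O_p(\sqrt\gamma)$; a union bound over the $n$ rows, in which the boundedness contribution to the Bernstein exponent is dominated by the variance contribution precisely because $n\gamma\gtrsim\log^4 n$, gives $\|\me\muu\|_{2\to\infty}=(1-\rho)^{1/2}\cdot O_p(\sqrt\gamma)$. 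The delicate point, and the one I would be most careful about, is keeping the factor $(1-\rho)^{1/2}$ honest throughout: one must not bound $\|\me\muu\|_{2\to\infty}$ by $\|(\ma_1-\mpp)\muu\|_{2\to\infty}+\|(\ma_2-\mpp)\muu\|_{2\to\infty}$, which would discard the correlation gain, but instead work with $\me$ directly so that the $2(1-\rho)$ variance factor is carried into every estimate.
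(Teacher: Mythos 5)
Your proposal is correct and follows essentially the same route as the paper: the key point in both is to work with $\me=\ma_1-\ma_2$ directly so that the entrywise variance $2(1-\rho)\mpp_{ij}(1-\mpp_{ij})$ carries the correlation gain, combined with $\|\muu\|_{2\to\infty}=O_p(n^{-1/2})$ and Bernstein-type concentration on the resulting sums of independent mean-zero terms. Your minor variations --- bounding $\|\me\muu\|_2$ by $\|\me\|_2$ via the spectral-norm concentration of \cite{lu2013spectra} (which is exactly the paper's Lemma~\ref{lm:E}) instead of by $\sqrt{n}\,\|\me\muu\|_{2\to\infty}$, and obtaining the Frobenius bound by a second-moment computation plus Markov instead of entrywise Bernstein --- are equally valid and do not change the substance of the argument.
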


\begin{proof}
For $\muu^{\top}\me\muu$, we note that the $ij$th element of $\muu^{\top}\me \muu$ is of the form
$$
\left(\muu^{\top}\me \muu\right)_{i,j}=\sum_{k<l}2\muu_{k,i}\me_{k,l}\muu_{l,j}+\sum_k\muu_{k,i}\me_{k,k}\muu_{k,j},
$$
which is a sum of {\em independent} mean $0$ random variables, and hence by
Bernstein's inequality we have
$$\left(\muu^{\top}\me \muu\right)_{i,j}=\sqrt{1-\rho}\cdot O_p(\sqrt{\gamma}).$$
Since $\left(\muu^{\top}\me \muu\right)$ is a $d\times d$ matrix where
$d$ is fixed with $n$, by the union bound we have
$$\|\muu^{\top}\me \muu\|_F=\sqrt{1-\rho}\cdot O_p(\sqrt{\gamma}).$$
For $\me\muu$, we notice
$$\begin{aligned}
 \|\me \muu \|_2
 \leqslant \sqrt{n}\cdot \|\me \muu \|_{2\to\infty}	=\sqrt{n}\cdot \max_{1\leqslant i\leqslant n} \|(\me \muu)_i \|_{2}=\sqrt{n}\cdot \max_{1\leqslant i\leqslant n} \sqrt{\sum_{j=1}^d(\me\muu)_{i,j}^2},
 \end{aligned}
$$
where $(\me \muu)_i$ represents the $i$th row of $(\me \muu)$, and by
Bernstein's inequality we have
$$(\me\muu)_{i,j}=\sqrt{1-\rho}\cdot O_p(\sqrt{\gamma}).$$
We therefore have $$\|\me \muu \|_{2\to\infty}=\sqrt{1-\rho}\cdot O_p(\sqrt{\gamma}), \quad
 \|\me \muu \|_2=\sqrt{1-\rho}\cdot O_p(\sqrt{n\gamma}).$$
For $\ma_1-\mpp$ and $\ma_2-\mpp$, the proof is similar.
\end{proof}

\begin{lemma}
	\label{lm:(I-UU)U}
	
	For terms $\muu,\muu_1,\muu_2$ in the block spectral decompositions of $\mpp,\ma_1,\ma_2$, we have
	$$
	\begin{aligned}
		&\|\left(\mi-\muu \muu^{\top}\right)\muu_{1}\|_F=O_{p}\left(\frac{1}{\sqrt{n\gamma}}\right),\quad
		\|\left(\mi-\muu \muu^{\top}\right)\muu_{2}\|_F=O_{p}\left(\frac{1}{\sqrt{n\gamma}}\right),\\
	    &\|\left(\muu_{1} \muu_{1}^{\top}-\mi\right)\muu_2\|_F=\sqrt{1-\rho}\cdot O_{p}\left(\frac{1}{\sqrt{n\gamma}}\right).
	\end{aligned}
	$$
\end{lemma}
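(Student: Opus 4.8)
The plan is to read all three bounds as subspace--perturbation estimates of Davis--Kahan type, powered by the eigenvalue separation in Lemma~\ref{lm:eigenPandA}: the leading $d$ eigenvalues of $\mpp$, $\ma_1$ and $\ma_2$ all have magnitude $\Omega_p(n\gamma)$, while $\mpp$ has no other nonzero eigenvalue and the trailing eigenvalues of $\ma_1,\ma_2$ have magnitude $O_p(\sqrt{n\gamma})$.

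For the first two bounds I would argue directly. Since $\mpp=\muu\ms\muu^\top$ has rank $d$, the projector $\mi-\muu\muu^\top$ annihilates $\mpp$, so from $\ma_1\muu_1=\muu_1\ms_1$,
\begin{equation*}
(\mi-\muu\muu^\top)\muu_1\ms_1=(\mi-\muu\muu^\top)\ma_1\muu_1=(\mi-\muu\muu^\top)(\ma_1-\mpp)\muu_1 ,
\end{equation*}
hence $(\mi-\muu\muu^\top)\muu_1=(\mi-\muu\muu^\top)(\ma_1-\mpp)\muu_1\ms_1^{-1}$, which is well defined with high probability because $\ms_1$ is invertible. Taking Frobenius norms and using $\|\mi-\muu\muu^\top\|_2\le1$, $\|\muu_1\|_F=\sqrt d$, $\|\ma_1-\mpp\|_2=O_p(\sqrt{n\gamma})$ (Eq.~\eqref{eq:lu1}) and $\|\ms_1^{-1}\|_2=|\lambda_d(\ma_1)|^{-1}=O_p((n\gamma)^{-1})$ (Lemma~\ref{lm:eigenPandA}) gives $\|(\mi-\muu\muu^\top)\muu_1\|_F\le\sqrt d\,\|\ma_1-\mpp\|_2\|\ms_1^{-1}\|_2=O_p((n\gamma)^{-1/2})$. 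The estimate for $\muu_2$ is identical after replacing $(\ma_1,\ms_1)$ by $(\ma_2,\ms_2)$.

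For the third bound I would set up a Sylvester equation with $\me=\ma_1-\ma_2$ as the perturbation. Because $[\,\muu_1\mid\muu_1'\,]$ is orthonormal, $(\muu_1\muu_1^\top-\mi)\muu_2=-\muu_1'\muu_1'^\top\muu_2$, so $\|(\muu_1\muu_1^\top-\mi)\muu_2\|_F=\|\muu_1'^\top\muu_2\|_F$; and from $\muu_1'^\top\ma_1=\ms_1'\muu_1'^\top$ together with $\ma_1\muu_2=\muu_2\ms_2+\me\muu_2$,
\begin{equation*}
\ms_1'(\muu_1'^\top\muu_2)-(\muu_1'^\top\muu_2)\ms_2=\muu_1'^\top\me\muu_2 .
\end{equation*}
By Lemma~\ref{lm:eigenPandA} the spectra of $\ms_1'$ and $\ms_2$ are separated by $\Omega_p(n\gamma)-O_p(\sqrt{n\gamma})=\Omega_p(n\gamma)$, so the Sylvester map $\mz\mapsto\ms_1'\mz-\mz\ms_2$ is invertible with $\|(\cdot)^{-1}\|_{\mathrm{op}}=O_p((n\gamma)^{-1})$, giving $\|\muu_1'^\top\muu_2\|_F=O_p((n\gamma)^{-1})\cdot\|\muu_1'^\top\me\muu_2\|_F$. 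To extract the $\sqrt{1-\rho}$, split $\muu_2=\muu\muu^\top\muu_2+(\mi-\muu\muu^\top)\muu_2$: the first piece contributes $\|\me\muu\|_F\|\muu^\top\muu_2\|_2=\sqrt{1-\rho}\cdot O_p(\sqrt{n\gamma})$ by Lemma~\ref{lm:EUandUEU}, and the second contributes $\|\me\|_2\|(\mi-\muu\muu^\top)\muu_2\|_F$ with last factor $O_p((n\gamma)^{-1/2})$ by the bound just established; dividing by the gap then yields $\|\muu_1'^\top\muu_2\|_F=\sqrt{1-\rho}\cdot O_p((n\gamma)^{-1/2})$.

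The one genuinely delicate point is making the residual $\|\muu_1'^\top\me(\mi-\muu\muu^\top)\muu_2\|_F$ carry the $\sqrt{1-\rho}$ factor. The clean route is the operator-norm companion of Lemma~\ref{lm:EUandUEU}, namely $\|\me\|_2=\sqrt{1-\rho}\cdot O_p(\sqrt{n\gamma})$, which follows from matrix Bernstein / Bandeira--van Handel concentration since $\mathrm{Var}(\me_{ij})=2\mpp_{ij}(1-\mpp_{ij})(1-\rho)$ and the degree assumption $\max_i\sum_j\mpp_{ij}(1-\mpp_{ij})\gtrsim\log^4 n$ controls the variance proxy; with it, the residual is $\sqrt{1-\rho}\cdot O_p(1)$ and the argument closes. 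Alternatively, using only the crude $\|\me\|_2\le\|\ma_1-\mpp\|_2+\|\ma_2-\mpp\|_2=O_p(\sqrt{n\gamma})$ leaves an extra $O_p((n\gamma)^{-1})$ term in the final bound, which is $o((n\gamma)^{-1/2})$ under $n\gamma\gtrsim\log^4 n$ and hence immaterial in the regime $\rho$ bounded away from $1$ that Theorem~\ref{thm:main} concerns.
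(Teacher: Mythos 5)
Your proof is correct, but it takes a different route from the paper's. For the first two bounds the paper simply invokes the Davis--Kahan $\sin\Theta$ theorem of Yu, Wang and Samworth (Theorem~2 of \cite{yu2014useful}) together with $\|(\mi-\muu\muu^\top)\muu_1\|_F=\min_{\mt}\|\muu_1-\muu\mt\|_F\le\min_{\mo\in\mathbb{O}_d}\|\muu_1-\muu\mo\|_F$, the gap $\lambda_d(\mpp)-\lambda_{d+1}(\mpp)=\Omega_p(n\gamma)$, and $\|\ma_1-\mpp\|_2=O_p(\sqrt{n\gamma})$; you instead re-derive the same estimate from scratch via the identity $(\mi-\muu\muu^\top)\muu_1=(\mi-\muu\muu^\top)(\ma_1-\mpp)\muu_1\ms_1^{-1}$, which is essentially the textbook proof of Davis--Kahan specialized to this setting and is entirely valid. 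For the third bound the paper again applies the same black-box theorem, now with $\ma_2$ perturbed to $\ma_1$ by $\me=\ma_1-\ma_2$, and gets the $\sqrt{1-\rho}$ factor in one line from Lemma~\ref{lm:E}'s bound $\|\me\|_2=\sqrt{1-\rho}\cdot O_p(\sqrt{n\gamma})$ divided by the gap $\Omega_p(n\gamma)$; your Sylvester-equation argument with the splitting $\muu_2=\muu\muu^\top\muu_2+(\mi-\muu\muu^\top)\muu_2$ reaches the same conclusion but is more laborious than necessary, since the splitting is only needed if one refuses to use the operator-norm bound on $\me$ itself --- and, as you note, that bound is exactly what Lemma~\ref{lm:E} supplies, so the ``clean route'' you describe is the one the paper takes. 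The net trade-off: your argument is self-contained and makes the eigenvalue-separation mechanism explicit, while the paper's is shorter by outsourcing the perturbation bound to a cited theorem; both rest on the same three ingredients (rank-$d$ structure of $\mpp$, the $\Omega_p(n\gamma)$ eigengap from Lemma~\ref{lm:eigenPandA}, and the spectral-norm bounds of Lemma~\ref{lm:E}).
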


\begin{proof}
By applying Theorem 2 in \cite{yu2014useful}, we have $$\arg\min_{\mo\in\mathbb{O}_{d}}\|\muu_{1}-\muu \mo\|_F
\leqslant \frac{2^{3/2}\cdot d ^{1/2}\cdot \|\ma_1-\mpp\|_2}{\lambda_d(\mpp)-\lambda_{(d+1)}(\mpp)},$$
According to Lemma \ref{lm:eigenPandA} and Lemma \ref{lm:E}, we have
$$
\lambda_d(\mpp)=\Omega_p(n\gamma),\lambda_{(d+1)}(\mpp)=0, \|\ma_1-\mpp\|_2=O_p(\sqrt{n\gamma}).
$$
Thus
$
\arg\min_{\mo\in\mathbb{O}_{d}}\|\muu_{1}-\muu \mo\|_F=O_{p}\left(\frac{1}{\sqrt{n\gamma}}\right).$
We therefore have
$$
\begin{aligned}
\|\left(\mi-\muu \muu^{\top}\right)\muu_{1}\|_F=\|\muu_{1}-\muu (\muu^{\top}\muu_{1})\|_F&=\arg\min_\mt \|\muu_{1}-\muu \mt\|_F\\
&\leqslant \arg\min_{\mo\in\mathbb{O}_{d}}\|\muu_{1}-\muu \mo\|_F=O_{p}\left(\frac{1}{\sqrt{n\gamma}}\right).
\end{aligned}
$$
The proof for $\left(\mi-\muu \muu^{\top}\right)\muu_{2}$ is identical.

From Lemma \ref{lm:E}, we have $\|\me\|_2=\sqrt{1-\rho}\cdot O_p(\sqrt{n\gamma})$. Then with the similar analysis, we have $\|\left(\muu_{1} \muu_{1}^{\top}-\mi\right)\muu_2\|_F=\sqrt{1-\rho}\cdot O_{p}\left(\frac{1}{\sqrt{n\gamma}}\right)$.
\end{proof}

\begin{lemma}
  \label{lm:E}
  Let $\me=\ma_1-\ma_2$, then
	$$\|\me\|_2=\sqrt{1-\rho}\cdot O_p(\sqrt{n\gamma}).$$
 And similarly, we have
	$$\|\ma_1-\mpp\|_2=O_p(\sqrt{n\gamma}), \quad\|\ma_2-\mpp\|_2=O_p(\sqrt{n\gamma}).$$
\end{lemma}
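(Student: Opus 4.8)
The plan is to treat $\me=\ma_1-\ma_2$ directly as a symmetric random matrix with independent upper‑triangular entries and apply a spectral‑norm concentration inequality, being careful to keep the correlation $\rho$ inside the variance estimate — this is precisely the information that the naive bound $\|\me\|_2\le\|\ma_1-\mpp\|_2+\|\ma_2-\mpp\|_2$ discards. First I would record the structure of $\me$. Its diagonal is zero, and by Definition~\ref{def:rho-correlated} the variables $\{\me_{ij}\}_{i<j}$ are collectively independent with $\E[\me_{ij}]=\mpp_{ij}-\mpp_{ij}=0$ and, since $\ma_1(i,j)$ and $\ma_2(i,j)$ each have variance $\mpp_{ij}(1-\mpp_{ij})$ and correlation $\rho$,
$$\mathrm{Var}(\me_{ij})=\mathrm{Var}(\ma_1(i,j))+\mathrm{Var}(\ma_2(i,j))-2\,\mathrm{Cov}(\ma_1(i,j),\ma_2(i,j))=2(1-\rho)\,\mpp_{ij}(1-\mpp_{ij}).$$
Moreover $|\me_{ij}|\le 1$ almost surely. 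Thus $\me$ is a mean‑zero symmetric matrix with bounded independent entries whose maximum expected ``variance‑degree'' is $\Delta:=\max_{1\le i\le n}\sum_{j=1}^n\mathrm{Var}(\me_{ij})=2(1-\rho)\max_i\sum_j\mpp_{ij}(1-\mpp_{ij})$.

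Next I would bound $\Delta$ from both sides. Since $\mathcal{X}$ is a fixed bounded set, $\mpp_{ij}=\gamma\,X_i^\top\mi_{p,q}X_j=O(\gamma)$ uniformly in $i,j$, so $\max_i\sum_j\mpp_{ij}(1-\mpp_{ij})\le\max_i\sum_j\mpp_{ij}=O(n\gamma)$ and hence $\Delta=(1-\rho)\cdot O(n\gamma)$. On the other hand, hypothesis~(1) of Theorem~\ref{thm:main} gives $\max_i\sum_j\mpp_{ij}(1-\mpp_{ij})\ge C\log^4 n$, so as long as $\rho$ is bounded away from $1$ we also have $\Delta\ge C'\log^4 n$ — which is exactly the regime in which the concentration bound of \cite{lu2013spectra} (Theorem~1) applies (the case $\rho=1$ being trivial, as then $\ma_1=\ma_2$ a.s.\ and $\me=0$). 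Applying that bound together with Weyl's inequality, exactly as in the proof of Lemma~\ref{lm:eigenPandA}, gives
$$\|\me\|_2\le[2+o(1)]\sqrt{\Delta}=\sqrt{1-\rho}\cdot O_p\!\left(\sqrt{n\gamma}\right).$$

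Finally, the bounds $\|\ma_1-\mpp\|_2=O_p(\sqrt{n\gamma})$ and $\|\ma_2-\mpp\|_2=O_p(\sqrt{n\gamma})$ follow by the same reasoning — indeed they are exactly Eq.~\eqref{eq:lu1} already established inside the proof of Lemma~\ref{lm:eigenPandA} — since $\ma_1-\mpp$ has independent mean‑zero entries bounded by $1$ with $\mathrm{Var}((\ma_1-\mpp)_{ij})=\mpp_{ij}(1-\mpp_{ij})$ and maximum variance‑degree lying between $C\log^4 n$ and $O(n\gamma)$. The only genuinely delicate point I anticipate is verifying that the hypotheses of the matrix concentration inequality remain satisfied after the variance has been scaled down by the factor $(1-\rho)$; this is where the assumption that $\rho$ is bounded away from $1$ enters, and it is precisely what allows the $\sqrt{1-\rho}$ improvement to persist in the final bound rather than being swallowed by a coarser error term.
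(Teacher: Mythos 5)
Your proposal is correct and follows essentially the same route as the paper: compute $\mathrm{Var}(\me_{ij})=2(1-\rho)\mpp_{ij}(1-\mpp_{ij})$, check that the maximum row-sum of variances is $(1-\rho)\cdot O(n\gamma)$ and at least of order $\log^4 n$, and invoke the spectral-norm concentration result of Lu and Peng (the paper cites their Theorem~7, the general version for symmetric matrices with independent bounded entries, rather than Theorem~1, which is stated for an adjacency matrix minus its expectation and is what the paper uses for $\|\ma_i-\mpp\|_2$). The only cosmetic slip is the claim that $\me$ has zero diagonal, which need not hold under Definition~\ref{def:grdpg}, but this does not affect the argument.
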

\begin{proof}
  Recall that $\me_{ij} = \ma_{1,ij} - \ma_{2,ij}$ where $\ma_{1,ij}$ and
  $\ma_{2,ij}$ are $\rho$-correlated Bernoulli random variables. We thus have
$\mathrm{Var}[\me_{i,j}]=2\mpp_{i,j}(1-\mpp_{i,j})(1-\rho)$.
Furthermore, according to our assumption, 
$$\max _{1 \leqslant i \leqslant n}
\sum_{j=1}^{n}\text{Var}[\me_{i,j}]=2(1-\rho)\max _{1 \leqslant i
  \leqslant n} \sum_{j=1}^{n}\mpp_{i,j}(1-\mpp_{i,j})\geqslant
2(1-\rho)\cdot C \log^{4}n.$$
On the other hand, because $X_i$ magnitude does not change with $n$, $\mpp_{i,j}=\gamma\cdot X_i\mi_{p,q}X_j^\top=O_p(\gamma)$. We therefore have 
$\max _{1 \leqslant i \leqslant n}
\sum_{j=1}^{n}\text{Var}[\me_{i,j}]=(1-\rho)\cdot O_p(n\gamma).$
Applying Theorem 7 in \cite{lu2013spectra} yields the stated claim. The proof for $\|\ma_1-\mpp\|_2,\|\ma_2-\mpp\|_2$ is similar.
\end{proof}

\begin{lemma}
  \label{lm:UU-W}
  With proper setting of $\mw_1,\mw_2$, we can bound $\muu^\top\muu_1-\mw_1,\muu^\top\muu_2-\mw_2$ as
	$$\|\muu^\top\muu_1-\mw_1\|_F= O_p\left(\frac{1}{n \gamma}\right),\quad
	\|\muu^\top\muu_2-\mw_2\|_F= O_p\left(\frac{1}{n \gamma}\right).$$
\end{lemma}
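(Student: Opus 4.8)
The plan is to take $\mw_1$ to be the orthogonal factor in the polar decomposition of $\muu^{\top}\muu_1$, equivalently the unique minimizer of $\min_{\mw\in\mathbb{O}_d}\|\muu^{\top}\muu_1-\mw\|_F$; this is also the orthogonal matrix implicit in the asymptotic expansion of $\muu_1|\ms_1|^{1/2}$ used in the proof of Lemma~\ref{lm:newT1T6} (and $\mw_2$ is defined analogously from $\muu^{\top}\muu_2$). Since $\muu$ and $\muu_1$ each have $d$ orthonormal columns, the singular values $1\geq\sigma_1\geq\cdots\geq\sigma_d\geq 0$ of $\muu^{\top}\muu_1$ are the cosines of the principal angles $\theta_1,\dots,\theta_d$ between the column spaces of $\muu$ and $\muu_1$, and with the above choice of $\mw_1$ one has
$$\|\muu^{\top}\muu_1-\mw_1\|_F^2=\sum_{i=1}^d(1-\sigma_i)^2=\sum_{i=1}^d(1-\cos\theta_i)^2.$$

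First I would record the standard identity relating this quantity to the already-controlled subspace distance: $\|(\mi-\muu\muu^{\top})\muu_1\|_F^2=d-\|\muu^{\top}\muu_1\|_F^2=\sum_{i=1}^d\sin^2\theta_i$, so that Lemma~\ref{lm:(I-UU)U} gives $\sum_{i=1}^d\sin^2\theta_i=O_p((n\gamma)^{-1})$. The one genuinely quantitative step is that $1-\cos\theta_i$ is \emph{second order} in $\sin\theta_i$: since $\cos\theta_i=\sigma_i\geq 0$,
$$1-\cos\theta_i=\frac{\sin^2\theta_i}{1+\cos\theta_i}\leq\sin^2\theta_i.$$
Combining this with $\sum_i a_i^2\leq(\sum_i a_i)^2$ for nonnegative $a_i$,
$$\|\muu^{\top}\muu_1-\mw_1\|_F^2\leq\sum_{i=1}^d\sin^4\theta_i\leq\Bigl(\sum_{i=1}^d\sin^2\theta_i\Bigr)^2=\|(\mi-\muu\muu^{\top})\muu_1\|_F^4=O_p\bigl((n\gamma)^{-2}\bigr),$$
which gives $\|\muu^{\top}\muu_1-\mw_1\|_F=O_p((n\gamma)^{-1})$. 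The bound for $\|\muu^{\top}\muu_2-\mw_2\|_F$ follows verbatim using the second estimate of Lemma~\ref{lm:(I-UU)U}.

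The main thing to get right is the rate, not any single estimate: a one-line triangle-inequality argument only yields $\|\muu^{\top}\muu_1-\mw_1\|_F\leq\|(\mi-\muu\muu^{\top})\muu_1\|_F=O_p\bigl((n\gamma)^{-1/2}\bigr)$, whereas the lemma (and the $O_p(n^{-1}\gamma^{-1/2})$ contributions it feeds into via Lemmas~\ref{lm:newT2T5}--\ref{lm:newT3T4}) needs the extra factor $(n\gamma)^{-1/2}$, which is exactly what the quadratic bound $1-\cos\theta_i\leq\sin^2\theta_i$ together with $\sum\sin^4\theta_i\leq(\sum\sin^2\theta_i)^2$ provides. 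The only implicit hypothesis worth flagging is that the subspace comparison behind Lemma~\ref{lm:(I-UU)U} is legitimate: the span $\muu_1$ of the $d$ largest-in-magnitude eigenvectors of $\ma_1$ is well separated from the rest of the spectrum because, by Lemma~\ref{lm:eigenPandA} and Lemma~\ref{lm:E} together with Weyl's inequality, the top $d$ eigenvalues of $\ma_1$ are $\Theta_p(n\gamma)$ while the remaining ones are $O_p(\sqrt{n\gamma})$, so the eigen-gap is of order $n\gamma$ and the Davis--Kahan bound used in Lemma~\ref{lm:(I-UU)U} applies; no additional work is needed here.
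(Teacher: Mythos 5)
Your proof is correct as a proof of the lemma as literally stated, and its quantitative core is the same as the paper's: the observation that $0\le 1-\sigma_i\le 1-\sigma_i^2$ so that $\sum_i(1-\sigma_i)^2\le\bigl(\sum_i(1-\sigma_i^2)\bigr)^2$, which squares the $O_p((n\gamma)^{-1/2})$ subspace distance from Lemma~\ref{lm:(I-UU)U} (equivalently Davis--Kahan) into the claimed $O_p((n\gamma)^{-1})$ rate. Where you diverge is in the choice of $\mathbf{W}_1$: you take the polar factor of the full $d\times d$ matrix $\mathbf{U}^\top\mathbf{U}_1$ and treat all $d$ principal angles at once, whereas the paper constructs $\mathbf{W}_1$ as a \emph{block-diagonal} matrix $\mathrm{diag}(\mathbf{W}_{1(+)},\mathbf{W}_{1(-)})$ from separate SVDs of $\mathbf{U}_{(+)}^\top\mathbf{U}_{1(+)}$ and $\mathbf{U}_{(-)}^\top\mathbf{U}_{1(-)}$, applies your argument to each diagonal block, and then separately bounds the off-diagonal blocks $\mathbf{U}_{(+)}^\top\mathbf{U}_{1(-)}$ and $\mathbf{U}_{(-)}^\top\mathbf{U}_{1(+)}$ via the identity $(u_{(+)}^{i})^\top u_{1(-)}^j=(u_{(+)}^{i})^\top(\mathbf{P}-\mathbf{A}_1)u_{1(-)}^j/(\lambda_{(+)i}(\mathbf{P})-\lambda_{(-)j}(\mathbf{A}_1))$ together with the $\Omega_p(n\gamma)$ separation between the positive eigenvalues of $\mathbf{P}$ and the negative eigenvalues of $\mathbf{A}_1$. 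Your route is cleaner and avoids that extra step, but the block structure is not cosmetic: (i) the matrix $\mathbf{W}_1$ that the rest of the argument needs is the one appearing in the expansion $\mathbf{U}_1|\mathbf{S}_1|^{1/2}=\mathbf{U}|\mathbf{S}|^{1/2}\mathbf{W}_1+(\mathbf{A}_1-\mathbf{P})\mathbf{U}|\mathbf{S}|^{-1/2}\mathbf{W}_1\mathbf{I}_{p,q}+\mathbf{R}_1$ used in Lemma~\ref{lm:newT1T6}, which must respect the $(p,q)$ signature (so that it interacts correctly with $\mathbf{I}_{p,q}$), and the global polar factor you define is not, a priori, that matrix nor block-diagonal; and (ii) the off-diagonal bounds $\|\mathbf{U}_{(+)}^\top\mathbf{U}_{1(-)}\|_F=O_p((n\gamma)^{-1})$ established inside the paper's proof are reused verbatim in Lemma~\ref{lm:newT2T5} to control $\|\mathbf{V}\|_F$, and your argument does not deliver them. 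So as a self-contained proof of the displayed bound your write-up is fine, but to serve the role this lemma plays in the proof of Theorem~\ref{thm:main} you would still need the block-wise construction and the off-diagonal estimate.
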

\begin{proof}
Without loss of generality, we let
$\mathbf{U}_1 =\left[\mathbf{U}_{1(+)} | \mathbf{U}_{1(-)}\right]$
such that the columns of $\mathbf{U}_{1(+)}$ and $\mathbf{U}_{1(-)}$
consist of orthonormal eigenvectors corresponding to the largest $p$
positive and $q$ negative non-zero eigenvalues of $\ma_1$,
respectively. Similarly, we set $\mathbf{U} =\left[\mathbf{U}_{(+)}
| \mathbf{U}_{(-)}\right]$. We therefore have
$$
\muu_{}^{\top} \muu_1=\left(\begin{array}{cc}
\muu_{(+)}^{\top} \muu_{1(+)}  & \muu_{(+)}^{\top} \muu_{1(-)} \\
\muu_{(-)}^{\top} \muu_{1(+)} &  \muu_{(-)}^{\top} \muu_{1(-)} 
\end{array}\right)\in\real^{d\times d}. $$
Let $ \muu_{(+)}^{\top}
\muu_{1(+)}=\mv_{(+)}\mathbf{\Lambda_{(+)}}\mv_{(+)1}^\top$ be the
singular value decomposition of 
$\mathbf{U}_{(+)}^{\top} {\mathbf{U}}_{1(+)} \in \mathbb{R}^{p \times
  p}$ and define $\mw_{1(+)}=\mv_{(+)} \mv_{(+)1}^{\top} \in
\mathbb{O}_{p}$. Similarly, define $\mw_{1(-)}\in\mathbb{O}_{q}$
using the singular value
decomposition of $\mathbf{U}_{(-)}^{\top} {\mathbf{U}}_{1(-)}$ and
define $\mw_1$ as the block-diagonal matrix
$$
\mw_1=\left(\begin{array}{cc}
\mw_{1(+)}  & 0 \\
0 &  \mw_{1(-)}
\end{array}\right)\in\mathbb{O}_{d}.
$$
We now analyze each block of $\muu_{}^{\top} \muu_1-\mw_1$. For the first diagonal block,
$$
\begin{aligned}
\left\|\muu_{(+)}^{\top} \muu_{1(+)}-\mw_{1(+)}\right\|_{F}^{2}
&=\operatorname{tr}\left[\left(\muu_{(+)}^{\top} \muu_{1(+)}-\mw_{1(+)}\right)\left(\muu_{(+)}^{\top} \muu_{1(+)}-\mw_{1(+)}\right)^{\top}\right] \\
&=\operatorname{tr}\left(\mathbf{\Lambda}_{(+)}^2-2\mathbf{\Lambda}_{(+)}+\mathbf{I}\right)\\
&=\sum_{i=1}^{p}(1-\sigma_{i})^2,
\end{aligned}
$$
where $\sigma_{1}, \dots, \sigma_{p}$ are the singular values of
$\muu_{(+)}^{\top} \muu_{1(+)}$ and $\mathbf{\Lambda}_{(+)}=\text{diag}\{\sigma_1, \dots,
\sigma_p\}$. Since $\muu_{(+)}$ and $\muu_{1(+)}$ both have
orthonormal columns, 
$\|\muu_{(+)}^\top\muu_{1(+)}\|_2\leqslant1$ and hence, for all $i= 1,2\dots,p$,
\begin{equation*}
\begin{aligned}
0 \leqslant 1-\sigma_{i} &=\frac{1-\sigma_{i}^2}{1 + \sigma_i}
\leqslant 1-\sigma_i^2.
\end{aligned}
\end{equation*}
We therefore have
\begin{equation}
  \label{eq:rel1}
  \sum_{i=1}^{p} (1-\sigma_{i})^2
\leq \sum_{i=1}^{p} (1 - \sigma_{i}^2)^2 \leq \Bigl(\sum_{i=1}^{p} 1
- \sigma_{i}^2 \Bigr)^2.
\end{equation}
Recalling the relationship between the $\sin$-$\Theta$ distance and
singular values (see e.g., Lemma~1 in \cite{cai2018rate}), we have
$$\begin{aligned}
\sum_{i=1}^{p} (1 - \sigma_{i}^2) \leq \inf_{\mo\in \mathbb{O}_p}\|\muu_{(+)}-\muu_{1(+)}\mo\|_F^{2}.
\end{aligned}$$
Eq.~\eqref{eq:rel1} then implies
\begin{equation*}
\|\muu_{(+)}^{\top} \muu_{1(+)}-\mw_{1(+)}\|_{F} \leq \inf_{\mo\in \mathbb{O}_p}\|\muu_{(+)}-\muu_{1(+)}\mo\|_F^{2}.
\end{equation*}
By applying Theorem 2 in \cite{yu2014useful}, we have
$$\begin{aligned}
\inf_{\mo\in \mathbb{O}_p}\|\muu_{(+)}-\muu_{1(+)}\mo\|_F &\leqslant \frac{2^{3 / 2} \cdot d^{1 / 2} \cdot\|\ma_1-\mpp\|_{2}}{\lambda_{(+)p}\left(\mathbf{P}\right)},
\end{aligned}
$$
where $\lambda_{(+)p}(\mathbf{P})$  is the $p$-th largest positive
eigenvalue of $\mathbf{P} = \mathbb{E}[\ma_1]$. According to Lemma \ref{lm:eigenPandA} and
Lemma~\ref{lm:E}, we have
$$\lambda_{(+)p}\left(\mathbf{P}\right)=\Omega_p(n \gamma), \quad \|\ma_1-\mpp\|_2=O_p(\sqrt{n \gamma}).$$
We therefore have
\begin{gather*}\inf_{\mo\in
  \mathbb{O}_p}\|\muu_{(+)}-\muu_{1(+)}\mo\|_F=
O_p\left(\frac{1}{\sqrt{n \gamma}}\right), \quad 
\left\|\muu_{(+)}^{\top}
  \muu_{1(+)}-\mw_{1(+)}\right\|_{F}=
O_p\left(\frac{1}{n \gamma}\right).
\end{gather*}
The same argument also yield
$\left\|\muu_{(-)}^{\top}
\muu_{1(-)}-\mw_{1(-)}\right\|_{F}=
O_p((n
\gamma)^{-1})$. 

We now bound
  $\muu_{(+)}^{\top} \muu_{1(-)}$. Let $u_{(+)}^i$ and $u_{1(-)}^j$
  be the $i$th column of $\muu_{(+)}$ and $j$th column of
  $\muu_{1(-)}$, respectively. The $ij$-th entry of
  $\muu_{(+)}^{\top} \muu_{1(-)}$ is $(u_{(+)}^{i})^{\top} u_{1(-)}^j$ and hence
$$(u_{(+)}^{i})^{\top} u_{1(-)}^j
=\frac{(u_{(+)}^{i})^{\top} (\mpp-\ma_1) u_{1(-)}^j}{\lambda_{(+)i}(\mpp)-\lambda_{(-)j}(\ma_1)}.$$
As the positive eigenvalues of $\mathbf{P}$ are separated from the
negative eigenvalues of $\mathbf{A}_1$, we have, by
Lemma~\ref{lm:eigenPandA},
$\left[\lambda_{(+)i}(\mpp)-\lambda_{(-)j}(\ma_1)\right]^{-1}=O_p((n
\gamma)^{-1})$. 
Furthermore, since $(u_{(+)}^{i\top})(\mpp-\ma_1)
u_{1(-)}^j$ is the $ij$-th entry of
$\muu_{(+)}^{\top}(\mpp-\ma_1)\muu_{1(-)}$, we have
$$\begin{aligned}
	\|\muu_{(+)}^{\top} \muu_{1(-)}\|_F
	=O_p\left(\frac{1}{n \gamma}\right)\cdot
    \|\muu_{(+)}^{\top}(\mpp-\ma_1)\muu_{1(-)}\|_F\leqslant
    O_p\left(\frac{1}{n \gamma}\right)\cdot \|\muu_{}^{\top}(\mpp-\ma_1)\muu_{1}\|_F.
\end{aligned}
$$
Finally, in Lemma \ref{lm:newT2T5}, we have proved
$\|\muu_{}^{\top}(\mpp-\ma_1)\muu_{1}\|_F=O_p(1)$ and hence
$$
\|\muu_{(+)}^{\top} \muu_{1(-)}\|_F
=O_p\left(\frac{1}{n \gamma}\right).
$$
An identical argument also yield 
$\|\muu_{(-)}^{\top} \muu_{1(+)}\|_F=O_p((n
\gamma)^{-1})$. 
Combining the various blocks together, we derive
$$
\|\muu^\top\muu_1-\mw_1\|_F=O_p\left(\frac{1}{n \gamma}\right).
$$

The proof for $\|\muu^\top\muu_2-\mw_2\|_F$ is identical.
\end{proof}

\newpage

\section{Additional Results of Data Analysis}
\label{sec:add res}

Additional results of the simulated and real data are provided in this
section.  In particular, Appendix~\ref{sec:d changed} and
Appendix~\ref{sec:lambda changed} contain additional results of the
simulated and real data to illustrate how the choice of the embedding
dimension $d$ and the penalty parameter $\lambda$ affects the
performance of our algorithm, respectively; see
Figure~\ref{rho-RDPG_d}, Figure~\ref{rho-SBM_d},
Figure~\ref{highschool_shared_Algo1_d} and Table~\ref{tab:Bing1_d} for
the embedding dimension $d$, and see Table~\ref{tab:RDPG_lambda},
Table~\ref{tab:SBM_lambda},
Figure~\ref{highschool_shared_Algo1_lambda} and
Table~\ref{tab:Bing1_lambda} for the penalty parameter $\lambda$.
Figure~\ref{rho-SBM_gamma} in Appendix~\ref{sec:gamma changed} shows
the robustness of our algorithm with different values of the sparsity
parameter $\gamma$ using $\rho$-$\mathrm{SBM}$ setting as an example.
Figure~\ref{highschool_shared_Algo1_rerank} and
Table~\ref{tab:Bing1_reranking} in Appendix~\ref{sec:real data
reranking} illustrate how the reranking step improves the performance
of our algorithm for the real data.  The comparisons between our
algorithm and the embedding followed by Gaussian mixture modeling
algorithm of \cite{agterberg2020vertex} are provided in
Appendix~\ref{sec:compare}. 
Finally, Figure~\ref{rho-RDPG_n=1000}
and Figure~\ref{rho-SBM_n=1000} in Appendix~\ref{sec:n=1000} show the
simulation results when we increase the number of vertices from $n =
300$ (as used in Section~\ref{sec:simulation} of the main paper) to $n
= 1000$. In particular the performance of our algorithm improves for $n = 1000$
in the $\rho$-RDPG setting and is stable in the $\rho$-SBM
setting. This is as expected because the latent positions for a
$\rho$-SBM are sampled from a mixture of point masses and hence as $n$
increases the number of points from the same block/classes also
increases. 

\subsection{Results with $d$ changed}
\label{sec:d changed}


\begin{figure}[h!]
\centering
\subfigure{\includegraphics[width=0.35\textwidth]{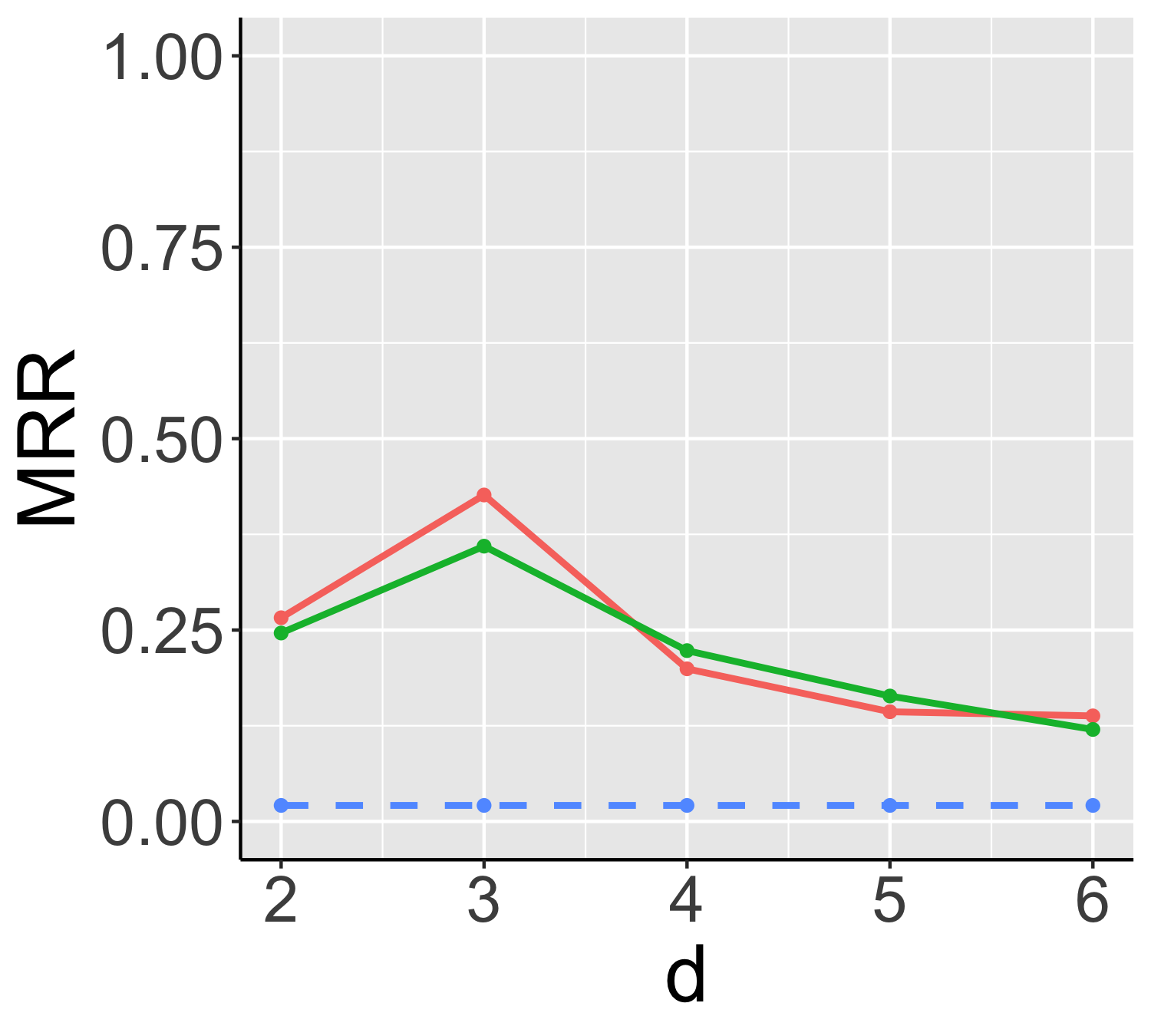}}
\subfigure{\includegraphics[width=0.35\textwidth]{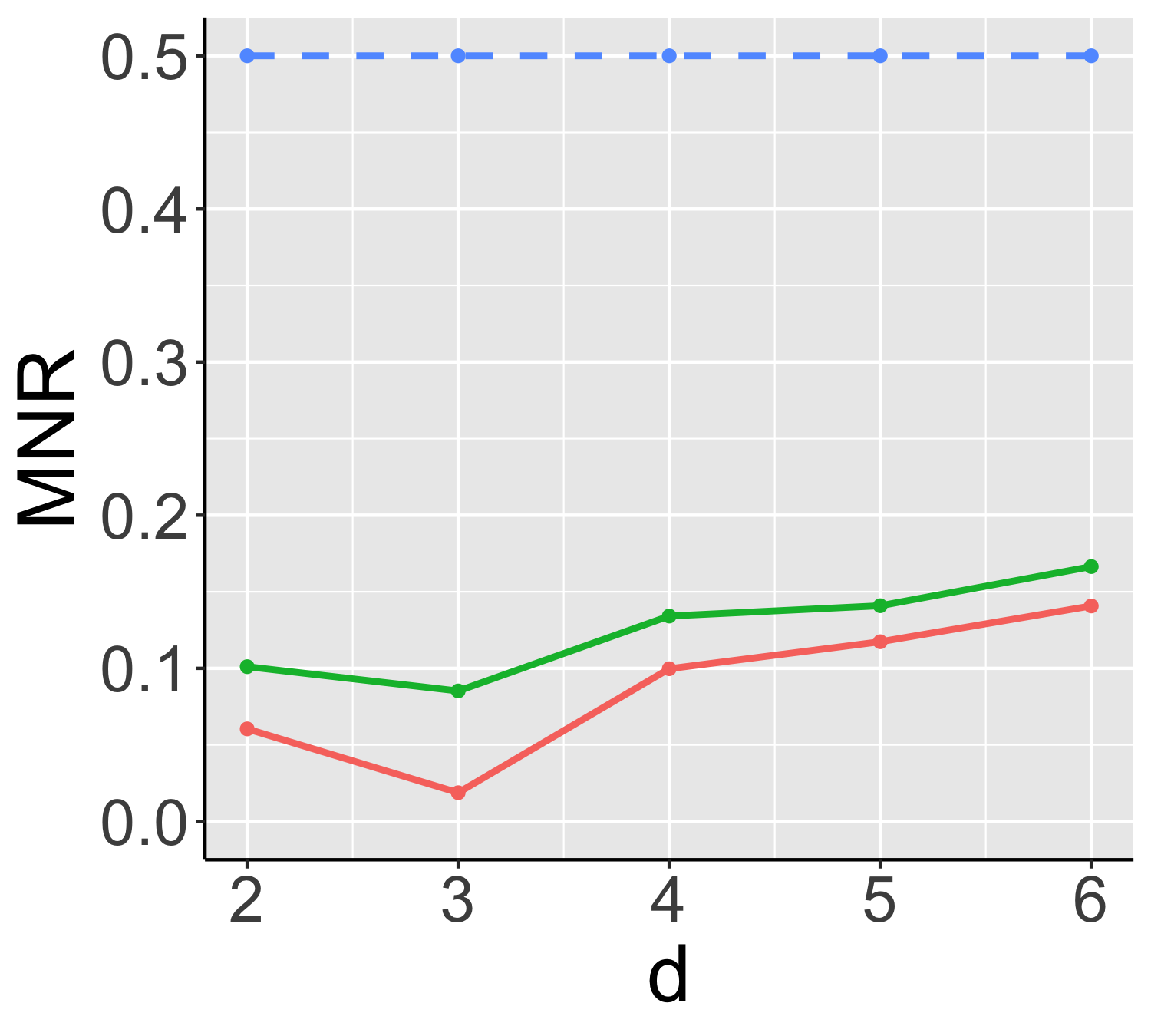}}
\caption{Performance of our algorithm for pairs of
  $\rho$-$\mathrm{RDPG}$ graphs on $n = 300$ vertices. The mean
  reciprocal rank (MRR) and mean normalized rank (MNR) are computed
  based on $500$ Monte Carlo replicates. The MRR and MNR are plotted
  for different values of the embedding parameter $d$. }
\label{rho-RDPG_d}
\end{figure}


\begin{figure}[h!]
\centering
\subfigure{\includegraphics[width=0.35\textwidth]{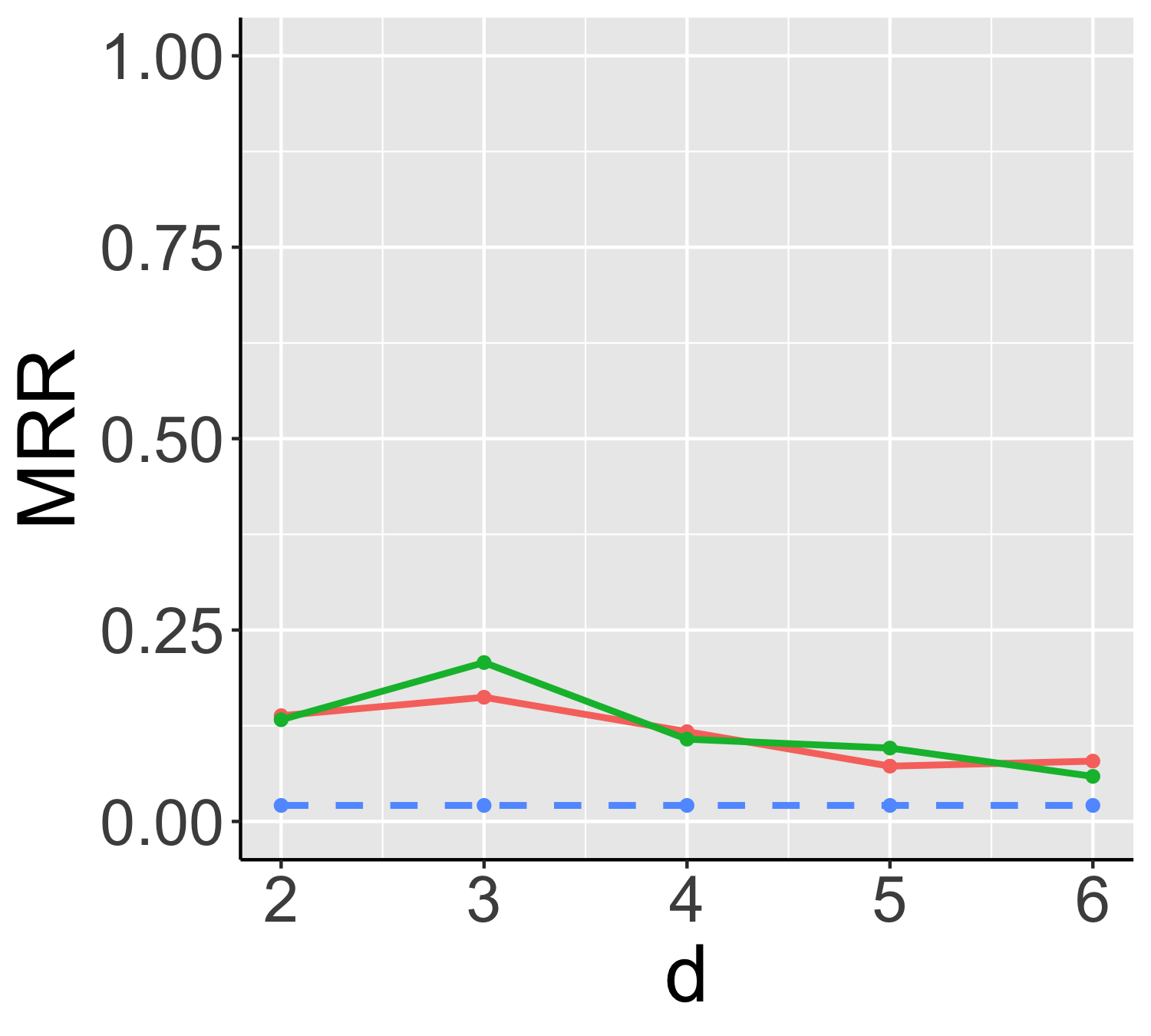}}
\subfigure{\includegraphics[width=0.35\textwidth]{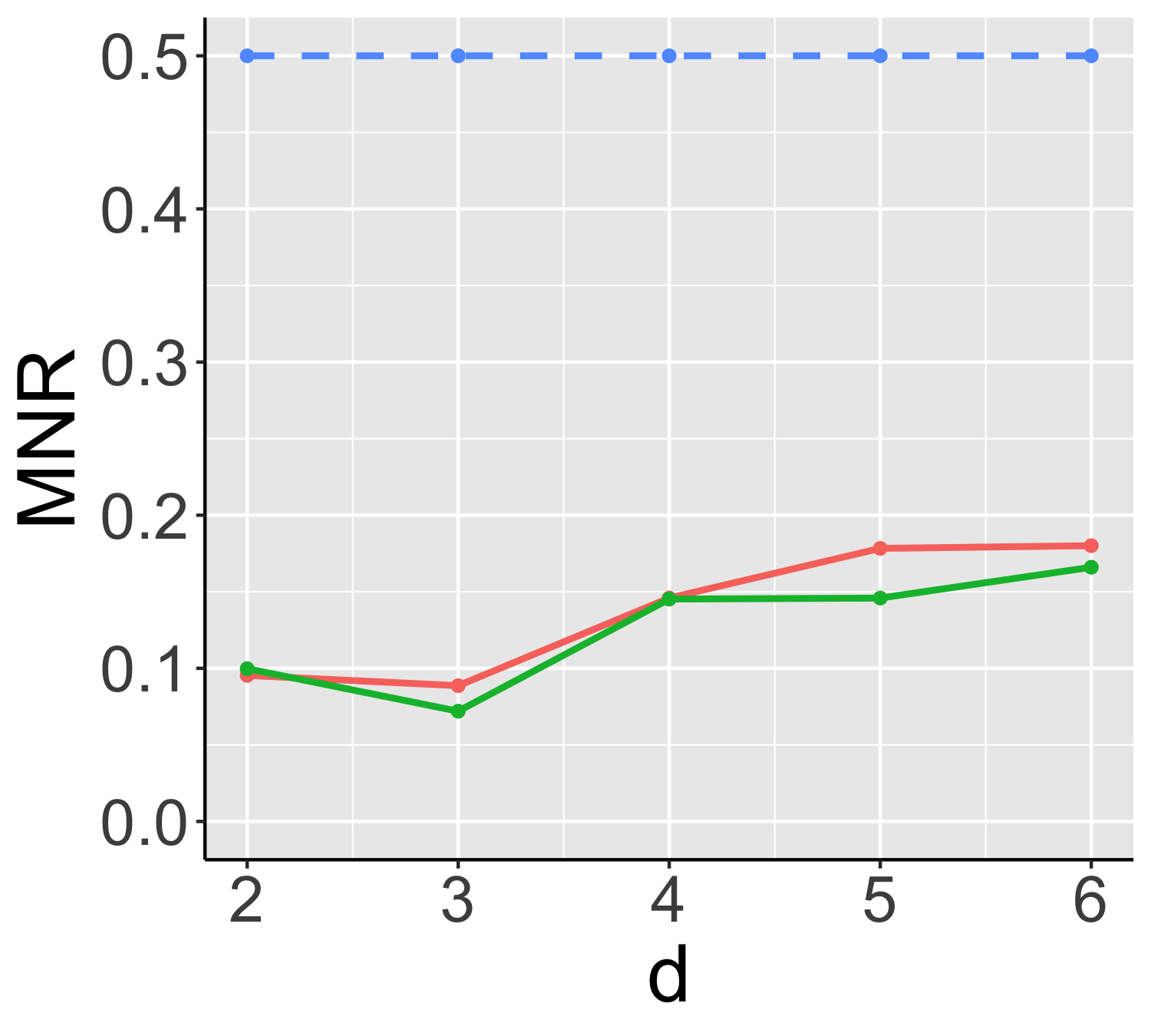}}
\caption{Performance of our algorithm for pairs of
  $\rho$-$\mathrm{SBM}$ graphs on $n = 300$ vertices. The mean
  reciprocal rank (MRR) and mean normalized rank (MNR) are computed
  based on $500$ Monte Carlo replicates. The MRR and MNR are plotted
  for different values of the embedding parameter $d$.}
\label{rho-SBM_d}
\end{figure}


\begin{figure}[ht!]
\centering
\subfigure[$d=2$]{\includegraphics[width=0.49\textwidth]{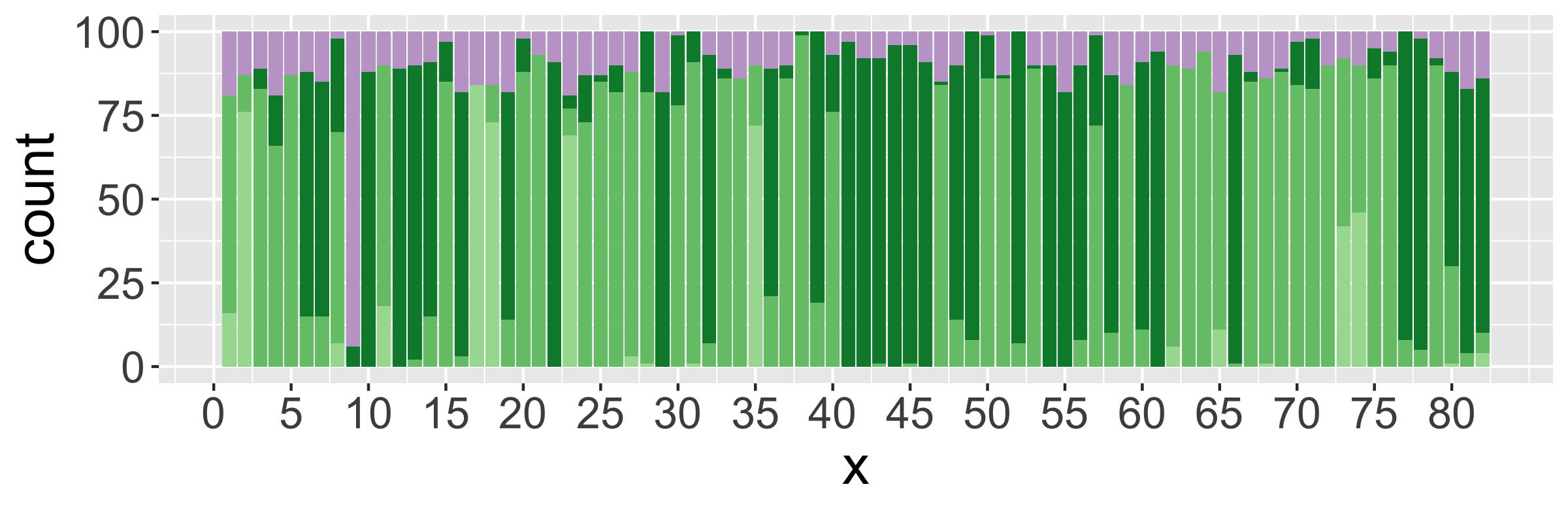}}
\subfigure[$d=3$]{\includegraphics[width=0.49\textwidth]{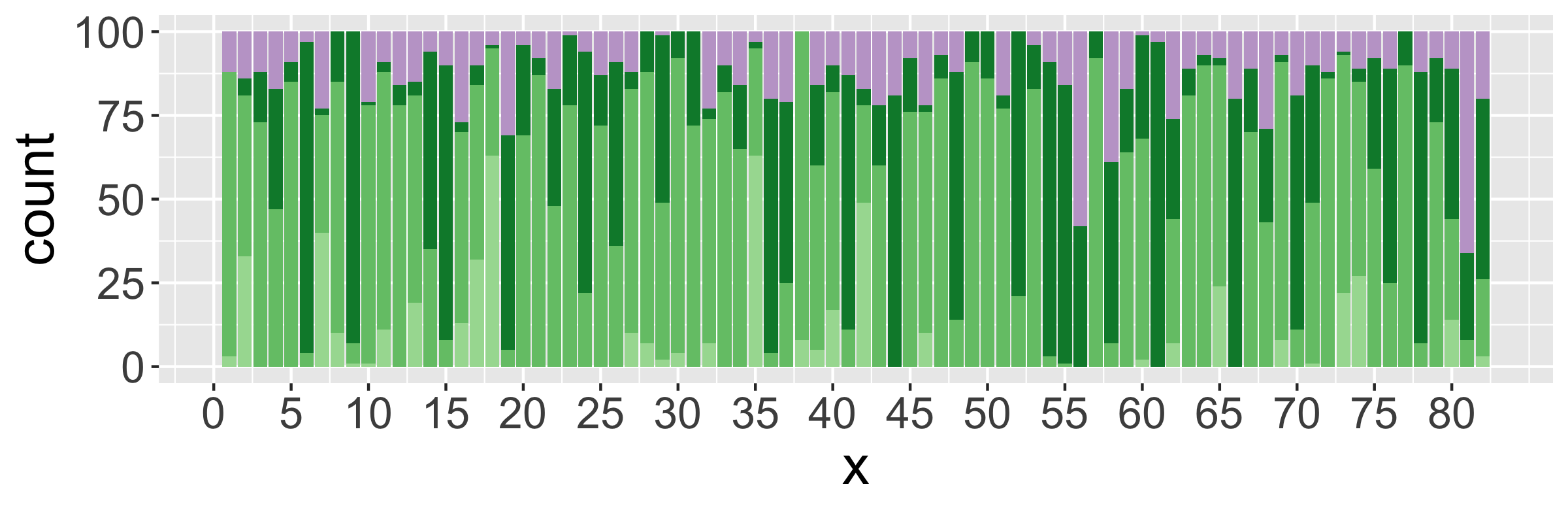}}

\subfigure[$d=4$]{\includegraphics[width=0.49\textwidth]{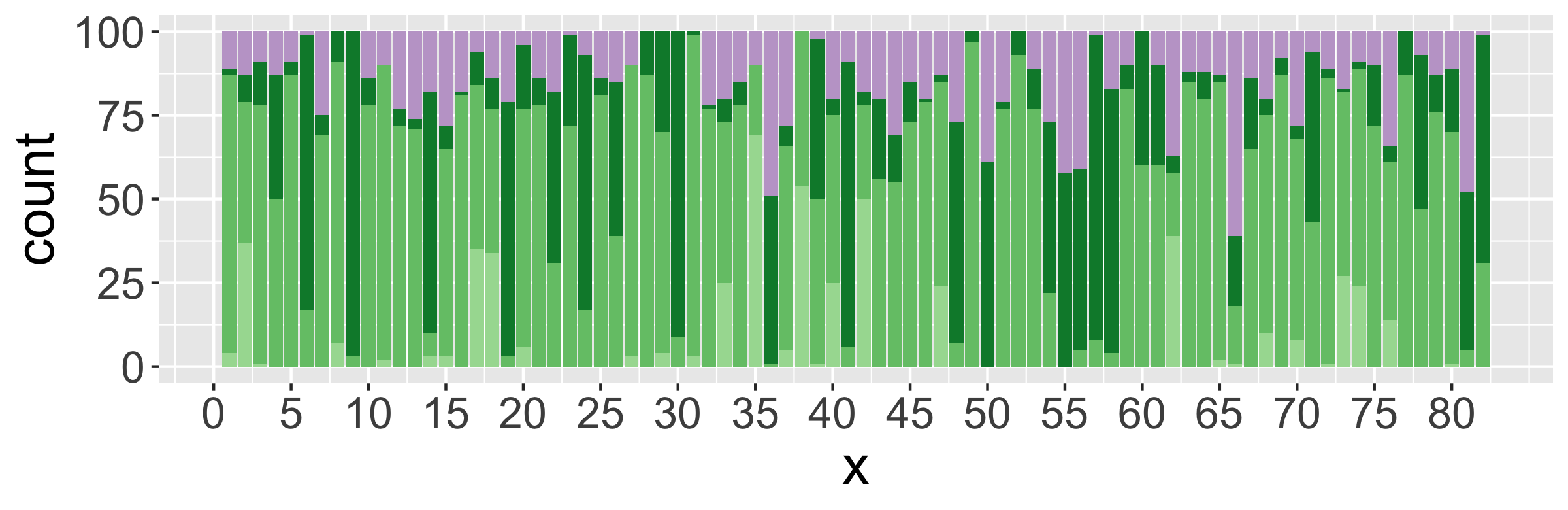}}
\subfigure[$d=5$]{\includegraphics[width=0.49\textwidth]{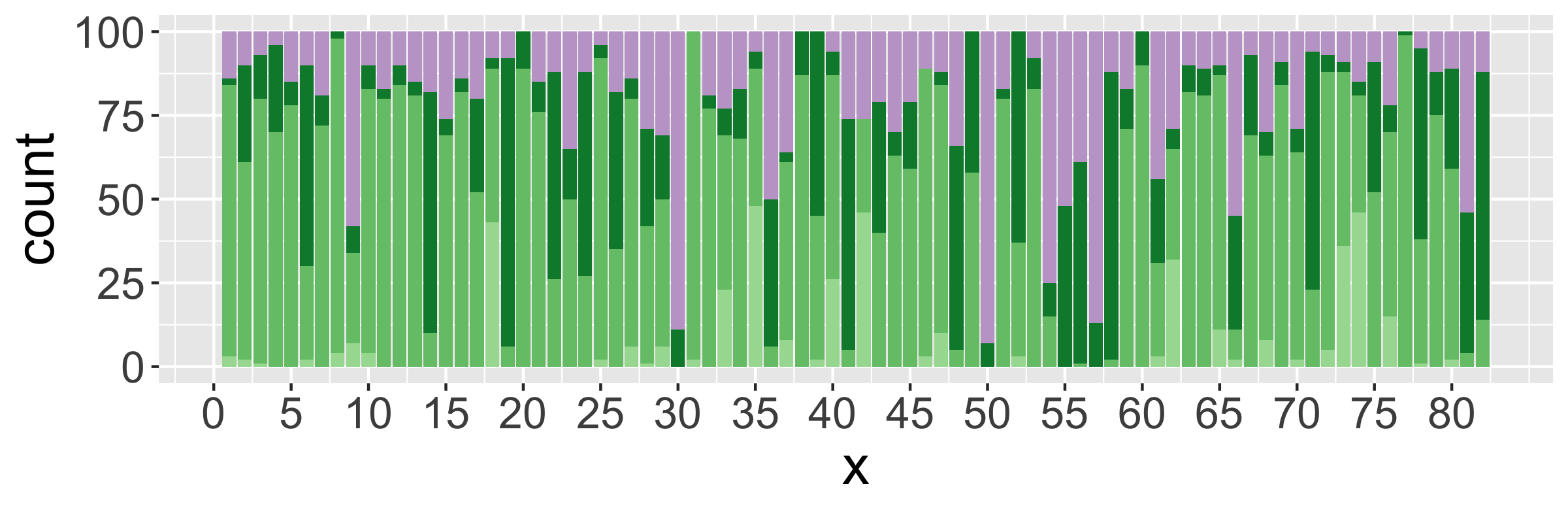}}
\caption{Performance of our algorithm with $d\in\{2,3,4,5\}$ for vertex nomination between the two high-school networks. Here we consider only the subgraphs induced by the $82$ shared vertices. The graphs embeddings are
aligned via orthogonal Procrustes transformation.}
\label{highschool_shared_Algo1_d}
\end{figure}


\begin{table}[h!]
\centering
\begin{tabular}{c|cccccccc}
 & 1\% & 5\%& 10\%&25\%& 50\%& 75\%&95\%& 99\% \\
 \hline
Procrustes ($d=2$) & 0.003 & 0.013 & 0.030 & 0.074 & 0.196 & 0.387 & 0.750 & 0.870   \\
Procrustes ($d=3$) & 0.002 & 0.009 & 0.021 & 0.071 & 0.183 & 0.356 & 0.792 & 0.953\\
Procrustes ($d=4$) & 0.002 & 0.010 & 0.021 & 0.063 & 0.173 & 0.361 & 0.771 & 0.966 \\
Procrustes ($d=5$) & 0.002 & 0.009 & 0.021 & 0.058 & 0.164 & 0.353 & 0.763 & 0.950 
\end{tabular}
\caption{Quantile levels of normalized rank (NR) values for vertex nomination with the Bing entity networks on $n = 1000$ vertices}
\label{tab:Bing1_d}
\end{table}

\clearpage

\subsection{Results with $\lambda$ changed}

\label{sec:lambda changed}

\begin{table}[h!]
\centering
\begin{tabular}{c|ccccc}
 & $0.01$ & $0.1$& $1$&$10$ &$100$\\
 \hline
Procrustes &  0.494 & 0.444 & 0.428 & 0.432  &0.447\\
set registration & 0.427 & 0.372 & 0.367 & 0.370& 0.345
\end{tabular}
\caption{The mean
  reciprocal rank (MRR) with different setting of the penalty parameter $\lambda\in\{0.01,0.1,1,10,100\}$ for pairs of $\rho$-$\mathrm{RDPG}$ graphs on $n = 300$ vertices}
\label{tab:RDPG_lambda}
\end{table}

\begin{table}[h!]
\centering
\begin{tabular}{c|ccccc}
 & $0.01$ & $0.1$& $1$&$10$ &$100$\\
 \hline
Procrustes &  0.206 & 0.182 & 0.172 & 0.154  &0.159\\
set registration & 0.209 & 0.208 & 0.184 & 0.205& 0.168
\end{tabular}
\caption{The mean
  reciprocal rank (MRR) with different setting of the penalty parameter $\lambda\in\{0.01,0.1,1,10,100\}$ for pairs of $\rho$-$\mathrm{SBM}$ graphs on $n = 300$ vertices}
\label{tab:SBM_lambda}
\end{table}

\begin{figure}[ht!]
\centering
\subfigure[$\lambda=0.1$]{\includegraphics[width=0.49\textwidth]{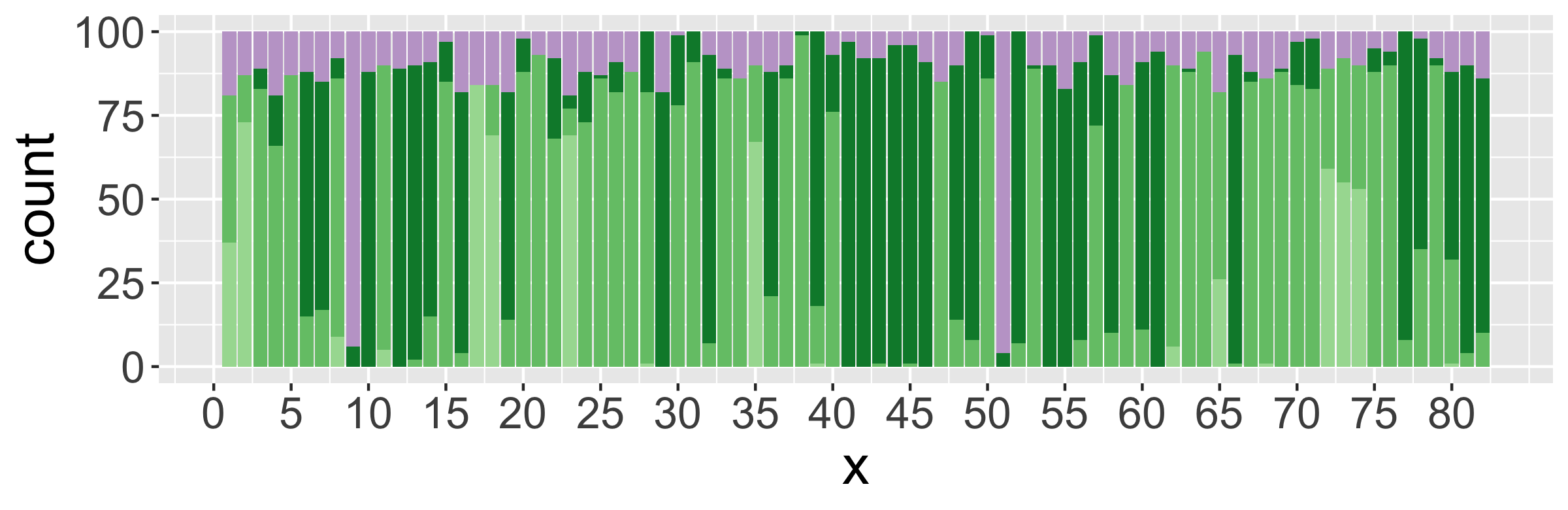}}
\subfigure[$\lambda=1$]{\includegraphics[width=0.49\textwidth]{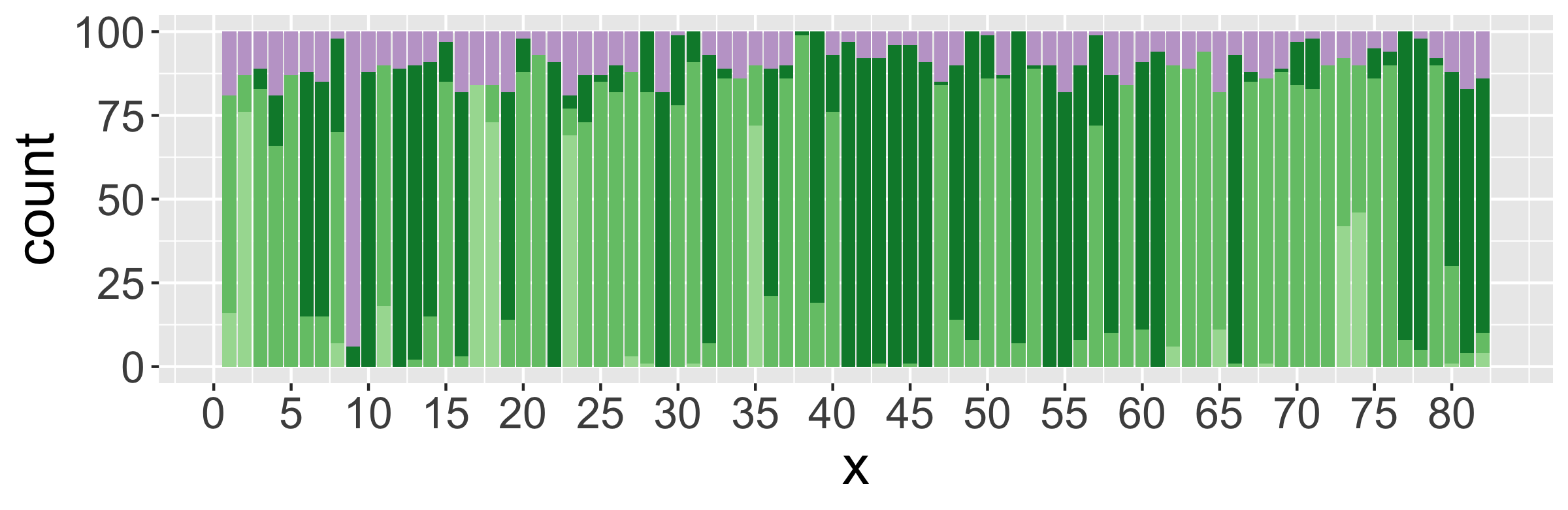}}

\subfigure[$\lambda=10$]{\includegraphics[width=0.49\textwidth]{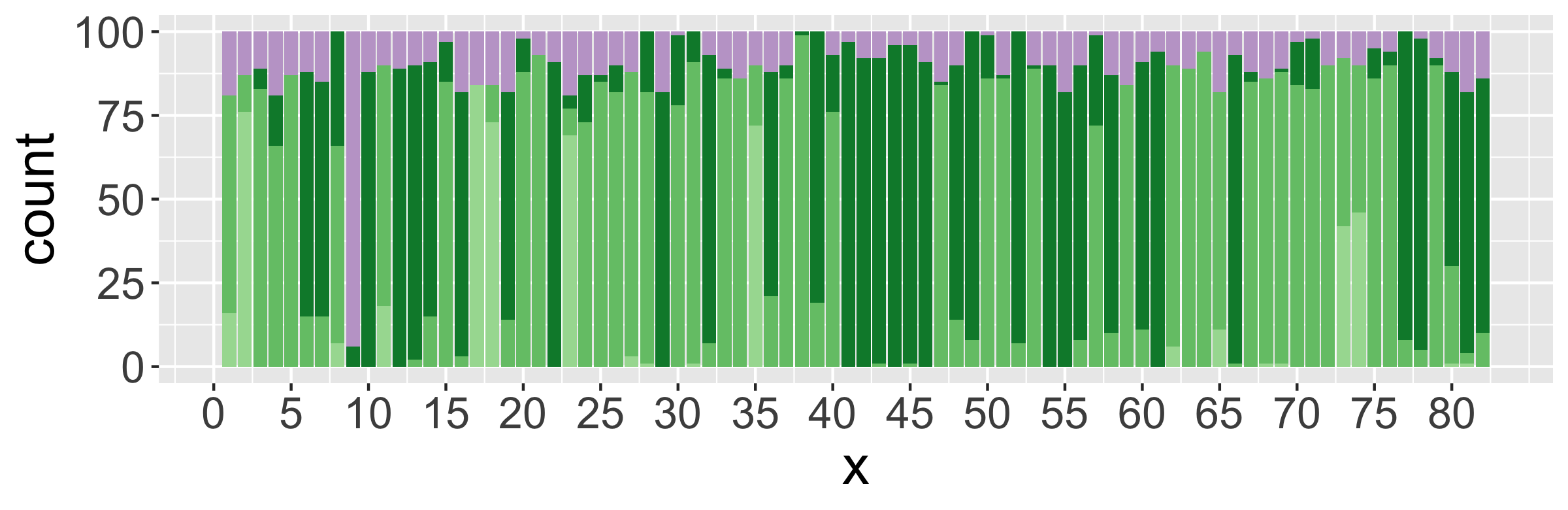}}
\subfigure[$\lambda=100$]{\includegraphics[width=0.49\textwidth]{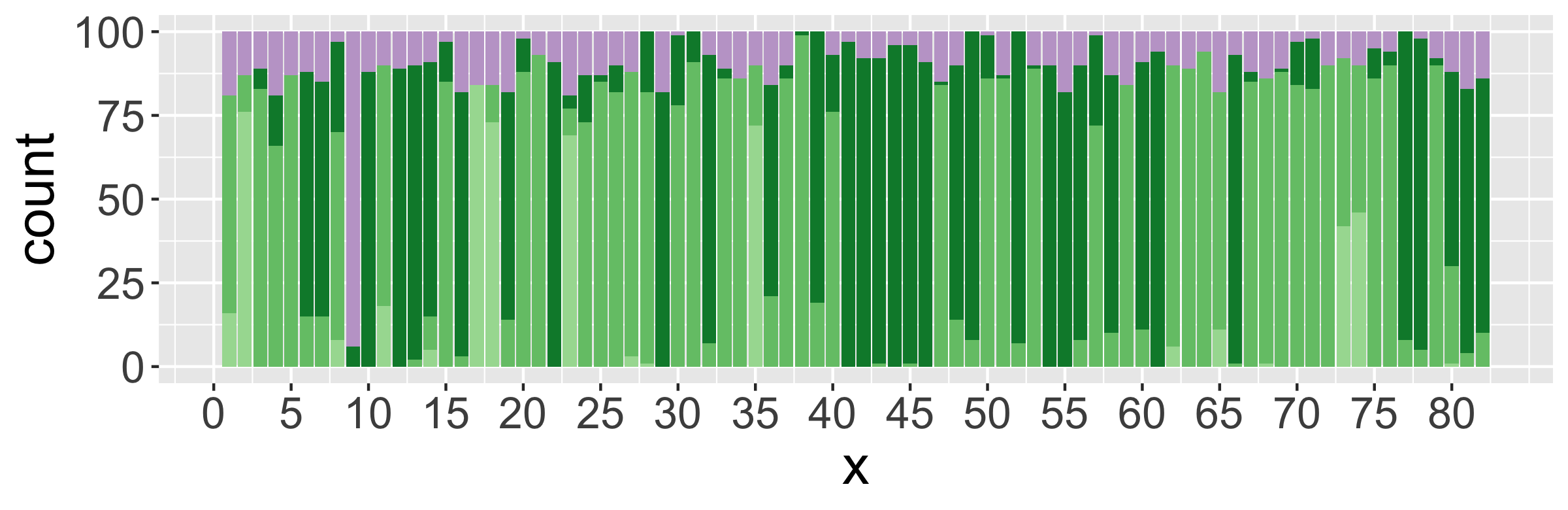}}
\caption{Performance of our algorithm with the penalty parameter $\lambda\in\{0.1,1,10,100\}$ for vertex nomination between the two high-school networks. Here we consider only the subgraphs induced by the $82$ shared vertices. The graphs embeddings are aligned via orthogonal Procrustes transformation.}
\label{highschool_shared_Algo1_lambda}
\end{figure}

\begin{table}[h!]
\centering
\begin{tabular}{c|cccccccc}
 & 1\% & 5\%& 10\%&25\%& 50\%& 75\%&95\%& 99\% \\
 \hline
Procrustes ($\lambda=0.01$) & 0.003 &0.012 &0.023 &0.063 &0.180 &0.378 &0.829 & 0.904   \\
Procrustes ($\lambda=0.1$) & 0.003 &0.013 &0.030 &0.074 &0.196 &0.387 &0.750 &0.870\\
Procrustes ($\lambda=1$) & 0.003 &0.013 &0.030 &0.074 &0.196 &0.387 &0.750 &0.870 \\
Procrustes ($\lambda=10$) & 0.003 &0.013 &0.030 &0.074 &0.196 &0.387 &0.750 &0.870 \\
Procrustes ($\lambda=100$) & 0.003 &0.013 &0.030 &0.074 &0.196 &0.387 &0.750 &0.870
\end{tabular}
\caption{Quantile levels of normalized rank (NR) values for vertex nomination with the Bing entity networks on $n = 1000$ vertices}
\label{tab:Bing1_lambda}
\end{table}

\subsection{Results with $\gamma$ changed}

\label{sec:gamma changed}

\begin{figure}[h!]
\centering
\subfigure{\includegraphics[width=0.35\textwidth]{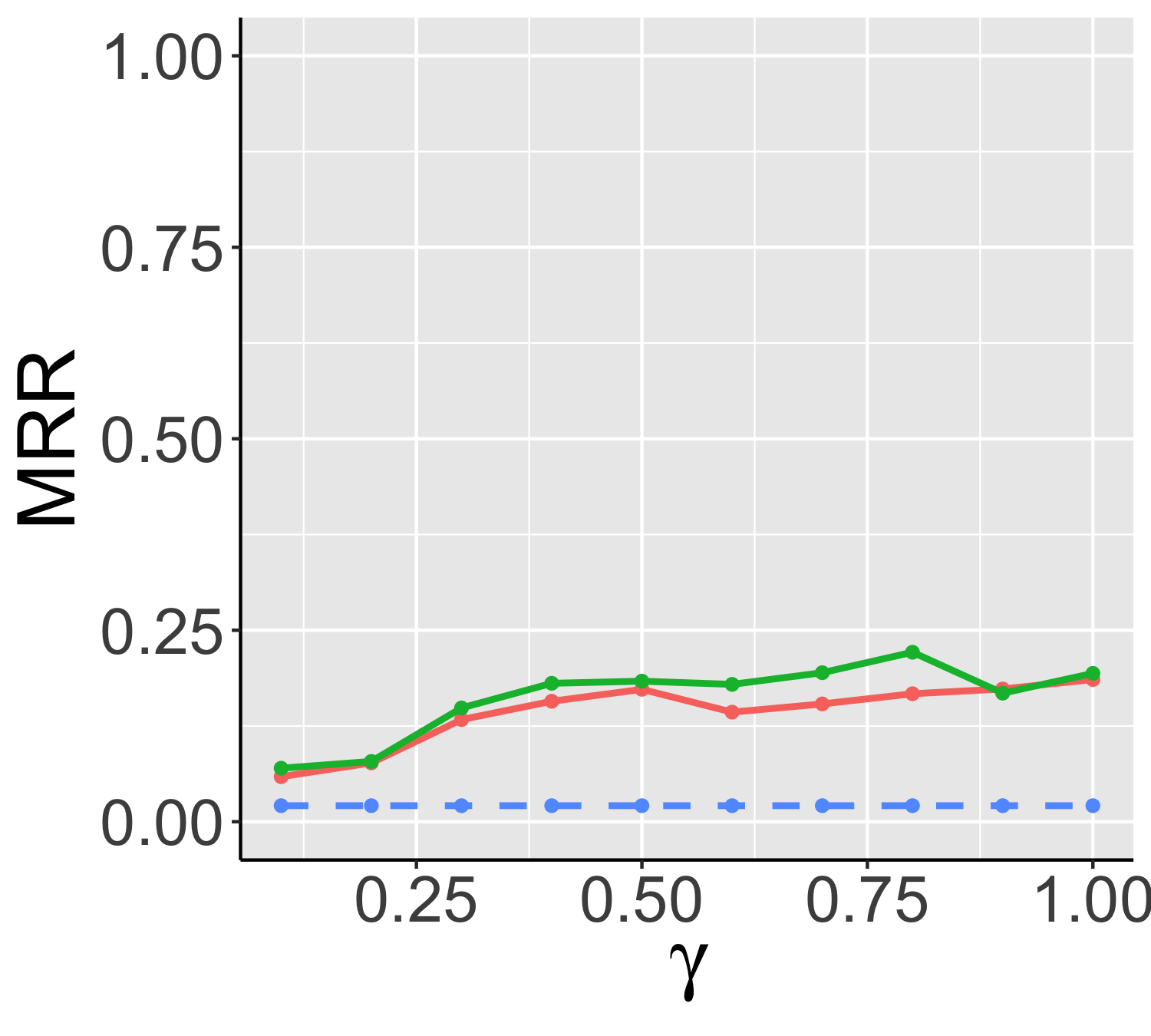}}
\subfigure{\includegraphics[width=0.35\textwidth]{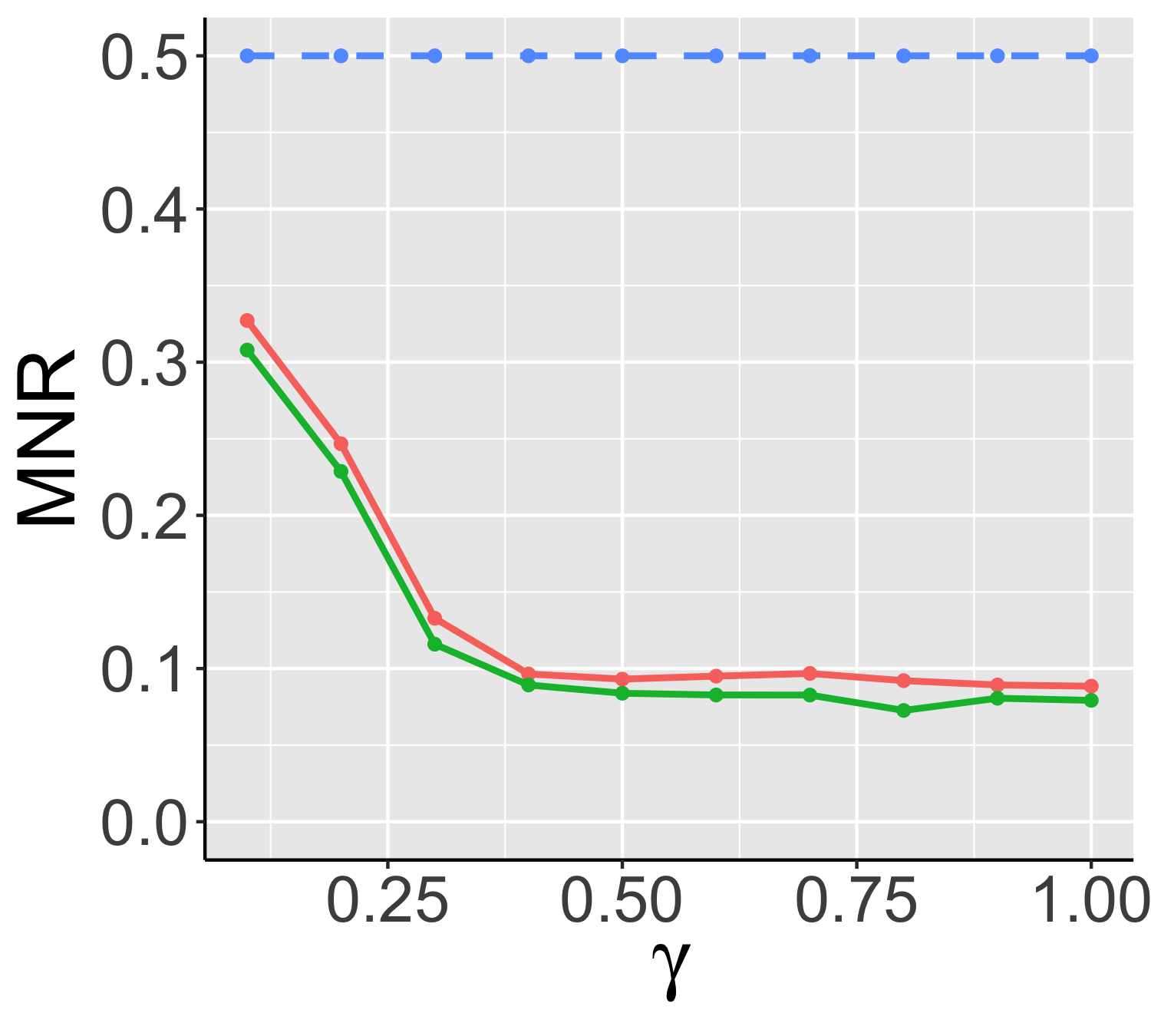}}
\caption{Performance of our algorithm for pairs of
  $\rho$-$\mathrm{SBM}$ graphs on $n = 300$ vertices. The mean
  reciprocal rank (MRR) and mean normalized rank (MNR) are computed
  based on $500$ Monte Carlo replicates. The MRR and MNR are plotted
  for different values of the sparsity parameter $\gamma \in \{0.1,0.2,0.3,\dots,1\}.$ We note that different values of $\gamma \geq 0.3$ yield almost identical accuracy.}
\label{rho-SBM_gamma}
\end{figure}

\clearpage

\subsection{Results of reranking step}

\label{sec:real data reranking}

\begin{table}[h!]
\centering
\begin{tabular}{c|cccc}
 & 1\% & 5\%& 10\%&25\% \\
 \hline
Procrustes ($100$ seeds) &  0.002 & 0.011 & 0.026 & 0.073  \\
Procrustes with reranking ($100$ seeds) &  0.001 & 0.002 & 0.004 & 0.027\\
set registration ($100$ seeds)& 0.002 & 0.012 & 0.027 & 0.073\\
set registration with reranking ($100$ seeds) & 0.001 & 0.002 & 0.004 & 0.027
\end{tabular}
\caption{Quantile levels of normalized rank (NR) values for vertex nomination with and without the reranking step with the Bing entity networks on $n = 1000$ vertices}
\label{tab:Bing1_reranking}
\end{table}

\begin{figure}[ht!]
\centering
\subfigure[without the reranking step]{\includegraphics[width=0.49\textwidth]{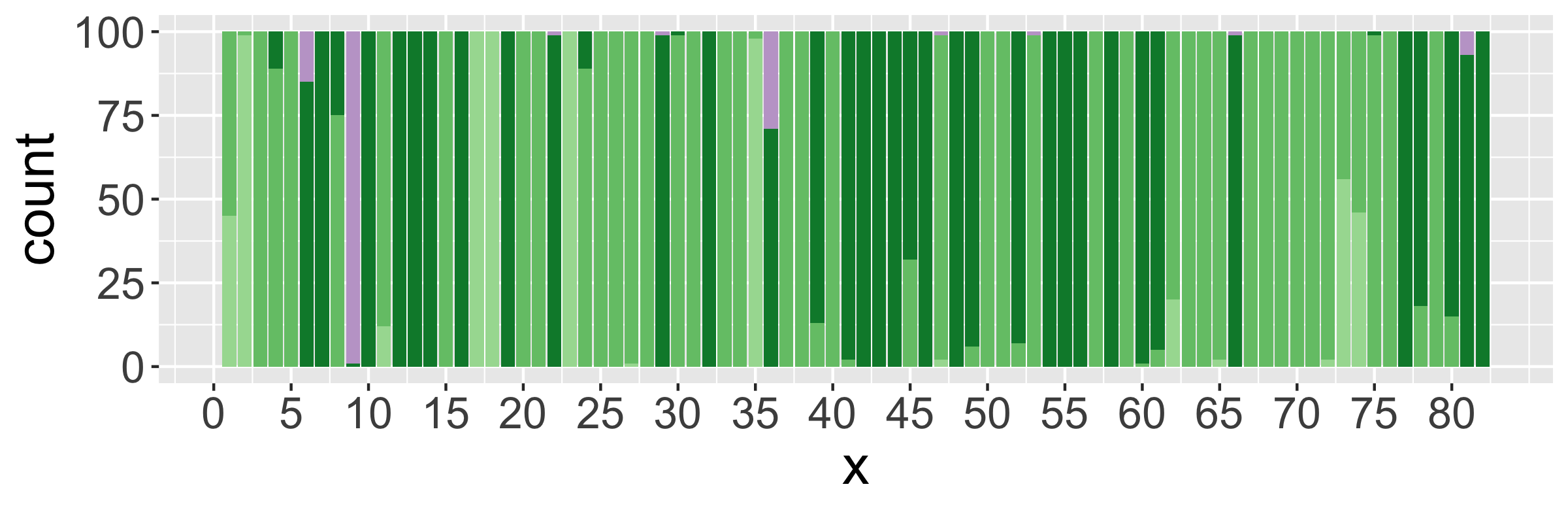}}
\subfigure[with the reranking step]{\includegraphics[width=0.49\textwidth]{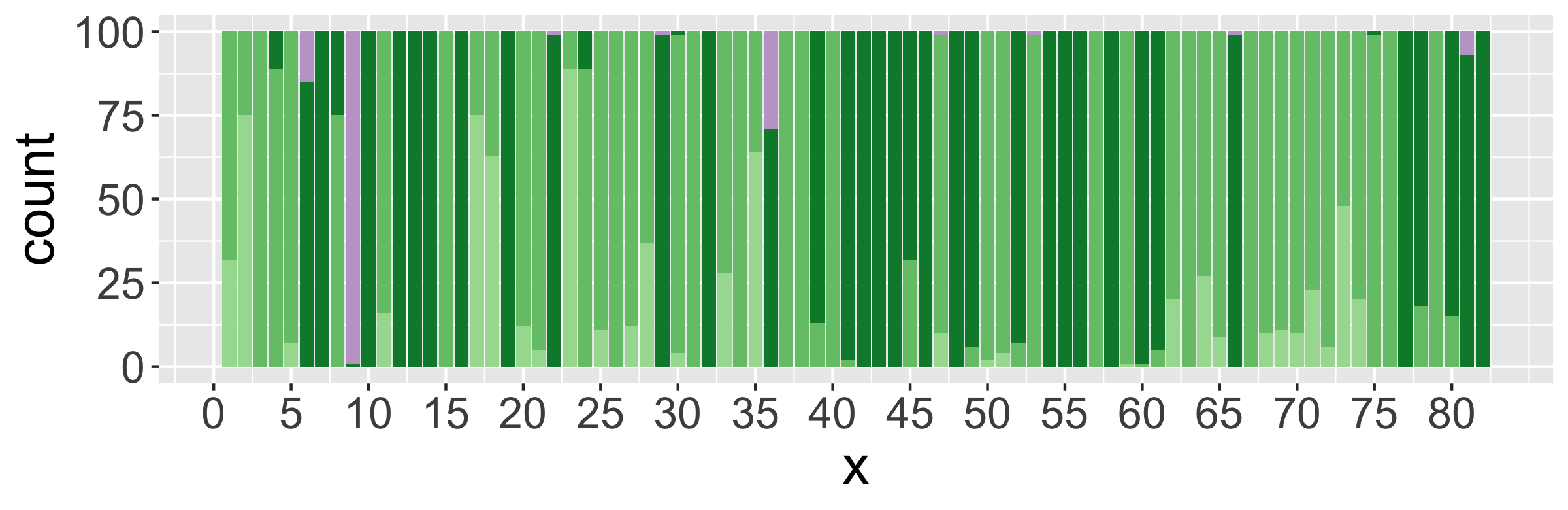}}
\caption{ Performance of our algorithm with and without the reranking
step for vertex nomination between the two high-school networks. Here
we consider only the subgraphs induced by the $82$ shared
vertices. The graphs embeddings are aligned via orthogonal Procrustes
transformation using $10$ randomly selected seeds. If a
reranking step was done then it was also done using these same seed
vertices. The reranking step leads to improved accuracy. In
particular, if there was no reranking step then
the number of vertices of interest in the first graph for which the
corresponding vertex in the second graph has a certain probability of being at the top of the nomination list is $14$. This number increases to $29$ with the use of the reranking step.  
}
\label{highschool_shared_Algo1_rerank}
\end{figure}

\newpage
\subsection{Comparison with other methods}

\label{sec:compare}

\begin{figure}[h!]
\centering
\subfigure{\includegraphics[width=0.35\textwidth]{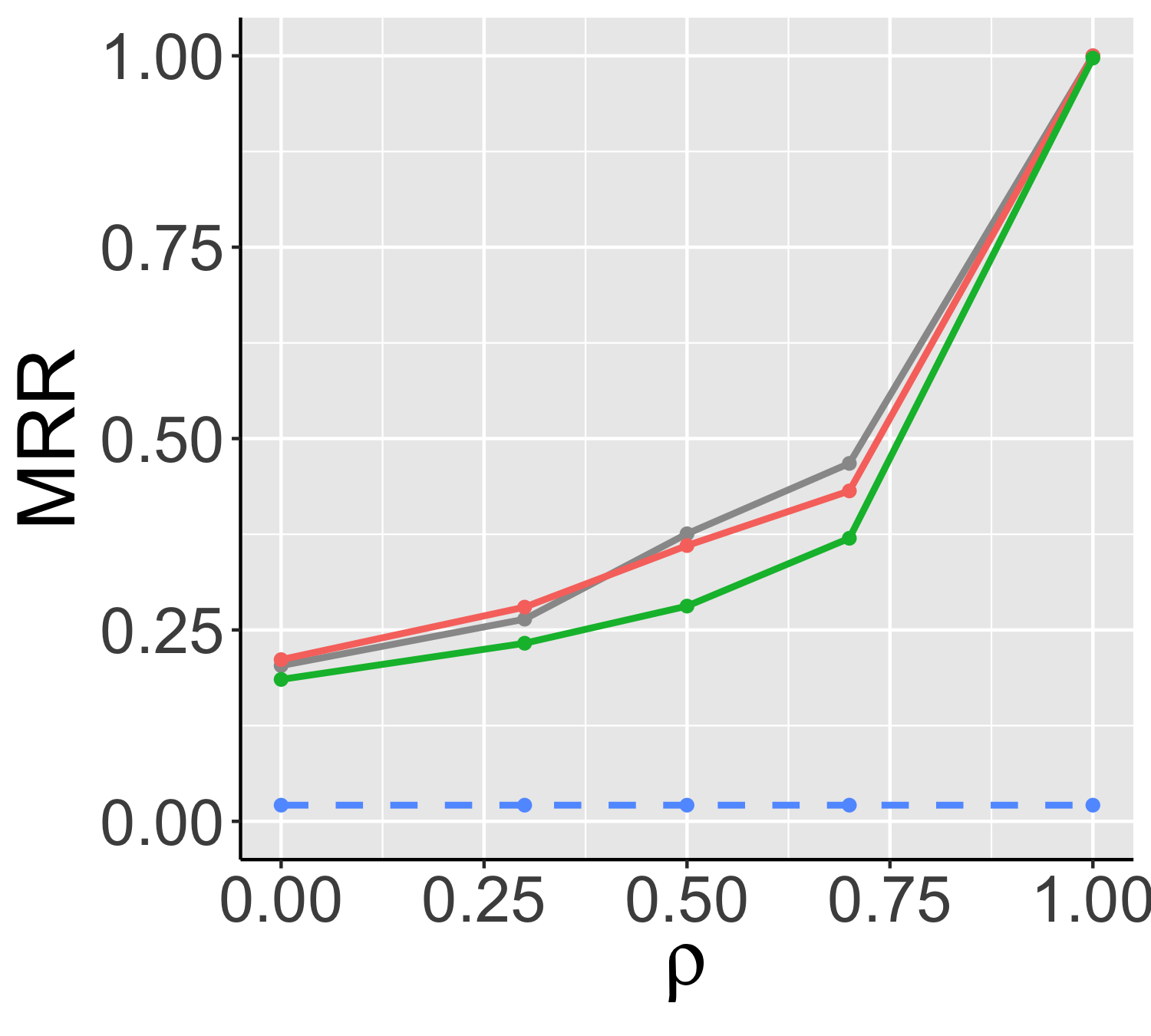}}
\subfigure{\includegraphics[width=0.35\textwidth]{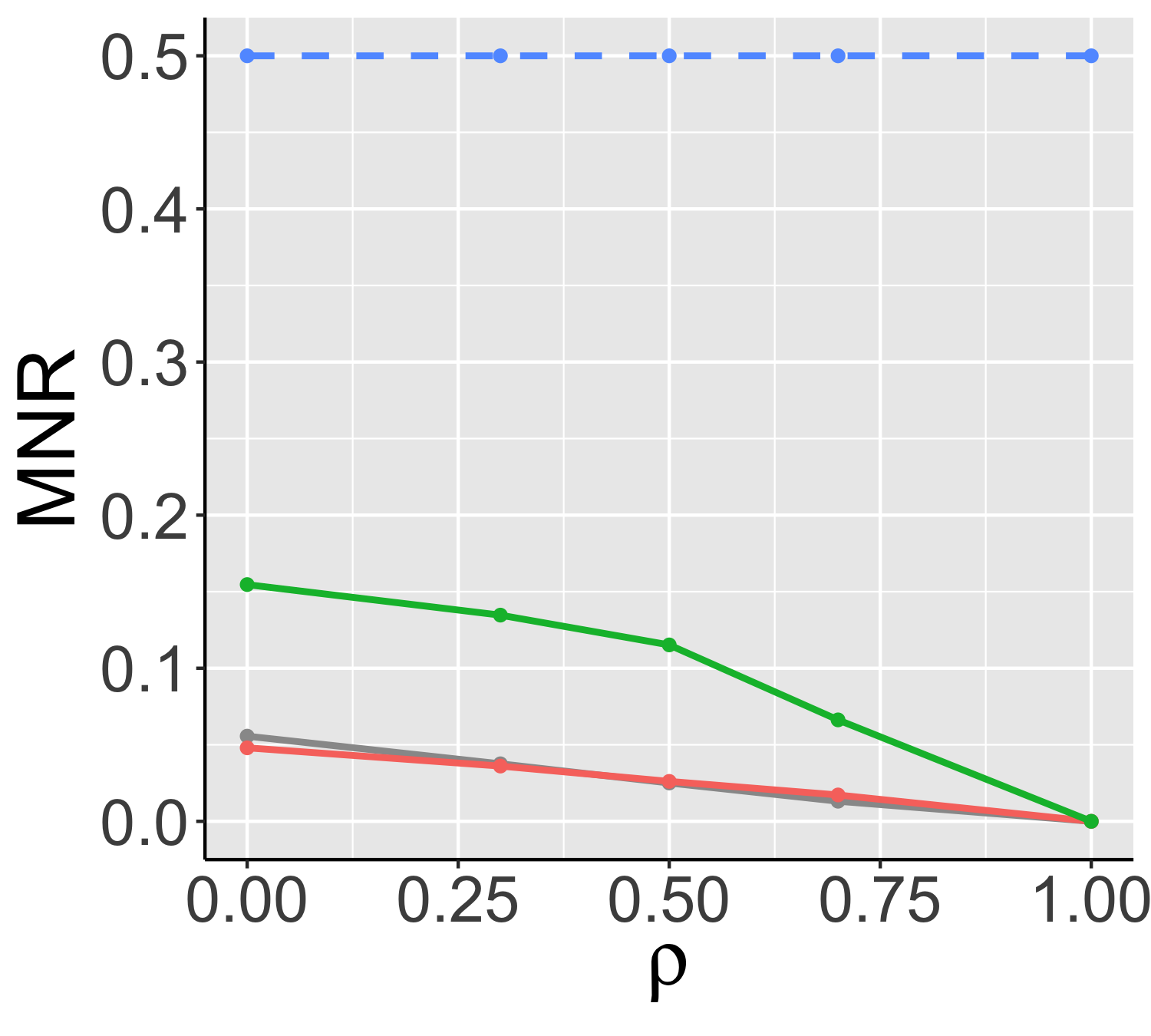}}
\caption{Performance of our algorithm for pairs of
  $\rho$-$\mathrm{RDPG}$ graphs on $n = 300$ vertices. The mean
  reciprocal rank (MRR) and mean normalized rank (MNR) are computed
  based on $500$ Monte Carlo replicates. The MRR and MNR are plotted
  for different values of the sparsity parameter $\rho$.
  The grey line corresponds to the method in \cite{agterberg2020vertex}.}
\label{rho-RDPG_compare}
\end{figure}

\begin{figure}[h!]
\centering
\subfigure{\includegraphics[width=0.35\textwidth]{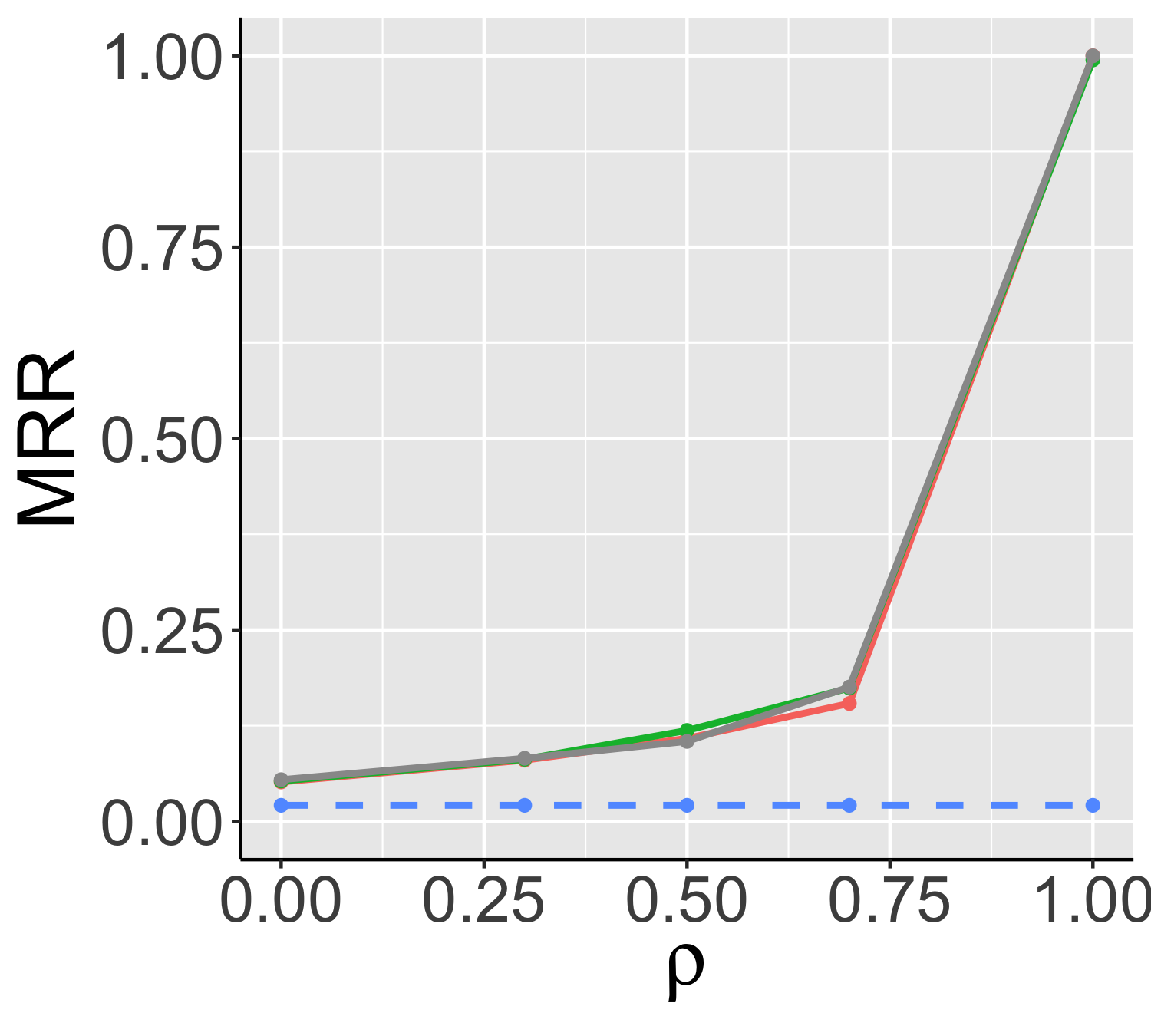}}
\subfigure{\includegraphics[width=0.35\textwidth]{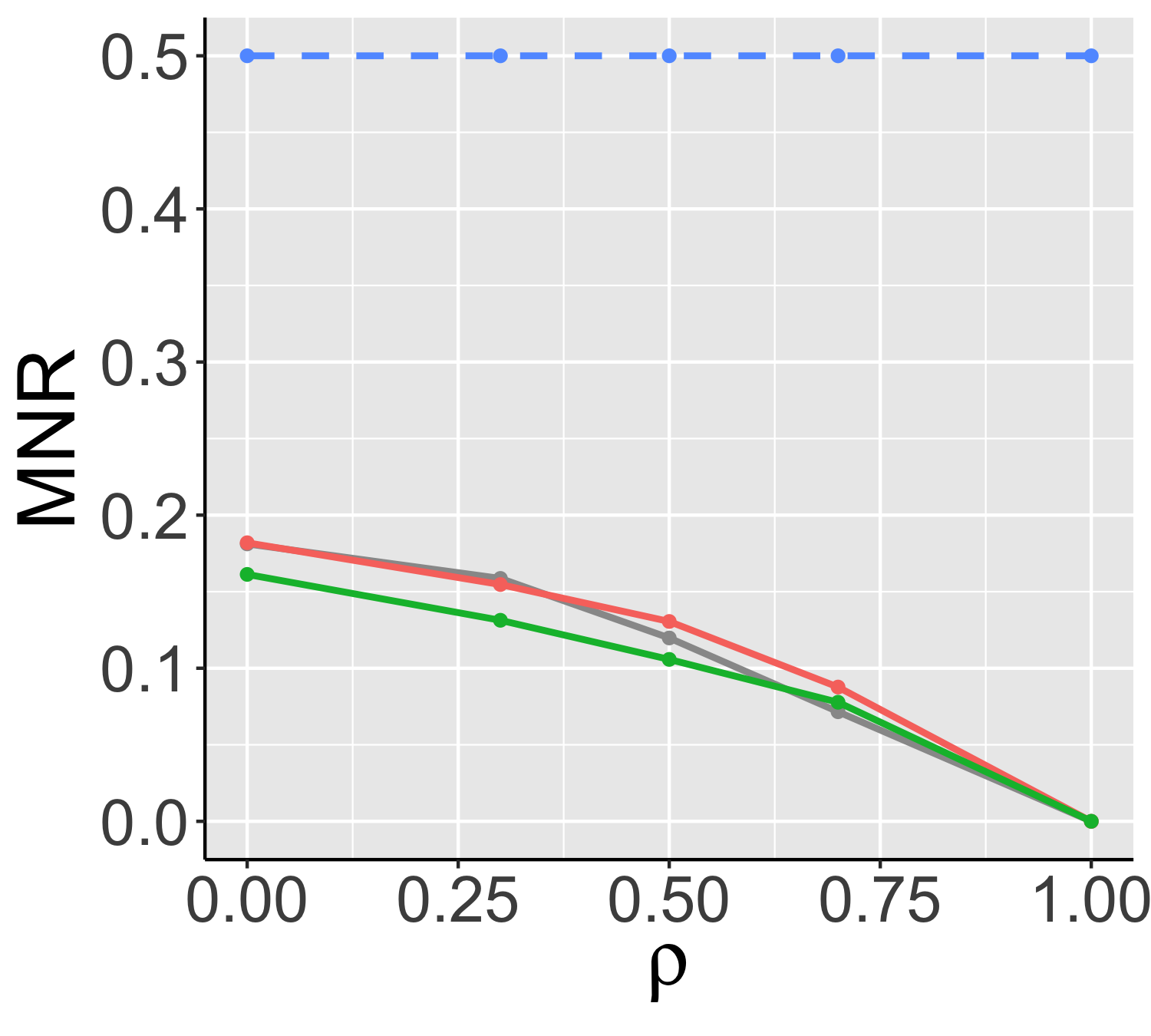}}
\caption{Performance of our algorithm for pairs of
  $\rho$-$\mathrm{SBM}$ graphs on $n = 300$ vertices. The mean
  reciprocal rank (MRR) and mean normalized rank (MNR) are computed
  based on $500$ Monte Carlo replicates. The MRR and MNR are plotted
  for different values of the sparsity parameter $\rho$.
  The grey line corresponds to the method in \cite{agterberg2020vertex}.}
\label{rho-SBM_compare}
\end{figure}

\begin{figure}[h!]
\centering
\subfigure[our method with orthogonal Procrustes]{\includegraphics[width=0.49\textwidth]{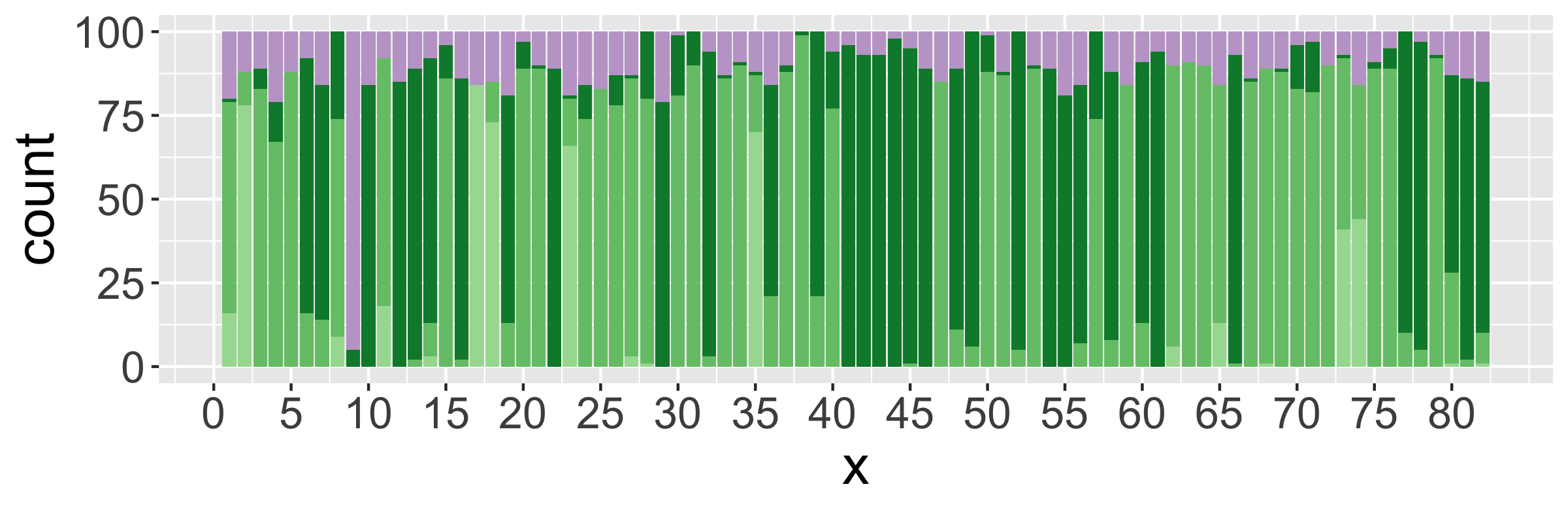}}
\subfigure[our method with point set registration]{\includegraphics[width=0.49\textwidth]{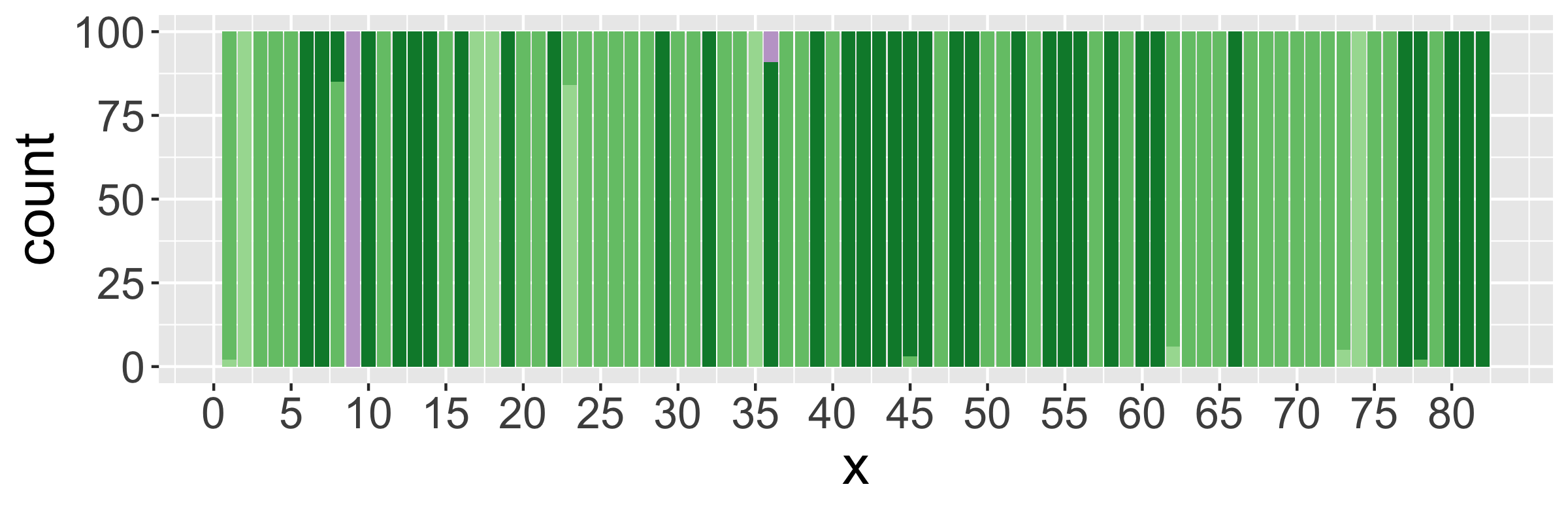}}

\subfigure[method in \cite{agterberg2020vertex}]{\includegraphics[width=0.49\textwidth]{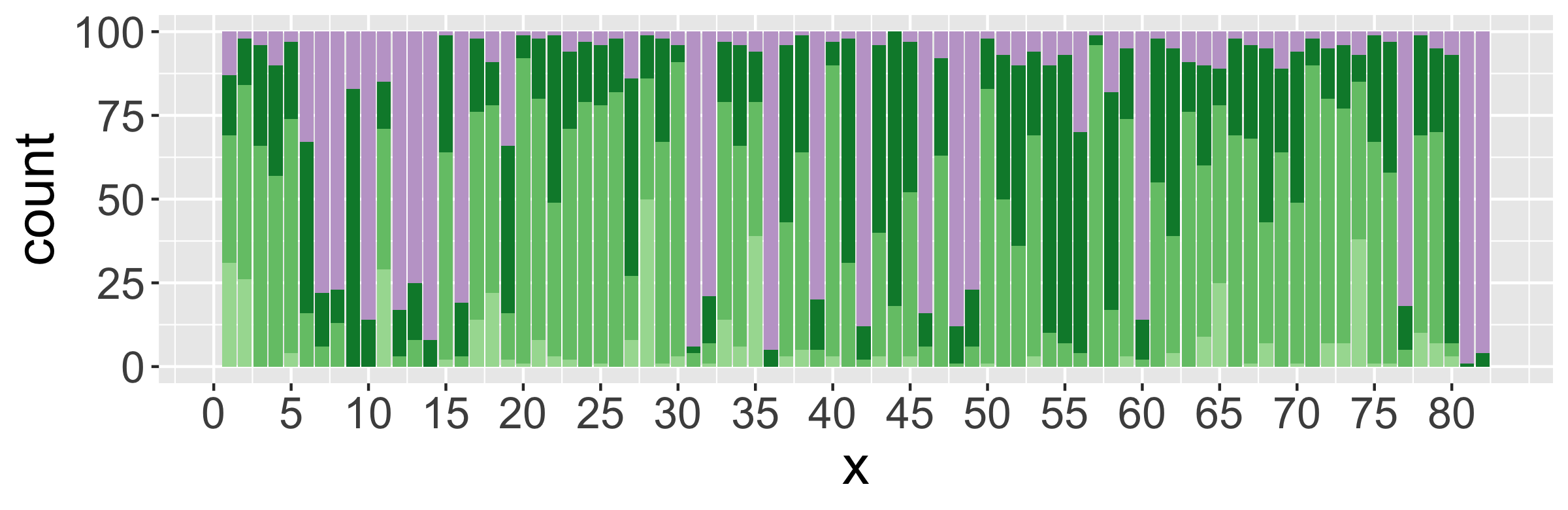}}
\caption{Performance of our algorithm and the method in \cite{agterberg2020vertex} for vertex nomination between the two high-school networks. Here we consider only the subgraphs induced by the $82$ shared vertices. For our methods, $2$ random seeds are also used for the quadratic programming step.}

\label{highschool_shared_Algo1_count_compare}
\end{figure}

\begin{table}[h!]
\centering
\begin{tabular}{c|cccccccc}
 & 1\% & 5\%& 10\%&25\%& 50\%& 75\%&95\%& 99\% \\
 \hline
Procrustes (10 seeds) &  0.003 & 0.013 & 0.030 & 0.074 & 0.196 & 0.387 & 0.750 & 0.870   \\
set registration (no seeds) & 0.002 & 0.013 & 0.025 & 0.073 & 0.196 & 0.386 & 0.757 & 0.876\\
\cite{agterberg2020vertex} (10 seeds) &  0.003 & 0.009 &0.024 &0.076 &0.214 &0.458 &0.744& 0.843\\  
\end{tabular}
\caption{Quantile levels of normalized rank (NR) values for vertex nomination
  with the Bing entity networks on $n = 1000$ vertices}
\label{tab:Bing_compare}
\end{table}

\clearpage
\subsection{Results of larger $n$}

\label{sec:n=1000}

\begin{figure}[h!]
\centering
\subfigure{\includegraphics[width=0.35\textwidth]{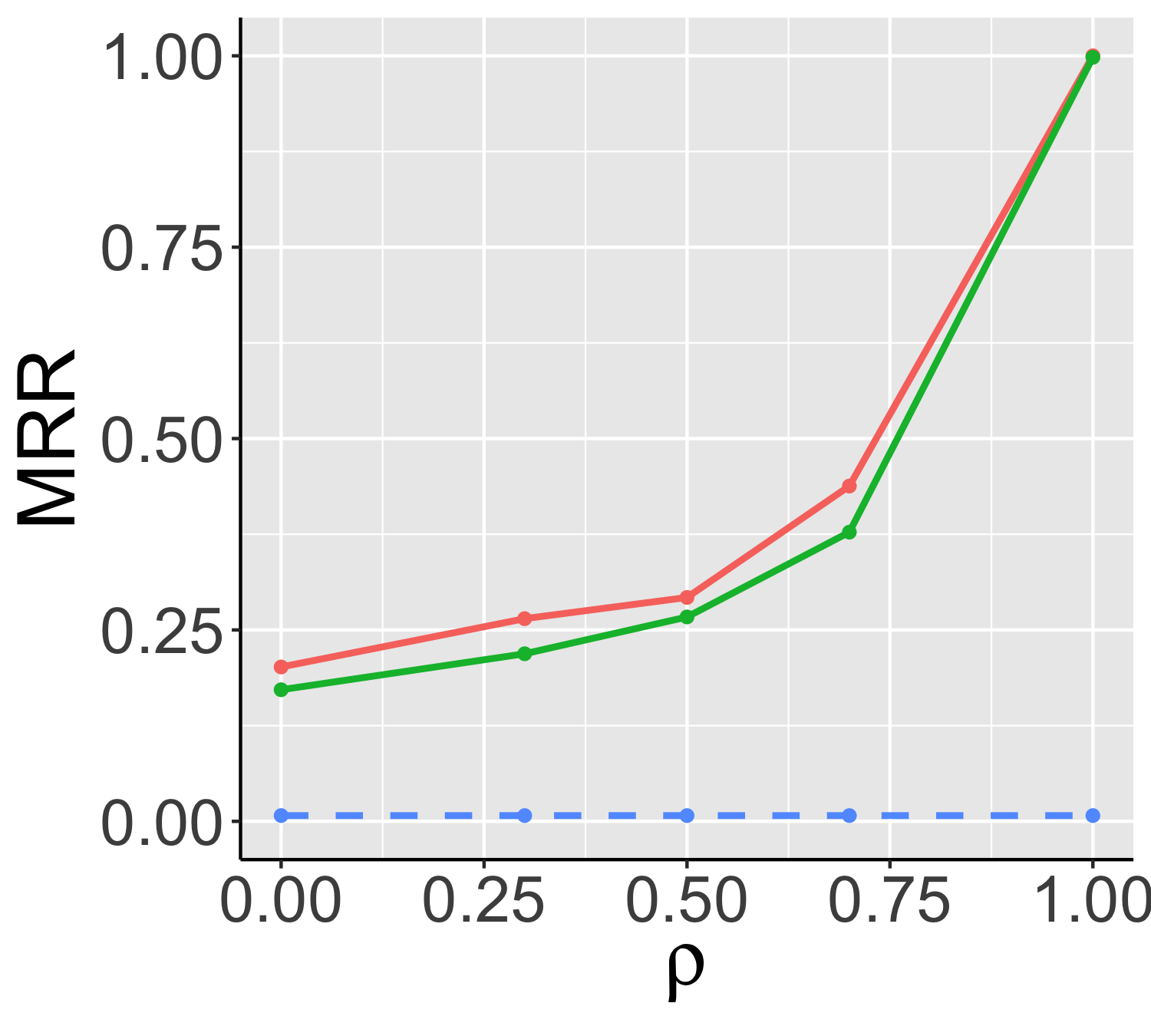}}
\subfigure{\includegraphics[width=0.35\textwidth]{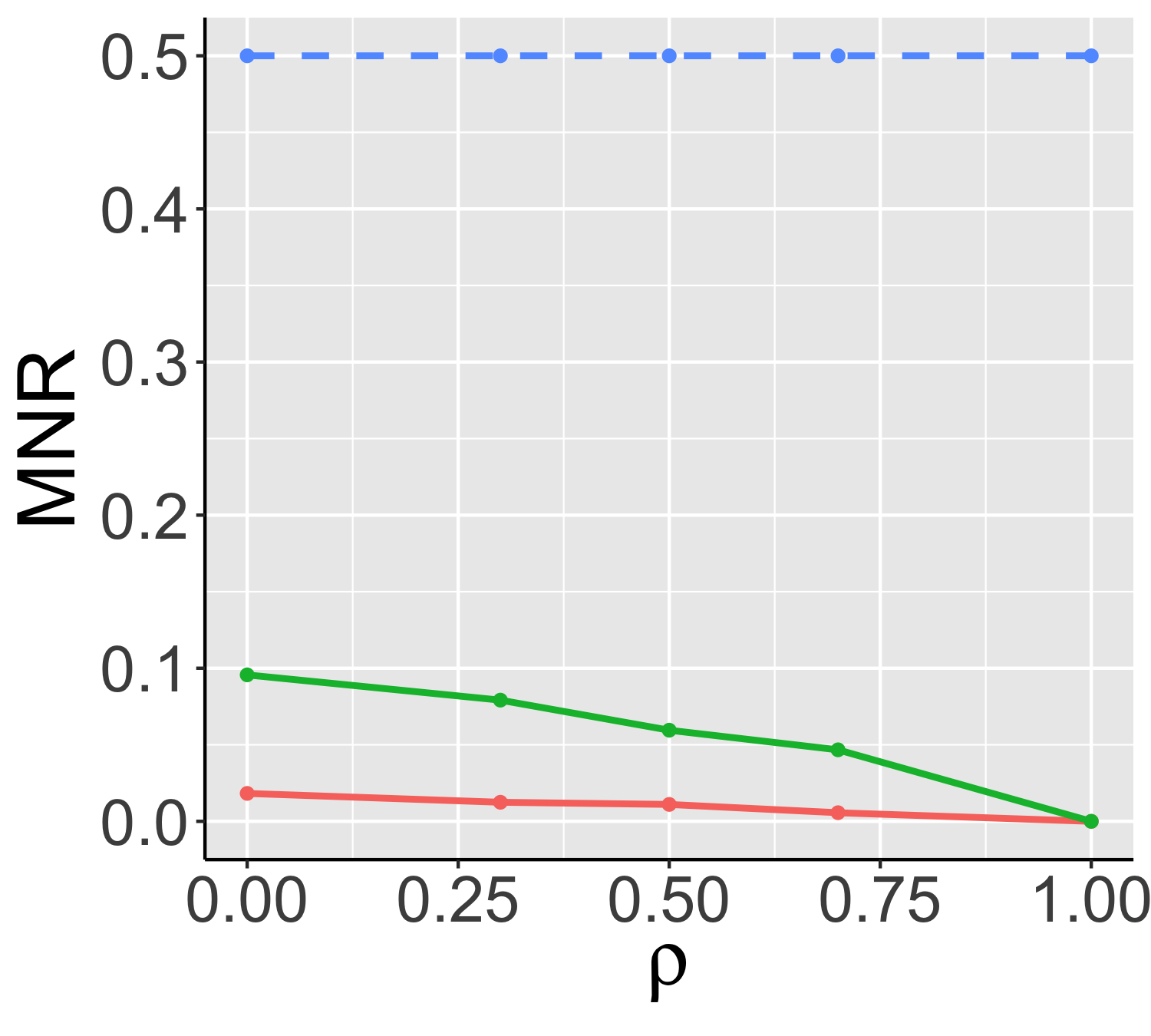}}
\caption{Performance of our algorithm for pairs of
  $\rho$-$\mathrm{RDPG}$ graphs on $n = 1000$ vertices. The mean
  reciprocal rank (MRR) and mean normalized rank (MNR) are computed
  based on $500$ Monte Carlo replicates. The MRR and MNR are plotted
  for different values of the correlation parameter $\rho$.}
\label{rho-RDPG_n=1000}
\end{figure}

\begin{figure}[h!]
\centering
\subfigure{\includegraphics[width=0.35\textwidth]{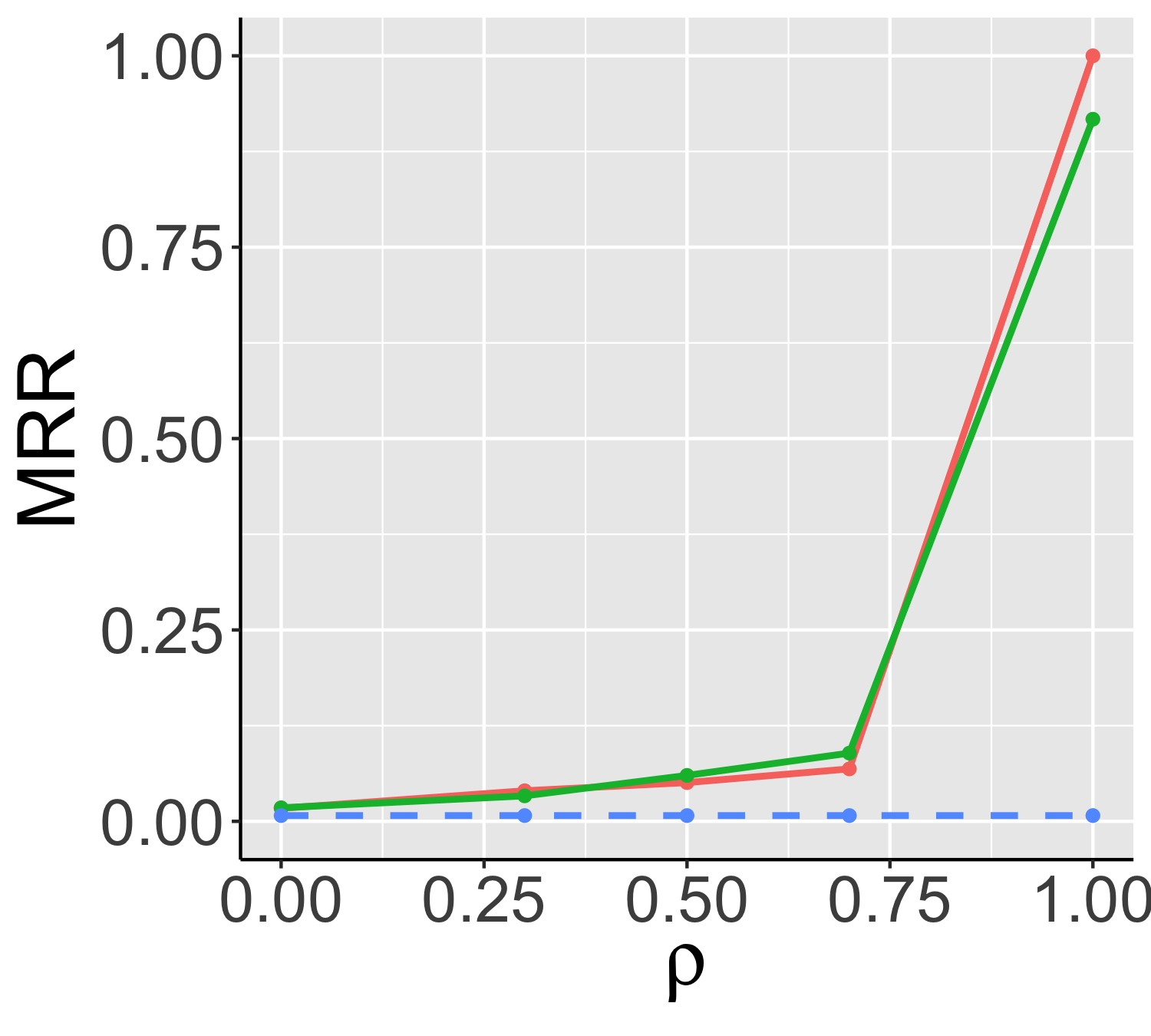}}
\subfigure{\includegraphics[width=0.35\textwidth]{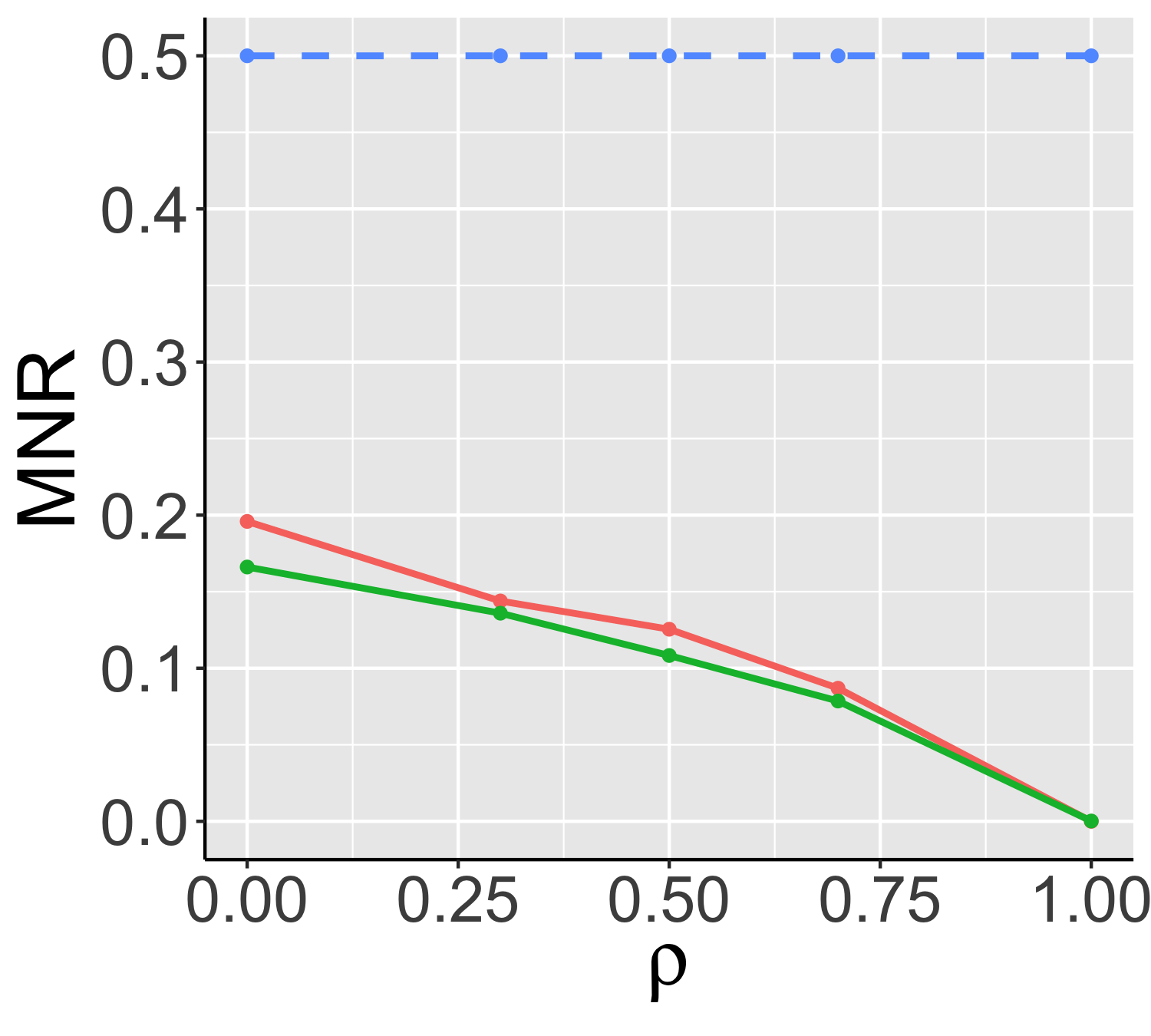}}
\caption{Performance of our algorithm for pairs of
  $\rho$-$\mathrm{SBM}$ graphs on $n = 1000$ vertices. The mean
  reciprocal rank (MRR) and mean normalized rank (MNR) are computed
  based on $500$ Monte Carlo replicates. The MRR and MNR are plotted
  for different values of the correlation parameter $\rho$.}
\label{rho-SBM_n=1000}
\end{figure}

\end{document}